\documentclass[a4paper]{article}

\usepackage[margin=1in]{geometry}
\usepackage[T1]{fontenc}
\usepackage[utf8]{inputenc}
\usepackage{lmodern}

\usepackage{microtype}

\usepackage{amsmath,amssymb,amsthm}

\usepackage{tikz}
\usetikzlibrary{arrows,automata,positioning}

\tikzset{
  position/.style args={#1:#2 from #3}{
    at=(#3.#1),
    anchor=#1+180,
    shift=(#1:#2)
  }
}

\usepackage{enumitem}

\usepackage{lineno}

\usepackage{xcolor}
\usepackage{hyperref}

\hypersetup{
  colorlinks=true,
  linkcolor=blue,
  citecolor=blue,
  urlcolor=blue
}

\theoremstyle{definition}
\newtheorem{definition}{Definition}
\newtheorem{example}{Example}

\theoremstyle{plain}

\newtheorem{theorem}{Theorem}
\newtheorem{fact}{Fact}

\newtheorem{proposition}{Proposition}

\usepackage[numbers]{natbib}

\usepackage{longtable}

\newcommand{\Flg}{\mathcal{G}}
\newcommand{\Fln}{\mathcal{N}}

\newcommand{\FGCGF}{\mathtt{GCGF}}

\newcommand{\FXX}{\mathtt{X}}

\newcommand{\FCL}{\mathsf{CL}}

\newcommand{\FAG}{\mathtt{AG}}
\newcommand{\FAC}{\mathtt{AC}}
\newcommand{\FST}{\mathtt{ST}}
\newcommand{\FAP}{\mathtt{AP}}

\newcommand{\Fav}{\mathtt{av}}

\newcommand{\Fout}{\mathsf{out}}

\newcommand{\FJA}{\mathtt{JA}}

\newcommand{\FCC}{\mathrm{C}}

\newcommand{\FDD}{\mathrm{D}}
\newcommand{\FEE}{\mathrm{E}}

\newcommand{\FES}{\mathrm{ES}}

\newcommand{\Fdefs}[1]{\textbf{#1}}

\newcommand{\putaway}[1]{}

 \newcommand{\Fblue}[1]{{\color{blue}#1}}

\newcommand{\FAF}{\mathtt{AF}}

\newcommand{\FACNF}{\mathtt{NF}^\mathtt{ac}}
\newcommand{\FALNF}{\mathtt{NF}^\alpha}

\newcommand{\Facnei}{\mathrm{N}^\mathrm{ac}}
\newcommand{\Falnei}{\mathrm{N}^\alpha}

\newcommand{\Falneinc}{\mathrm{CoreN}^\alpha}

\newcommand{\ab}{\allowbreak}

\newcommand{\FALEF}{\mathrm{EF}^\alpha}
\newcommand{\FACEF}{\mathrm{EF}^\mathrm{ac}}
\newcommand{\PAL}{\alpha}

\newcommand{\PAC}{\mathtt{AC}}

\title{Representation theorems for actual and alpha powers over two-agent general concurrent game frames}

\author{
Zixuan Chen$^{1,2}$, Fengkui Ju$^{3,4}$\thanks{Corresponding author.}, and Thomas {\AA}gotnes$^{5,6,7}$\\[5pt]
{\small $^{1}$Institute for Logic, Language and Computation, University of Amsterdam, Amsterdam}\\
{\small The Netherlands}\\[2.5pt]
{\small $^{2}$\href{mailto:zixuan.chen21@outlook.com}{zixuan.chen21@outlook.com}}\\[2.5pt]
{\small $^{3}$School of Philosophy, Beijing Normal University, Beijing, China}\\[2.5pt]
{\small $^{4}$\href{mailto:fengkui.ju@bnu.edu.cn}{fengkui.ju@bnu.edu.cn}}\\[2.5pt]
{\small $^{5}$Department of Information Science and Media Studies, University of Bergen, Bergen, Norway}\\[2.5pt]
{\small $^{6}$School of Philosophy, Shanxi University, Taiyuan, China}\\[2.5pt]
{\small $^{7}$\href{mailto:thomas.agotnes@uib.no}{thomas.agotnes@uib.no}}
}

\date{}

\begin{document}

\maketitle

\setlist[enumerate]{itemsep=3pt, topsep=5pt, parsep=3pt, partopsep=0pt}
\setlist[itemize]{itemsep=3pt, topsep=5pt, parsep=3pt, partopsep=0pt}


\begin{abstract}

One of the most well-known connections between modal logic and games is Pauly's representation theorem: that the induced powers of individuals and coalitions in a concurrent game frame correspond, in a precise sense, to a certain class of neighborhood models. The precise sense here is what is called \emph{alpha effectivity} (or \emph{alpha power}): the power of a coalition is characterized by the sets of states which it can ensure the outcome to lie in by taking some joint action. This definition is inherently monotonic, and, as pointed out by \cite{benthem_new_2019}, that fact can obscure relevant information about the power structure in the game: we don't know whether two sets a coalition has the power to enforce correspond to the same or different joint actions. An alternative is to characterise the power of a coalition by its \emph{actual powers} (called \emph{basic powers} in \cite{benthem_new_2019}): the set of sets of states where each corresponds to one joint action by the coalition and all possible joint actions by the other agents.
It has recently been argued \cite{li_minimal_2025, li_completeness_2026} that standard concurrent game frames rely on three assumptions that in some cases may be too strong: seriality, independence of agents, and determinism. This gives a total of eight different classes of \emph{general} concurrent game frames.
In this paper, assuming two agents, we prove that for actual powers, the eight classes of general concurrent game frames are representable by eight corresponding classes of neighborhood frames. Building on this result, we show that for alpha powers, the same eight classes of general concurrent game frames are likewise representable by eight corresponding classes of neighborhood frames. This generalizes a result in \cite{benthem_new_2019}.
We also show that the two-agent actual characterization does not extend to arbitrary finite agent sets.

\end{abstract}

\section{Introduction}
\label{sec:introduction}

\subsection{Concurrent game frames, alpha and actual powers, and alpha and actual neighborhood frames}
\label{subsec:concurrent-game-frames-alpha-actual-powers-neighborhood-frames}

\emph{Concurrent game frames} are a standard semantic framework for logics of strategic reasoning. Many influential logics in this area are interpreted over concurrent game frames, including Coalition Logic $\mathsf{CL}$ \cite{pauly_modal_2002} and Alternating-time Temporal Logic $\mathsf{ATL}$ \cite{alur_alternating-time_2002}. Roughly speaking, a concurrent game frame consists of states and agents (forming coalitions), where, at each state, every coalition has available joint actions, and each such action induces a set of possible outcome states.

Two notions of coalition power arise from concurrent game frames: \emph{alpha powers} and \emph{actual powers}. An \emph{alpha power} of a coalition is a set of possible futures such that the coalition has an action that forces every resulting outcome to lie in that set. An \emph{actual power} of a coalition is a set of possible futures such that the coalition has an action satisfying both of the following conditions: (1) the action forces every resulting outcome to lie in the set, and (2) every future in the set is realizable by the complement coalition (and the environment). Thus, unlike alpha powers, actual powers explicitly involve agents outside the coalition.

Correspondingly, one obtains two kinds of neighborhood frames: \emph{alpha neighborhood frames} and \emph{actual neighborhood frames}. In an alpha neighborhood frame, each coalition is assigned an \emph{alpha neighborhood function} that, at every state, specifies an \emph{alpha neighborhood}, i.e., a set of alpha powers. In an actual neighborhood frame, each coalition is assigned an \emph{actual neighborhood function} that, at every state, specifies an \emph{actual neighborhood}, i.e., a set of actual powers.

\emph{Alpha powers} are widely used in game theory; see, e.g.,~\cite{moulin_cores_1982}. 
By contrast, there appears to be comparatively little work on \emph{actual} powers. 
To the best of our knowledge, the main reference is van Benthem, Bezhanishvili, and Enqvist~\cite{benthem_new_2019}. 
Working primarily from the perspective of game equivalence, they investigate \emph{basic} powers of singleton coalitions in turn-based two-agent extensive games with imperfect information, defined in terms of (uniform) strategies. 
Conceptually, this notion of basic power coincides with what we call \emph{actual} power.

\subsection{Alpha powers versus actual powers}
\label{subsec:alpha-powers-versus-actual-powers}

Alpha and actual powers abstract from the underlying actions in different ways. Alpha powers record guarantees: if a coalition has an action whose possible outcomes are all contained in a set \(X\), then the same action also guarantees every superset of \(X\). Hence alpha powers are upward closed. Actual powers, by contrast, record the precise outcome set generated by an available action of the coalition, that is, the set of states left possible once that action is fixed and the remaining agents' choices are left open. They are therefore generally non-monotonic.

As observed in~\cite{benthem_new_2019}, describing a game in terms of its alpha powers can lose information about the power structure. Consider the following example.

\begin{example}[Same alpha powers, different actual powers]
\label{ex:same-alpha-different-actual}

Consider two AI-driven robots, $a$ and $b$, controlling a
warehouse gate. The two robots send their inputs to the gate controller at the same time. Robot $a$ has higher authority than
robot $b$.

There are two relevant states:
$
W=\{w_1,w_2\},
$
where $w_1$ is the state in which the gate is closed, and $w_2$ is the state in
which the gate is open.

Robot $a$ has three actions:
$
\mathtt{open}_a, \mathtt{close}_a, \mathtt{delegate}_a.
$
The first action opens the gate, the second closes the gate, and the third
delegates the decision to robot $b$.

Robot $b$ has two actions:
$
\mathtt{open}_b, \mathtt{close}_b.
$

\smallskip

\noindent\textbf{Scenario 1: delegation works.}
If robot $a$ chooses $\mathtt{delegate}_a$, then robot $b$'s simultaneous action
determines whether the gate is opened or closed. In this scenario, the joint
actions leading to $w_1$ and $w_2$ are:
\[
\Gamma_1=
\bigl\{
(\mathtt{close}_a,\mathtt{open}_b),
(\mathtt{close}_a,\mathtt{close}_b),
(\mathtt{delegate}_a,\mathtt{close}_b)
\bigr\},
\]
\[
\Gamma_2=
\bigl\{
(\mathtt{open}_a,\mathtt{open}_b),
(\mathtt{open}_a,\mathtt{close}_b),
(\mathtt{delegate}_a,\mathtt{open}_b)
\bigr\}.
\]
Figure~\ref{fig:robots-delegation-works} depicts the
concurrent game frame.

\begin{figure}[htbp]
\centering
\begin{tikzpicture}[
  ->,
  >=stealth,
  semithick,
  state/.style={circle, draw, minimum size=9mm, inner sep=1pt},
  lab/.style={font=\small, inner sep=1pt}
]

\node[state] (w1) at (0,0) {$w_1$};
\node[state] (w2) at (5.5,0) {$w_2$};

\path
(w1) edge[loop left] node[lab] {$\Gamma_1$} (w1)
(w1) edge[bend left=15] node[lab, above] {$\Gamma_2$} (w2)
(w2) edge[bend left=15] node[lab, below] {$\Gamma_1$} (w1)
(w2) edge[loop right] node[lab] {$\Gamma_2$} (w2);

\end{tikzpicture}
\caption{Scenario 1: delegation works. Robot $a$ can control the gate directly
or delegate the decision to robot $b$.}
\label{fig:robots-delegation-works}
\end{figure}

\smallskip

\noindent\textbf{Scenario 2: delegation is broken.}
Robot $a$'s direct commands still work. However, if robot $a$ chooses
$\mathtt{delegate}_a$, the controller fails to pass control to robot $b$ and
defaults to closing the gate. In this scenario, the joint actions leading to
$w_1$ and $w_2$ are:
\[
\Lambda_1=
\bigl\{
(\mathtt{close}_a,\mathtt{open}_b),
(\mathtt{close}_a,\mathtt{close}_b),
(\mathtt{delegate}_a,\mathtt{open}_b),
(\mathtt{delegate}_a,\mathtt{close}_b)
\bigr\},
\]
\[
\Lambda_2=
\bigl\{
(\mathtt{open}_a,\mathtt{open}_b),
(\mathtt{open}_a,\mathtt{close}_b)
\bigr\}.
\]
Figure~\ref{fig:robots-delegation-broken}
depicts the concurrent game frame.

\begin{figure}[htbp]
\centering
\begin{tikzpicture}[
  ->,
  >=stealth,
  semithick,
  state/.style={circle, draw, minimum size=9mm, inner sep=1pt},
  lab/.style={font=\small, inner sep=1pt}
]

\node[state] (w1) at (0,0) {$w_1$};
\node[state] (w2) at (5.5,0) {$w_2$};

\path
(w1) edge[loop left] node[lab] {$\Lambda_1$} (w1)
(w1) edge[bend left=15] node[lab, above] {$\Lambda_2$} (w2)
(w2) edge[bend left=15] node[lab, below] {$\Lambda_1$} (w1)
(w2) edge[loop right] node[lab] {$\Lambda_2$} (w2);

\end{tikzpicture}
\caption{Scenario 2: delegation is broken. Robot $a$'s direct commands still
work, but delegation defaults to closing the gate.}
\label{fig:robots-delegation-broken}
\end{figure}

\smallskip

In both scenarios, the same transition pattern applies at $w_1$ and $w_2$. Hence, for each coalition, the alpha powers and actual powers at $w_1$ and $w_2$ coincide.

The two scenarios have the same alpha powers for every coalition.

\[
\begin{array}{c|c|c}
\text{coalition}
&
\text{alpha powers in Scenario 1}
&
\text{alpha powers in Scenario 2}
\\
\hline
\emptyset
&
\{W\}
&
\{W\}
\\[2pt]
\{a\}
&
\bigl\{\{w_1\},\{w_2\},W\bigr\}
&
\bigl\{\{w_1\},\{w_2\},W\bigr\}
\\[2pt]
\{b\}
&
\{W\}
&
\{W\}
\\[2pt]
\{a,b\}
&
\bigl\{\{w_1\},\{w_2\},W\bigr\}
&
\bigl\{\{w_1\},\{w_2\},W\bigr\}.
\end{array}
\]

However, $a$ does not have the same actual powers in the two scenarios:
\[
\begin{array}{c|c|c}
\text{coalition}
&
\text{actual powers in Scenario 1}
&
\text{actual powers in Scenario 2}
\\
\hline
\emptyset
&
\{W\}
&
\{W\}
\\[2pt]
\Fblue{\{a\}}
&
\bigl\{\{w_1\},\{w_2\},W\bigr\}
&
\bigl\{\{w_1\},\{w_2\}\bigr\}
\\[2pt]
\{b\}
&
\{W\}
&
\{W\}
\\[2pt]
\{a,b\}
&
\bigl\{\{w_1\},\{w_2\}\bigr\}
&
\bigl\{\{w_1\},\{w_2\}\bigr\}.
\end{array}
\]

\end{example}

The games in the two scenarios have the same alpha powers for every coalition, yet, intuitively, they are very different games -- and indeed, the \emph{actual} effectivity of agent $a$ differs (formal definitions will be given below). The reason that agent $a$ has the alpha power $\{w_1,w_2\}$ in the second scenario is more a side effect of the monotonicity built into the notion of alpha powers than something intrinsic about $a$'s powers -- while in the first scenario it \emph{does} reflect such a power: $a$ can choose an action such that $w_1$ and $w_2$ are exactly the possible outcomes of this action. Thus, actual powers give us a more exact picture of the powers in a game.

As noted above, actual powers explicitly keep track of the role of agents
outside the coalition. This makes them useful for modeling social scenarios
in which a coalition's ability is not exhausted by what it can guarantee, but
also depends on what its action leaves open to others. A central logical
example is Socially Friendly Coalition Logic~\cite{goranko_socially_2018}.
Its operator $[\FCC](\phi;\psi_1,\dots,\psi_k)$ says that coalition $\FCC$ has
a collective action $\sigma_\FCC$ that guarantees $\phi$, while still allowing
the complementary coalition $\overline{\FCC}$ to realize each of the
alternatives $\psi_1,\dots,\psi_k$ by a suitable response. Semantically, this
amounts to requiring an actual power \(X\) of $\FCC$ such that every state in
\(X\) satisfies \(\phi\), and, for each \(i \leq k\), some state in \(X\)
satisfies \(\psi_i\). The second requirement is precisely what cannot be captured by alpha powers
alone.

\subsection{Alpha and actual representation}
\label{subsec:alpha-and-actual-representation}

Every concurrent game frame induces an alpha neighborhood frame.
Pauly~\cite{pauly_modal_2002} and Goranko, Jamroga, and Turrini~\cite{goranko_strategic_2013} showed that the class of concurrent game frames is \emph{representable} by a class of alpha neighborhood frames characterized by certain \emph{good} properties of neighborhood functions, in the following sense:
(1) every alpha neighborhood frame induced by a concurrent game frame satisfies these properties, and
(2) every alpha neighborhood frame satisfying these properties is induced by some concurrent game frame.
This result is the \emph{alpha representation theorem} for concurrent game frames.

Every concurrent game frame also induces an actual neighborhood frame. Do there exist \emph{good} properties of neighborhood functions such that
(1) every actual neighborhood frame induced by a concurrent game frame satisfies these properties, and
(2) every actual neighborhood frame satisfying these properties is induced by some concurrent game frame?
If so, this would be an \emph{actual representation theorem} for concurrent game frames.

This question has not yet been fully settled in the literature. As far as we are aware, the only closely related result is due to van Benthem, Bezhanishvili, and Enqvist~\cite{benthem_new_2019}. There, the authors establish a representation theorem for \emph{basic} (strategy-based) powers in turn-based two-agent extensive games with imperfect information, under the restriction to singleton coalitions. Via a standard power-invariance transformation between the class of two-agent concurrent game frames and an appropriate class of turn-based two-agent extensive games with imperfect information, their proof readily yields a representation theorem for actual powers in two-agent concurrent game frames, again restricted to singleton coalitions (see Appendix \ref{app:power-invariance} for details).

\subsection{Significance of the representation theorems}
\label{subsec:significance-of-representation-theorems}

What are the powers of players, and coalitions of players, in games? Which sets of outcomes -- known as \emph{events} in probability theory or \emph{propositions} in logic -- is it in their power to make come about, in the sense that they can choose a (joint) action such that no matter what the other agents do, this is what will happen? This abstracts away the particulars of a game, and leaves an abstract representation of power which is exactly what we are often interested in when we analyse games. In order to understand power in games, the central question then is: what are the \emph{properties} (or \emph{axioms} or \emph{laws}) of individual and coalitional power induced from games in this way? That is exactly what Pauly's representation theorem \cite{pauly_modal_2002} answers, for alpha powers.
That is the main motivation, and the main significance of the representation theorems for actual power: giving us a deeper understanding of the power in games.

While characterizing the laws of power in games more precisely is in itself a
main motivation behind the representation theorems, they are also, as already
mentioned, highly relevant for the semantics of certain modal logics.
They help provide alternative neighborhood semantics. They also establish the
adequacy of a power-based abstraction of action frames. A concurrent game frame
explicitly records action names, availability functions, and outcome functions.
Yet many modal languages for strategic reasoning are invariant under changes to
these action-level components, provided that the induced powers remain
unchanged. Such languages refer only to the powers generated by actions: what
coalitions can force, in the case of alpha powers, or the exact outcome ranges
associated with coalition actions, in the case of actual powers. A
representation theorem identifies precisely which abstract neighborhood frames
can arise from action-based game frames. Thus, for languages whose truth
conditions depend only on the relevant notion of power, it justifies replacing
action-based models with neighborhood-based models without loss of semantic
information.

For alpha powers, this is the familiar role of Pauly's representation theorem.
When only enforceable sets matter, as in $\FCL$, semantics over concurrent game
frames can be transferred to semantics over alpha neighborhood frames. The
latter describe coalitional abilities directly, without retaining the
additional action-level structure of the underlying game. They therefore
provide a more economical setting, and often a more convenient one, for
metatheoretic arguments. For instance, Pauly~\cite{pauly_modal_2002} proved
completeness of $\FCL$ via alpha neighborhood frames, and {\AA}gotnes and
Alechina~\cite{agotnes_coalition_2019} introduced knowledge into $\FCL$ within
the same semantic framework.

Actual representation plays an analogous, though less developed, role in the
logical setting. While Coalition Logic yields the same logic
whether we interpret its language in neighborhood models induced from games
using alpha powers or in models induced using actual powers\footnote{Here we
assume that the modality $[\FCC]\phi$ is interpreted as the existence of a choice
$X$ of states available to $\FCC$ at $s$ such that $\phi$ is true at every state in
$X$. The semantics of these operators is usually given by requiring the
extension of $\phi$ (the set of states where $\phi$ is true) to be an available
choice for $\FCC$ at $s$, which is equivalent given monotonicity. See
\cite{pacuit2017neighborhood} for a discussion of the two definitions.}, the
same is not true for languages whose modalities are sensitive not only to what
a coalition can guarantee, but also to the residual range of outcomes left open
by a coalition action. A representation theorem then justifies
treating these outcome sets themselves as semantic primitives. This may provide
a compact basis for completeness proofs, canonical-model constructions,
definability arguments, and comparisons between models.

As noted above, Socially Friendly Coalition
Logic~\cite{goranko_socially_2018} is naturally understood as a logic of
coalitions' actual powers in concurrent game frames. The completeness of its
two-agent case has been established~\cite{goranko_complete_2026}, whereas the
general multi-agent case remains open. An actual representation theorem
therefore offers a natural route from the action-based semantics of such
operators to actual-neighborhood semantics. It does not by itself solve the
completeness problem, but it provides the semantic infrastructure needed to
study such questions without explicitly retaining action names and outcome
functions.

\subsection{Eight classes of general concurrent game frames}
\label{subsec:eight-classes-general-concurrent-game-frames}

Li and Ju~\cite{li_minimal_2025, li_completeness_2026} argued that standard concurrent game frames rely on three assumptions that can be too strong in practice: \emph{seriality}, \emph{independence of agents}, and \emph{determinism}.
We briefly recall their main motivations.

Games often terminate upon reaching designated states. For example, a rock--paper--scissors game ends once a winner is determined. Intuitively, at such terminal states, agents may have no available actions; hence seriality may fail.

There are also situations in which whether a coalition can perform an action depends on what other agents do at the same time. For example, suppose two agents, $a$ and $b$, are in a room with a single chair. Agent $a$ can sit, and agent $b$ can sit, but they cannot both sit simultaneously. In such cases, independence of agents fails.

Moreover, in many scenarios, a joint action of all participating agents may lead to more than one possible outcome state. The following example (from \cite{sergot_some_2014}) illustrates this point. A vase stands on a table, and an agent $a$ can raise or lower one end of the table. If the table tilts, the vase may fall; and if it falls, it may break. Thus, determinism may fail.

Motivated by such examples, Li and Ju~\cite{li_minimal_2025, li_completeness_2026} introduced eight classes of general concurrent game frames, determined by which of the three properties are assumed, and studied the corresponding eight generalized coalition logics. In \cite{ju_generalized_2026} this was extended to the language of Alternating-time Temporal Logic (ATL).

\subsection{Our work}
\label{subsec:our-work}

This paper establishes two parallel families of representation theorems under the assumption that there are exactly two agents. For actual powers, we show that each of the eight classes of general concurrent game frames introduced above is representable by the corresponding class of actual neighborhood frames. For alpha powers, we prove the parallel result: each of the same eight game-frame classes is representable by the corresponding class of alpha neighborhood frames. Thus, for every combination of seriality, independence of agents, and determinism, we obtain both an actual and an alpha neighborhood characterization.

We also clarify the scope of the actual representation theorem. Its proof is essentially two-agent in nature, and the result does not extend to arbitrary finite agent sets by a straightforward generalization of the same conditions. The reason is that, with three or more agents, actual powers impose additional coherence requirements on overlapping but incomparable coalitions. These requirements are vacuous in the two-agent case and are therefore not captured by the four representativeness conditions used here. Nevertheless, the two-agent case already contains substantial technical content, since it treats both notions of power across all eight variants generated by seriality, independence of agents, and determinism.

The remainder of the paper is organized as follows:

\begin{itemize}

\item Section~\ref{sec:preliminaries} introduces the framework used throughout the paper: action frames, alpha and actual powers, alpha and actual neighborhood frames, and the relevant notions of representability.

\item Section~\ref{sec:general-concurrent-game-frames-and-eight-classes} defines the eight classes of general concurrent game frames.

\item Section~\ref{sec:representing-actual-powers} works under the two-agent assumption. It defines, for actual powers, the corresponding eight classes of actual neighborhood frames in terms of finite sets of conditions on actual neighborhood functions, and proves the two-agent actual representation result.

\item Section~\ref{sec:representing-alpha-powers} turns to alpha powers. It defines the corresponding eight classes of alpha neighborhood frames by finite sets of conditions on alpha neighborhood functions, and proves the two-agent alpha representation theorem, using the two-agent actual representation theorem as a key step.

\item Section~\ref{sec:weak-finite-agent-pac-representativeness-not-sufficient-pac-representability} shows that the straightforward finite-agent generalization of the four actual representativeness conditions is necessary but not sufficient for actual representability. This is witnessed by a three-agent counterexample, which exposes an additional coherence requirement not present in the two-agent setting.

\item Section~\ref{sec:concluding-remarks} concludes the paper with final remarks.

\end{itemize}

\section{Preliminaries}
\label{sec:preliminaries}

In this section, we proceed as follows. First, we introduce \emph{action frames} in an abstract form; specific constraints will be imposed later. Second, we discuss \emph{alpha powers} and \emph{actual powers}, and briefly compare them. Third, we present \emph{alpha neighborhood frames} and \emph{actual neighborhood frames}. These are again introduced at an abstract level, with additional constraints to be added in subsequent sections. Finally, we define how action frames are representable by alpha and actual neighborhood frames.

\subsection{Action frames}
\label{subsec:action-frames}

Let $\FAP$ be a countable set of atomic propositions, and let $\FAG$ be a finite nonempty set of agents. Subsets of $\FAG$ are called \Fdefs{coalitions}, and $\FAG$ itself is called the \Fdefs{grand coalition}.
In what follows, whenever no confusion arises, we write $a$ instead of $\{a\}$ for $a \in \FAG$.

Let $\FAC$ be a nonempty set of actions. For each $\FCC \subseteq \FAG$, define
\[
\FJA_\FCC = \{\sigma_\FCC \mid \sigma_\FCC: \FCC \rightarrow \FAC\},
\]
the set of \Fdefs{joint actions} of $\FCC$. Note $\FJA_\emptyset =\{\emptyset\}$. For $\FCC, \FDD \subseteq \FAG$ such that $\FCC \subseteq \FDD$, and for every $\sigma_\FDD \in \FJA_\FDD$, we denote by $\sigma_\FDD|_\FCC$ the \Fdefs{restriction} of $\sigma_\FDD$ to $\FCC$, which is an element of $\FJA_\FCC$.
In the sequel, we sometimes represent joint actions of coalitions as action sequences, implicitly assuming a fixed order of agents.

\begin{definition}[Action frames]
\label{def:action-frames}

An \Fdefs{action frame} is a tuple
\[
\FAF = (\FST, \FAC, \{\Fav_\FCC \mid \FCC \subseteq \FAG\}, \{\Fout_\FCC \mid \FCC \subseteq \FAG\}),
\]
where:

\begin{itemize}

\item
$\FST$ is a nonempty set of states.

\item
$\FAC$ is a nonempty set of actions.

\item
For each $\FCC \subseteq \FAG$, $\Fav_\FCC: \FST \rightarrow \mathcal{P}(\FJA_\FCC)$ is an \Fdefs{availability function} for $\FCC$.

\emph{Here, $\Fav_\FCC(s)$ is the set of all joint actions of coalition $\FCC$ that are available at state $s$.}

\item
For each $\FCC \subseteq \FAG$, $\Fout_\FCC: \FST \times \FJA_\FCC \rightarrow \mathcal{P}(\FST)$ is an \Fdefs{outcome function} for $\FCC$.

\emph{Here, $\Fout_\FCC(s, \sigma_\FCC)$ is the set of possible outcome states when coalition $\FCC$ performs joint action $\sigma_\FCC$ at state $s$.}

\end{itemize}

\end{definition}

Note that the action set $\FAC$ may vary across action frames.

Different constraints may be imposed on action frames.

Intuitively, $\Fout_\emptyset(s,\emptyset)$ contains all states to which $s$ may evolve, i.e., all possible successors of $s$.

\subsection{Alpha and actual powers}
\label{subsec:alpha-and-actual-powers}

Fix an action frame
\[
\FAF = (\FST, \FAC, \{\Fav_\FCC \mid \FCC \subseteq \FAG\}, \{\Fout_\FCC \mid \FCC \subseteq \FAG\}),
\]
a state $s \in \FST$, a coalition $\FCC \subseteq \FAG$, and an available joint action $\sigma_\FCC \in \Fav_\FCC(s)$.
We say that a set of states $X \subseteq \FST$ is \Fdefs{safe} for $\sigma_\FCC$ at $s$ if
\[
\Fout_\FCC(s,\sigma_\FCC)\subseteq X,
\]
that is, performing $\sigma_\FCC$ at $s$ guarantees that the next state lies in $X$.
We say that $X$ is \Fdefs{tight} for $\sigma_\FCC$ at $s$ if
\[
X\subseteq \Fout_\FCC(s,\sigma_\FCC),
\]
that is, every state in $X$ can occur as an outcome compatible with performing $\sigma_\FCC$ at $s$.

An \Fdefs{alpha power} of $\FCC$ at $s$ is a set of states $X \subseteq \FST$ that is safe for some available joint action of $\FCC$ at $s$.
An \Fdefs{actual power} of $\FCC$ at $s$ is a set of states $X \subseteq \FST$ that is both safe and tight for some available joint action of $\FCC$ at $s$. Equivalently, actual powers of $\FCC$ at $s$ are exactly the sets of the form $\Fout_\FCC(s,\sigma_\FCC)$ for $\sigma_\FCC \in \Fav_\FCC(s)$.
We say that an (alpha or actual) power $X$ \Fdefs{enables} a state $t \in \FST$ at $s$ if $t \in X$.

Alpha powers are monotonic: if $X$ is an alpha power, then every superset $X' \supseteq X$ is also an alpha power.
By contrast, actual powers are generally non-monotonic, since tightness rules out adding states that are not actually reachable.

Consider again Example~\ref{ex:same-alpha-different-actual}. As noted there,
the two scenarios assign the same alpha powers to every coalition. In
particular, robot $a$ has the alpha powers $\{w_1\}$, $\{w_2\}$, and
$\{w_1,w_2\}$ in both scenarios. The difference appears only at the level of
actual powers. In Scenario~1, robot $a$ has three actual powers:
$\{w_1\}$, $\{w_2\}$, and $\{w_1,w_2\}$. In Scenario~2, robot $a$ has only
two actual powers, namely $\{w_1\}$ and $\{w_2\}$. The set $\{w_1,w_2\}$
remains an alpha power in Scenario~2 only by monotonicity: each effective
action of $a$ determines a single outcome, so $\{w_1,w_2\}$ is safe but not
tight.

\subsection{Alpha and actual neighborhood frames}
\label{subsec:alpha-and-actual-neighborhood-frames}

\begin{definition}[Alpha neighborhood frames]
\label{def:alpha-neighborhood-frames}

An \Fdefs{alpha neighborhood frame} is a tuple
\[
\FALNF = (\FST, \{\Falnei_\FCC \mid \FCC \subseteq \FAG\}),
\]
where:

\begin{itemize}

\item
$\FST$ is a nonempty set of states;

\item
For each $\FCC \subseteq \FAG$, $\Falnei_\FCC: \FST \to \mathcal{P}(\mathcal{P}(\FST))$ is an \Fdefs{alpha neighborhood function} of $\FCC$ such that, for every $s \in \FST$, the set $\Falnei_\FCC(s)$ is closed under supersets.

\end{itemize}

\end{definition}

Different constraints may be imposed on alpha neighborhood frames.

The set $\Falnei_\FCC(s)$ is called the \Fdefs{alpha neighborhood} of $\FCC$ at $s$. Intuitively, $\Falnei_\FCC(s)$ consists of the alpha powers of $\FCC$ at $s$.

\begin{definition}[Nonmonotonic cores]
\label{def:nonmonotonic-cores}

Let $\FALNF = (\FST, \{\Falnei_\FCC \mid \FCC \subseteq \FAG\})$ be an alpha neighborhood frame.
For each $\FCC \subseteq \FAG$, define $\Falneinc_\FCC$, called the \Fdefs{nonmonotonic core} of $\Falnei_\FCC$, by setting, for every $s \in \FST$,
\[
\Falneinc_\FCC(s)
=
\{\, X \in \Falnei_\FCC(s) \mid \text{there is no } Y \in \Falnei_\FCC(s) \text{ with } Y \subset X \,\}.
\]
That is, $\Falneinc_\FCC(s)$ is the set of $\subseteq$-minimal elements of $\Falnei_\FCC(s)$.
For each $s \in \FST$, the set $\Falneinc_\FCC(s)$ is also called the \Fdefs{nonmonotonic core} of $\Falnei_\FCC(s)$.

\end{definition}

Intuitively, the elements of $\Falneinc_\FCC(s)$ can be viewed as candidate actual powers of $\FCC$ at $s$.
However, one should not expect $\Falneinc_\FCC(s)$ to contain all actual powers of $\FCC$ at $s$.
Indeed, it may happen that one actual power properly contains another, whereas such proper containments cannot occur among elements of $\Falneinc_\FCC(s)$.

Note that $\Falneinc_\FCC(s)$ may be empty even when $\Falnei_\FCC(s)$ is nonempty. However, this cannot occur when $\FST$ is finite.

Later, $\bigcup \Falneinc_\emptyset(s)$ will be understood as the set of all states to which $s$ may evolve, i.e., the possible successors of $s$.

\begin{definition}[Actual neighborhood frames]
\label{def:actual-neighborhood-frames}

An \Fdefs{actual neighborhood frame} is a tuple
\[
\FACNF = (\FST, \{\Facnei_\FCC \mid \FCC \subseteq \FAG\}),
\]
where:

\begin{itemize}

\item
$\FST$ is a nonempty set of states;

\item
For each $\FCC \subseteq \FAG$, $\Facnei_\FCC: \FST \to \mathcal{P}(\mathcal{P}(\FST))$ is an \Fdefs{actual neighborhood function} of $\FCC$.

\end{itemize}

\end{definition}

Different constraints may be imposed on actual neighborhood frames. Once additional constraints are imposed on actual neighborhood frames, alpha neighborhood frames
may no longer be actual neighborhood frames.

The set $\Facnei_\FCC(s)$ is called the \Fdefs{actual neighborhood} of $\FCC$ at $s$. Intuitively, $\Facnei_\FCC(s)$ consists of the actual powers of coalition $\FCC$ at state $s$.

In the intended interpretation, $\bigcup \Facnei_\emptyset(s)$ consists of all states to which $s$ may evolve, i.e., the possible successors of $s$.

\subsection{Representation of action frames by alpha and actual neighborhood frames}
\label{subsec:representation-action-frames-alpha-actual-neighborhood-frames}

\begin{definition}[Alpha and actual effectivity functions of action frames]
\label{def:alpha-actual-effectivity-functions-action-frames}

Let $\FAF = (\FST, \FAC, \{\Fav_\FCC \mid \FCC \subseteq \FAG\}, \{\Fout_\FCC \mid \FCC \subseteq \FAG\})$ be an action frame.
For each $\FCC \subseteq \FAG$, define:

\begin{itemize}

\item
the \Fdefs{alpha effectivity function} $\FALEF_\FCC$ for $\FCC$ in $\FAF$ by setting, for every $s \in \FST$,
\[
\FALEF_\FCC(s)
=
\{\, Y \subseteq \FST \mid \Fout_\FCC(s,\sigma_\FCC) \subseteq Y \text{ for some } \sigma_\FCC \in \Fav_\FCC(s) \,\};
\]

\item
the \Fdefs{actual effectivity function} $\FACEF_\FCC$ for $\FCC$ in $\FAF$ by setting, for every $s \in \FST$,
\[
\FACEF_\FCC(s)
=
\{\, \Fout_\FCC(s,\sigma_\FCC) \mid \sigma_\FCC \in \Fav_\FCC(s) \,\}.
\]

\end{itemize}

\end{definition}

From the definition, it is immediate
that each $\FALEF_\FCC(s)$ is closed under supersets.

\begin{definition}[$\PAL$-representability and $\PAC$-representability of action frames]
\label{def:pal-pac-representability-action-frames}

Let $\FAF = (\FST, \FAC, \ab \{\Fav_\FCC \mid \FCC \subseteq \FAG\}, \{\Fout_\FCC \mid \FCC \subseteq \FAG\})$ be an action frame.
We say that $\FAF$ is \Fdefs{$\PAL$-representable} by an alpha neighborhood frame
\[
\FALNF = (\FST, \{\Falnei_\FCC \mid \FCC \subseteq \FAG\})
\]
if, for every $\FCC \subseteq \FAG$, we have $\FALEF_\FCC = \Falnei_\FCC$.

We say that $\FAF$ is \Fdefs{$\PAC$-representable} by an actual neighborhood frame
\[
\FACNF = (\FST, \{\Facnei_\FCC \mid \FCC \subseteq \FAG\})
\]
if, for every $\FCC \subseteq \FAG$, we have $\FACEF_\FCC = \Facnei_\FCC$.

\end{definition}

Intuitively, $\PAL$-representability (resp.\ $\PAC$-representability) means that the corresponding alpha (resp.\ actual) neighborhood frame captures all information about the alpha (resp.\ actual) powers of coalitions in the action frame.

\begin{definition}[$\PAL$-representability and $\PAC$-representability of classes of action frames]
\label{def:pal-pac-representability-classes-action-frames}

Let $\mathbf{AF}$ be a class of action frames.
We say that $\mathbf{AF}$ is \Fdefs{$\PAL$-representable} (resp.\ \Fdefs{$\PAC$-representable}) by a class of alpha (resp.\ actual) neighborhood frames $\alpha\text{-}\mathbf{NF}$ (resp.\ $\mathbf{AC}\text{-}\mathbf{NF}$) if:

\begin{itemize}

\item
Every action frame in $\mathbf{AF}$ is $\PAL$-representable (resp.\ $\PAC$-representable) by some alpha (resp.\ actual) neighborhood frame in $\alpha\text{-}\mathbf{NF}$ (resp.\ $\mathbf{AC}\text{-}\mathbf{NF}$);

\item
Every alpha (resp.\ actual) neighborhood frame in $\alpha\text{-}\mathbf{NF}$ (resp.\ $\mathbf{AC}\text{-}\mathbf{NF}$) $\PAL$-represents (resp.\ $\PAC$-represents) some action frame in $\mathbf{AF}$.

\end{itemize}

\end{definition}

Suppose that a class of action frames is $\PAL$-representable (resp.\ $\PAC$-representable) by a class of alpha (resp.\ actual) neighborhood frames. Intuitively, this means that, as far as alpha (resp.\ actual) powers are concerned, one can translate the information carried by action frames into the corresponding neighborhood frames.

Now fix a class of action frames $\mathbf{AF}$ and a collection of properties of alpha (resp.\ actual) neighborhood frames.
To show that $\mathbf{AF}$ is $\PAL$-representable (resp.\ $\PAC$-representable) by the class of alpha (resp.\ actual) neighborhood frames satisfying these properties, it suffices to establish the following two statements:

\begin{itemize}

\item
\Fdefs{Havingness} (necessity): every alpha (resp.\ actual) neighborhood frame that $\PAL$-represents (resp.\ $\PAC$-represents) an action frame in $\mathbf{AF}$ satisfies these properties;

\item
\Fdefs{Enoughness} (sufficiency): every alpha (resp.\ actual) neighborhood frame satisfying these properties $\PAL$-represents (resp.\ $\PAC$-represents) an action frame in $\mathbf{AF}$.

\end{itemize}

Later we will follow exactly this pattern.

\paragraph{Remarks.}

Fix a class of action frames $\mathbf{AF}$. From each $\FAF \in \mathbf{AF}$, we obtain an alpha (resp.\ actual) neighborhood frame $\FALNF$ (resp.\ $\FACNF$) that $\PAL$-represents (resp.\ $\PAC$-represents) $\FAF$.
Let $\alpha\text{-}\mathbf{NF}$ (resp.\ $\mathbf{AC}\text{-}\mathbf{NF}$) be the class of all such alpha (resp.\ actual) neighborhood frames.
Then, trivially, $\mathbf{AF}$ is $\PAL$-representable (resp.\ $\PAC$-representable) by $\alpha\text{-}\mathbf{NF}$ (resp.\ $\mathbf{AC}\text{-}\mathbf{NF}$).
This observation is not very informative: to obtain a meaningful representation, the classes $\alpha\text{-}\mathbf{NF}$ and $\mathbf{AC}\text{-}\mathbf{NF}$ should be specified by some \emph{good} properties of alpha and actual neighborhood frames rather than by construction from $\mathbf{AF}$. Of course, \emph{good} is a vague notion.

\section{Eight classes of general concurrent game frames}
\label{sec:general-concurrent-game-frames-and-eight-classes}

In this section, we present the eight classes of general concurrent game frames considered by Li and Ju~\cite{li_minimal_2025, li_completeness_2026}. Later, assuming two agents, we establish their $\PAL$-representability and $\PAC$-representability.

\begin{definition}[General concurrent game frames]
\label{def:general-concurrent-game-frames}

An action frame
\[
\FGCGF = (\FST, \FAC, \{\Fav_\FCC \mid \FCC \subseteq \FAG\}, \{\Fout_\FCC \mid \FCC \subseteq \FAG\})
\]
is a \Fdefs{general concurrent game frame} if the following conditions hold:

\begin{itemize}

\item

The \Fdefs{grand-coalition-induced outcome condition} (simply the \Fdefs{GCI-condition}):
for every $\FCC \subseteq \FAG$, every $s \in \FST$, and every $\sigma_\FCC \in \FJA_\FCC$,
\[
\Fout_\FCC(s,\sigma_\FCC)
=
\bigcup
\{\, \Fout_\FAG(s,\sigma_\FAG) \mid \sigma_\FAG \in \FJA_\FAG \text{ and } \sigma_\FAG|_\FCC = \sigma_\FCC \,\}.
\]

\item

The \Fdefs{outcome-driven availability condition} (simply the \Fdefs{ODA-condition}):
for every $\FCC \subseteq \FAG$ and every $s \in \FST$,
\[
\Fav_\FCC(s)
=
\{\, \sigma_\FCC \in \FJA_\FCC \mid \Fout_\FCC(s,\sigma_\FCC) \neq \emptyset \,\}.
\]

\end{itemize}

\end{definition}

Thus, under the GCI-condition and the ODA-condition, the functions $\Fout_\FCC$ and $\Fav_\FCC$ (for all $\FCC \subseteq \FAG$) are determined by the grand-coalition outcome function $\Fout_\FAG$.

\begin{definition}[Seriality, independence, and determinism]
\label{def:seriality-independence-determinism-gcgf}

Let $\FGCGF = (\FST, \FAC, \{\Fav_\FCC \mid \FCC \subseteq \FAG\}, \{\Fout_\FCC \mid \FCC \subseteq \FAG\})$ be a general concurrent game frame. We say that:

\begin{itemize}

\item
$\FGCGF$ is \Fdefs{serial} if, for all $\FCC \subseteq \FAG$ and all $s \in \FST$, we have $\Fav_\FCC(s) \neq \emptyset$;

\item
$\FGCGF$ is \Fdefs{independent} if, for all $s \in \FST$ and all disjoint coalitions $\FCC,\FDD \subseteq \FAG$,
whenever $\sigma_\FCC \in \Fav_\FCC(s)$ and $\sigma_\FDD \in \Fav_\FDD(s)$, we have $\sigma_\FCC \cup \sigma_\FDD \in \Fav_{\FCC \cup \FDD}(s)$;

\item
$\FGCGF$ is \Fdefs{deterministic} if, for all $s \in \FST$ and all $\sigma_\FAG \in \Fav_\FAG(s)$, the set $\Fout_\FAG(s,\sigma_\FAG)$ is a singleton.

\end{itemize}

\end{definition}

Let $\mathtt{S}$, $\mathtt{I}$, and $\mathtt{D}$ denote seriality, independence, and determinism, respectively. Let
\[
\FES = \{\epsilon,\ \mathtt{S},\ \mathtt{I},\ \mathtt{D},\ \mathtt{SI},\ \mathtt{SD},\ \mathtt{ID},\ \mathtt{SID}\}
\]
be the set of the eight strings encoding all combinations of these three properties. Here $\epsilon$ denotes the empty combination.
For $\FXX \in \FES$, we say that a general concurrent game frame is an \Fdefs{$\FXX$-frame} if it satisfies the properties indicated by $\FXX$.

The following facts will be used implicitly throughout the sequel.

\begin{fact}
\label{fact:gcgf-basic-consequences}

Let
\[
\FGCGF = (\FST, \FAC, \{\Fav_\FCC \mid \FCC \subseteq \FAG\}, \{\Fout_\FCC \mid \FCC \subseteq \FAG\})
\]
be a general concurrent game frame. Then the following hold:

\begin{enumerate}

\item
\Fdefs{Outcome monotonicity:} for all $s \in \FST$, all $\FCC \subseteq \FDD \subseteq \FAG$, all $\sigma_\FCC \in \FJA_\FCC$, and all $\sigma_\FDD \in \FJA_\FDD$ such that $\sigma_\FDD|_\FCC = \sigma_\FCC$, we have
\[
\Fout_\FDD(s,\sigma_\FDD)\subseteq \Fout_\FCC(s,\sigma_\FCC).
\]

\item

\Fdefs{Alternative GCI-condition:}
for all $s \in \FST$, all $\FCC \subseteq \FAG$, and all $\sigma_\FCC \in \FJA_\FCC$,
\[
\Fout_\FCC(s,\sigma_\FCC)
=
\bigcup
\{\, \Fout_\FAG(s,\sigma_\FAG)\mid \sigma_\FAG \in \Fav_\FAG(s)\ \text{and}\ \sigma_\FAG|_\FCC=\sigma_\FCC \,\}.
\]

\end{enumerate}

\end{fact}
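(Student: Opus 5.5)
Both parts of Fact~\ref{fact:gcgf-basic-consequences} will be derived directly from the GCI-condition and the ODA-condition of Definition~\ref{def:general-concurrent-game-frames}, without invoking the extra properties $\mathtt{S}$, $\mathtt{I}$, $\mathtt{D}$.

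For \emph{outcome monotonicity}, fix $s$, coalitions $\FCC\subseteq\FDD\subseteq\FAG$, and $\sigma_\FDD$ with $\sigma_\FDD|_\FCC=\sigma_\FCC$. By the GCI-condition, $\Fout_\FDD(s,\sigma_\FDD)$ is the union of the sets $\Fout_\FAG(s,\sigma_\FAG)$ over those $\sigma_\FAG\in\FJA_\FAG$ with $\sigma_\FAG|_\FDD=\sigma_\FDD$. Restriction is transitive: $(\sigma_\FAG|_\FDD)|_\FCC=\sigma_\FAG|_\FCC$. Hence every such $\sigma_\FAG$ also satisfies $\sigma_\FAG|_\FCC=\sigma_\FCC$. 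So the index set for $\FDD$ is contained in the index set for $\FCC$. Since a union over a smaller index family is contained in the union over a larger one, we get $\Fout_\FDD(s,\sigma_\FDD)\subseteq\Fout_\FCC(s,\sigma_\FCC)$.

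For the \emph{alternative GCI-condition}, I will compare the union over all $\sigma_\FAG\in\FJA_\FAG$ extending $\sigma_\FCC$ with the union over only the available ones. Instantiate the ODA-condition with $\FCC=\FAG$. This gives $\sigma_\FAG\in\Fav_\FAG(s)$ if and only if $\Fout_\FAG(s,\sigma_\FAG)\neq\emptyset$. The terms dropped when passing from $\FJA_\FAG$ to $\Fav_\FAG(s)$ are therefore exactly those with $\Fout_\FAG(s,\sigma_\FAG)=\emptyset$. Removing empty sets from a union does not change it, so the two unions coincide. Combining this with the GCI-condition gives the stated equality.

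There is no real obstacle here. The only points needing care are the transitivity of restriction, used in the first part, and the instantiation of ODA at the grand coalition, used in the second.
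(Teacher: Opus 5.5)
Your proposal is correct and takes essentially the same route as the paper. Outcome monotonicity follows from the GCI-condition because the grand-coalition extensions of $\sigma_\FDD$ form a subfamily of those of $\sigma_\FCC$, and the alternative GCI-condition follows by instantiating the ODA-condition at $\FAG$, so the discarded unavailable joint actions contribute only empty outcome sets; you simply make explicit the transitivity of restriction that the paper leaves implicit.
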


The first item is an immediate consequence of the GCI-condition, and the second is a reformulation of the GCI-condition using the ODA-condition (since unavailable grand-coalition actions have empty outcome sets).

\paragraph{Remarks.}

It is known that the class of general concurrent game $\mathtt{SID}$-frames is $\PAL$-representable \cite{pauly_modal_2002, goranko_strategic_2013}, and that the class of general concurrent game $\mathtt{SD}$-frames is $\PAL$-representable \cite{shi_representation_2024}.
It remains open whether the other six classes of general concurrent game frames are $\PAL$-representable.

It is also currently unknown whether any of the eight classes of general concurrent game frames are $\PAC$-representable.
As noted in the introduction, the work of van Benthem, Bezhanishvili, and Enqvist~\cite{benthem_new_2019} on (strategy-based) \emph{basic} powers in turn-based two-agent extensive games with imperfect information entails a representation theorem for actual powers of singleton coalitions in two-agent concurrent game frames. We briefly explain the connection in Section~\ref{app:power-invariance} in the Appendix.

\section{Representing actual powers for two-agent frames}
\label{sec:representing-actual-powers}

Throughout this section, we assume that \(\FAG=\{a,b\}\). 
Thus all coalitions mentioned below are coalitions of this fixed two-agent set. 
The four actual representativeness conditions are later reconsidered over arbitrary finite agent sets in Section~\ref{sec:weak-finite-agent-pac-representativeness-not-sufficient-pac-representability}.

In this section, we proceed as follows. First, we define eight classes of $\PAC$-representative actual neighborhood frames. Second, we establish several basic facts about them. Third, we show that each of the eight classes of general concurrent game frames introduced in the previous section is $\PAC$-representable by the corresponding class of $\PAC$-representative actual neighborhood frames. The representability result has two directions: \emph{havingness} and \emph{enoughness}.

\subsection{Eight classes of two-agent $\PAC$-representative actual neighborhood frames}
\label{subsec:eight-classes-of-pac-representative-actual-neighborhood-frames}

\begin{definition}[$\PAC$-representative actual neighborhood frames]
\label{def:pac-representative-actual-neighborhood-frames}

Let \(\FAG=\{a,b\}\), and let $\FACNF = (\FST, \{\Facnei_\FCC \mid \FCC \subseteq \FAG\})$ be an actual neighborhood frame.
We say that $\FACNF$ is \Fdefs{$\PAC$-representative} if the following conditions are satisfied:

\begin{enumerate}

\item \Fdefs{Actual triviality of the empty coalition:}
for all $s \in \FST$, if $\Facnei_\emptyset(s)$ is nonempty, then it is a singleton.

\emph{Intuitively, this condition says that the empty coalition has at most one actual power. Having more than one actual power would amount to having more than one choice, hence the name.}

\item \Fdefs{Liveness:}
for all $s \in \FST$ and all $\FCC \subseteq \FAG$, we have $\emptyset \notin \Facnei_\FCC(s)$.

\emph{Intuitively, this condition rules out the degenerate actual power $\emptyset$ for any coalition. Note that liveness does not force coalitions to have powers; it only rules out treating the empty set as a ``power'' when powers exist.}

\item \Fdefs{Actual power inclusion:}
for all $s \in \FST$ and all $\FCC, \FDD \subseteq \FAG$ with $\FCC \subseteq \FDD$, for all $X \in \Facnei_\FDD(s)$ there exists $Y \in \Facnei_\FCC(s)$ such that $X \subseteq Y$.

\emph{Intuitively, every actual power of a larger coalition is included in some actual power of a smaller coalition.}

\item \Fdefs{Actual power decomposition:}
for all $s \in \FST$ and all $\FCC, \FDD \subseteq \FAG$ with $\FCC \subseteq \FDD$, for every $X \in \Facnei_\FCC(s)$ there exists $\Delta \subseteq \Facnei_\FDD(s)$ such that
\[
X = \bigcup \Delta.
\]

\emph{Intuitively, every actual power of a smaller coalition can be decomposed into (possibly many) actual powers of a larger coalition.}

\end{enumerate}

\end{definition}

In the two-agent setting, we use ``\(\PAC\)-representative'' for actual neighborhood frames satisfying the four intrinsic conditions that characterize those induced by general concurrent game frames. 
For arbitrary finite agent sets, the same four conditions are only necessary but not sufficient; this is discussed in Section~\ref{sec:weak-finite-agent-pac-representativeness-not-sufficient-pac-representability}. Note that ``$\PAC$-representative'' and ``$\PAC$-representable'' are defined in different ways.

\begin{definition}[Seriality, independence, and determinism of $\PAC$-representative actual neighborhood frames]
\label{def:seriality-independence-determinism-pac-actual-neighborhood-frames}

Let \(\FAG=\{a,b\}\), and let $\FACNF = (\FST, \{\Facnei_\FCC \mid \FCC \subseteq \FAG\})$ be an $\PAC$-representative actual neighborhood frame.
We say that:

\begin{itemize}

\item $\FACNF$ is \Fdefs{$\PAC$-serial} if for all $s \in \FST$ and all $\FCC \subseteq \FAG$, we have $\Facnei_\FCC(s) \neq \emptyset$.

\item $\FACNF$ is \Fdefs{$\PAC$-independent} if, for all $s \in \FST$ and all $\FCC, \FDD \subseteq \FAG$ with $\FCC \cap \FDD = \emptyset$, whenever $X \in \Facnei_\FCC(s)$ and $Y \in \Facnei_\FDD(s)$, there exists $Z \in \Facnei_{\FCC \cup \FDD}(s)$ such that $Z \subseteq X \cap Y$.

\emph{Intuitively, for disjoint coalitions, any two actual powers can be jointly refined to an actual power of their union that is compatible with both.}

\item $\FACNF$ is \Fdefs{$\PAC$-deterministic} if for all $s \in \FST$ and all $X \in \Facnei_\FAG(s)$, $X$ is a singleton.

\end{itemize}

\end{definition}

As before, we let the three symbols $\mathtt{S}$, $\mathtt{I}$, and $\mathtt{D}$ denote seriality, independence, and determinism, respectively. We also reuse the set
\[
\FES = \{\epsilon,\ \mathtt{S},\ \mathtt{I},\ \mathtt{D},\ \mathtt{SI},\ \mathtt{SD},\ \mathtt{ID},\ \mathtt{SID}\},
\]
whose elements serve as labels for the eight possible combinations of these three properties.

For each \(\FXX\in\FES\) and two-agent \(\PAC\)-representative actual neighborhood frame \(\FACNF\), we say that \(\FACNF\) is an \Fdefs{\(\FXX\)-frame} if it satisfies the \(\PAC\)-properties corresponding to \(\FXX\).

\paragraph{Remarks on an alternative definition of independence.}

One may wonder why independence for an $\PAC$-representative actual neighborhood frame
$\FACNF = (\FST, \{\Facnei_\FCC \mid \FCC \subseteq \FAG\})$ is not defined in the following way, in the spirit of STIT logic \cite{belnap_seeing_1990,belnap_in_1993}:

\begin{quote}
\Fdefs{STIT-independence:} \textit{for all $s \in \FST$ and all $\FCC, \FDD \subseteq \FAG$ with $\FCC \cap \FDD = \emptyset$, whenever $X \in \Facnei_\FCC(s)$ and $Y \in \Facnei_\FDD(s)$, we have $X \cap Y \neq \emptyset$.}
\end{quote}

Note that STIT-independence is strictly weaker than the independence condition defined above. If we were to adopt STIT-independence in place of our notion of independence, we would not obtain the desired correspondence: there exist $\PAC$-representative actual neighborhood frames that are \emph{independent} (in the STIT sense) but do not $\PAC$-represent any \emph{independent} general concurrent game frames. The proof of this is given in Section~\ref{app:pac-independence-not-stitut-independence} in the Appendix.

\subsection{Auxiliary facts about $\PAC$-representative actual neighborhood frames}
\label{subsec:auxiliary-facts-pac-representative-actual-neighborhood-frames}

In this subsection, we collect several basic facts about $\PAC$-representative actual neighborhood frames. 
These facts are useful both for intuition and for later technical developments.

The first fact shows that the set of states reachable from a given state is independent of the coalition:
all coalitions share the same set of successors of that state.

\begin{fact}
\label{fact:enabling-same-successors}

Let $\FACNF = (\FST, \{\Facnei_\FCC \mid \FCC \subseteq \FAG\})$ be an $\PAC$-representative actual neighborhood frame.
For all $s\in \FST$ and all $\FCC \subseteq \FAG$, we have
\[
\bigcup \Facnei_\FCC(s) = \bigcup \Facnei_\emptyset (s).
\]

\end{fact}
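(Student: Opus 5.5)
We show the two inclusions $\bigcup \Facnei_\FCC(s) \subseteq \bigcup \Facnei_\emptyset(s)$ and $\bigcup \Facnei_\emptyset(s) \subseteq \bigcup \Facnei_\FCC(s)$ separately. Each follows from one of the two structural conditions in Definition~\ref{def:pac-representative-actual-neighborhood-frames}, applied with the pair of coalitions $\emptyset \subseteq \FCC$. This pair is always comparable, so the conditions apply directly for every $\FCC \subseteq \FAG$.

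For the first inclusion, take $t \in \bigcup \Facnei_\FCC(s)$, so $t \in X$ for some $X \in \Facnei_\FCC(s)$. Actual power inclusion, with $\emptyset \subseteq \FCC$, gives some $Y \in \Facnei_\emptyset(s)$ with $X \subseteq Y$. Hence $t \in Y \subseteq \bigcup \Facnei_\emptyset(s)$.

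For the second inclusion, take $t \in \bigcup \Facnei_\emptyset(s)$, so $t \in X$ for some $X \in \Facnei_\emptyset(s)$. Actual power decomposition, again with $\emptyset \subseteq \FCC$, gives some $\Delta \subseteq \Facnei_\FCC(s)$ with $X = \bigcup \Delta$. Since $t \in X$, there is some $Z \in \Delta$ with $t \in Z$. Then $Z \in \Facnei_\FCC(s)$, so $t \in \bigcup \Facnei_\FCC(s)$.

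There is no real obstacle here. The one thing to check is the degenerate case where $\Facnei_\emptyset(s)$ or $\Facnei_\FCC(s)$ is empty. In that case the arguments above show that the other neighborhood is empty too, and both unions equal $\emptyset$. Neither liveness nor actual triviality of the empty coalition is needed, and the argument does not depend on there being exactly two agents.
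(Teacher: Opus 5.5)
Your proof is correct and is the same as the paper's: actual power inclusion gives $\bigcup \Facnei_\FCC(s) \subseteq \bigcup \Facnei_\emptyset(s)$ and actual power decomposition gives the reverse inclusion, both applied to $\emptyset \subseteq \FCC$. One small inaccuracy in your closing remark: if $\Facnei_\FCC(s)=\emptyset$, decomposition only forces $\Facnei_\emptyset(s)\subseteq\{\emptyset\}$, and concluding that it is empty needs liveness. The two unions still agree, and your elementwise argument already covers this case, so no separate treatment is needed.
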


\begin{proof}

Fix $s\in \FST$ and $\FCC \subseteq \FAG$.

\smallskip\noindent\textit{($\subseteq$)}
Let $x \in \bigcup \Facnei_\FCC(s)$. Then $x\in X$ for some $X \in \Facnei_\FCC(s)$.
By \emph{actual power inclusion} of $\FACNF$, there exists $Y \in \Facnei_\emptyset (s)$ such that $X \subseteq Y$.
Hence $x\in Y \subseteq \bigcup \Facnei_\emptyset (s)$.

\smallskip\noindent\textit{($\supseteq$)}
Let $x \in \bigcup \Facnei_\emptyset (s)$. Then $x\in X$ for some $X \in \Facnei_\emptyset (s)$.
By \emph{actual power decomposition} of $\FACNF$, there exists $\Delta \subseteq \Facnei_\FCC(s)$ such that
$X = \bigcup \Delta$.
Therefore $x\in Z$ for some $Z \in \Delta \subseteq \Facnei_\FCC(s)$, so
$x \in \bigcup \Facnei_\FCC(s)$.

\end{proof}

The next fact shows that emptiness of the set of actual powers at a state does not depend on the coalition:
either every coalition has actual power at $s$, or none does.

\begin{fact}
\label{fact:either-all-empty-or-none-empty}

Let $\FACNF = (\FST, \{\Facnei_\FCC \mid \FCC \subseteq \FAG\})$ be an $\PAC$-representative actual neighborhood frame.
For all $s\in \FST$ and all $\FCC \subseteq \FAG$,
\[
\Facnei_\FCC(s)=\emptyset \quad \text{iff} \quad \Facnei_\emptyset(s)=\emptyset.
\]

\end{fact}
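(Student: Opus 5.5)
The plan is to derive the claim directly from Fact~\ref{fact:enabling-same-successors} together with liveness. The key observation is that, by liveness, a family $\Facnei_\FCC(s)$ of actual powers is empty exactly when its union $\bigcup \Facnei_\FCC(s)$ is empty. Fact~\ref{fact:enabling-same-successors} says that this union does not depend on $\FCC$, so emptiness does not depend on $\FCC$ either.

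Concretely, fix $s \in \FST$ and $\FCC \subseteq \FAG$. First I establish the auxiliary equivalence
\[
\Fln = \emptyset \quad\text{iff}\quad \bigcup \Fln = \emptyset
\]
for $\Fln = \Facnei_\FDD(s)$ and any $\FDD \subseteq \FAG$. The left-to-right direction is trivial. For the right-to-left direction, suppose $\Fln$ contained some $X$. Then $X \subseteq \bigcup \Fln = \emptyset$, so $X = \emptyset$, which liveness forbids.

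Applying this equivalence to $\FCC$ and to $\emptyset$, and inserting the equality $\bigcup \Facnei_\FCC(s) = \bigcup \Facnei_\emptyset(s)$ from Fact~\ref{fact:enabling-same-successors}, gives the chain
\[
\Facnei_\FCC(s)=\emptyset \iff \bigcup\Facnei_\FCC(s)=\emptyset \iff \bigcup\Facnei_\emptyset(s)=\emptyset \iff \Facnei_\emptyset(s)=\emptyset.
\]

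As an alternative, one could argue directly from the conditions. Actual power inclusion (with $\emptyset \subseteq \FCC$) shows that $\Facnei_\FCC(s) \neq \emptyset$ implies $\Facnei_\emptyset(s) \neq \emptyset$. For the converse, take $X \in \Facnei_\emptyset(s)$. By liveness $X \neq \emptyset$, and actual power decomposition gives $\Delta \subseteq \Facnei_\FCC(s)$ with $\bigcup\Delta = X$. Then $\Delta$ must be nonempty, so $\Facnei_\FCC(s)$ is nonempty.

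There is no real obstacle. The only subtle point is that decomposition by itself allows $\Delta = \emptyset$, and this is exactly where liveness is needed.
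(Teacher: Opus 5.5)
Your proof is correct and is the paper's argument: liveness makes each neighborhood empty exactly when its union is empty, and Fact~\ref{fact:enabling-same-successors} makes the unions for $\Facnei_\FCC(s)$ and $\Facnei_\emptyset(s)$ equal. Your alternative direct argument via actual power inclusion and decomposition is also correct; it just inlines the proof of Fact~\ref{fact:enabling-same-successors}.
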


\begin{proof}

Fix $s\in \FST$ and $\FCC \subseteq \FAG$.
By \emph{liveness} of $\FACNF$, neither $\Facnei_\FCC(s)$ nor $\Facnei_\emptyset(s)$ contains the empty set.
Hence $\Facnei_\FCC(s)=\emptyset$ iff $\bigcup \Facnei_\FCC(s)=\emptyset$, and $\Facnei_\emptyset(s)=\emptyset$ iff $\bigcup \Facnei_\emptyset(s)=\emptyset$.
By Fact~\ref{fact:enabling-same-successors},
\[
\bigcup \Facnei_\FCC(s) = \bigcup \Facnei_\emptyset(s).
\]
Therefore $\bigcup \Facnei_\FCC(s)=\emptyset$ iff $\bigcup \Facnei_\emptyset(s)=\emptyset$, and the claim follows.

\end{proof}

The following fact shows that if the empty coalition has any actual power at a state $s$, then it has exactly one: namely, the overall successor set of~$s$.

\begin{fact}
\label{fact:empty-coalition-unique-overall-successor}

Let $\FACNF = (\FST, \{\Facnei_\FCC \mid \FCC \subseteq \FAG\})$ be an $\PAC$-representative actual neighborhood frame.
For all $s\in \FST$, if $\Facnei_\emptyset (s)\neq \emptyset$, then
\[
\Facnei_\emptyset (s) \;=\; \bigl\{\, \bigcup \Facnei_\FAG (s) \,\bigr\}.
\]

\end{fact}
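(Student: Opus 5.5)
The plan is to combine actual triviality of the empty coalition with Fact~\ref{fact:enabling-same-successors}. Fix $s \in \FST$ with $\Facnei_\emptyset(s) \neq \emptyset$.

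First, by \emph{actual triviality of the empty coalition}, $\Facnei_\emptyset(s)$ is a singleton, say $\Facnei_\emptyset(s) = \{X_0\}$. Then $\bigcup \Facnei_\emptyset(s) = X_0$.

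Next, apply Fact~\ref{fact:enabling-same-successors} with $\FCC = \FAG$. This gives
\[
\bigcup \Facnei_\FAG(s) = \bigcup \Facnei_\emptyset(s) = X_0.
\]
Hence $\Facnei_\emptyset(s) = \{X_0\} = \bigl\{\bigcup \Facnei_\FAG(s)\bigr\}$, which is the claim.

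There is no real obstacle. The only point to check is that Fact~\ref{fact:enabling-same-successors} holds for every coalition, in particular for $\FCC = \FAG$. Liveness is not needed here, since the nonemptiness hypothesis together with the singleton condition already pins down the unique element $X_0$.
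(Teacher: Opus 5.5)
Your proposal is correct and is essentially the paper's own proof: actual triviality of the empty coalition makes $\Facnei_\emptyset(s)$ a singleton, and Fact~\ref{fact:enabling-same-successors} with $\FCC=\FAG$ identifies its unique element as $\bigcup\Facnei_\FAG(s)$. Your side remark is also right, since Fact~\ref{fact:enabling-same-successors} relies only on actual power inclusion and actual power decomposition, so liveness plays no role.
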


\begin{proof}

Fix $s \in \FST$ and assume $\Facnei_\emptyset (s)\neq \emptyset$.
By \emph{actual triviality of the empty coalition}, $\Facnei_\emptyset (s)$ is a singleton.
By Fact~\ref{fact:enabling-same-successors}, we have
$\bigcup \Facnei_\FAG (s)=\bigcup \Facnei_\emptyset (s)$.
Hence the unique element of $\Facnei_\emptyset (s)$ must be $\bigcup \Facnei_\FAG (s)$.

\end{proof}

The final fact states that, when moving from a smaller coalition to a larger one, every actual power $X$ of the
smaller coalition can be recovered as the union of those actual powers of the larger coalition that are included in $X$.

\begin{fact}
\label{fact:x-union-delta-x}

Let $\FACNF = (\FST, \{\Facnei_\FCC \mid \FCC \subseteq \FAG\})$ be an $\PAC$-representative actual neighborhood frame.
For every $s \in \FST$, all $\FCC, \FDD \subseteq \FAG$ with $\FCC \subseteq \FDD$, and every $X \in \Facnei_\FCC(s)$,
\[
\bigcup \{Z \in \Facnei_\FDD (s) \mid Z \subseteq X\} = X.
\]

\end{fact}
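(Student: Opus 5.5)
The plan is to prove the two inclusions separately. Throughout we fix $s \in \FST$, coalitions $\FCC \subseteq \FDD \subseteq \FAG$, and $X \in \Facnei_\FCC(s)$, and we abbreviate
\[
\Delta_X = \{Z \in \Facnei_\FDD(s) \mid Z \subseteq X\}.
\]
Neither direction should need any condition beyond \emph{actual power decomposition}.

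\smallskip\noindent\textit{($\subseteq$)} This direction is immediate from the definition of $\Delta_X$. Every $Z \in \Delta_X$ satisfies $Z \subseteq X$, so the union of all of them is also contained in $X$. This holds even when $\Delta_X$ is empty, since then the union is $\emptyset \subseteq X$.

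\smallskip\noindent\textit{($\supseteq$)} Here we apply \emph{actual power decomposition} of $\FACNF$ to $X \in \Facnei_\FCC(s)$ and the pair $\FCC \subseteq \FDD$. This yields some $\Delta \subseteq \Facnei_\FDD(s)$ with $X = \bigcup \Delta$. Each $Z \in \Delta$ then satisfies $Z \subseteq \bigcup \Delta = X$, and $Z \in \Facnei_\FDD(s)$, so $Z \in \Delta_X$. Hence $\Delta \subseteq \Delta_X$, and therefore $X = \bigcup \Delta \subseteq \bigcup \Delta_X$.

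Combining the two inclusions gives the claimed equality. There is no real obstacle. The only point to note is that the decomposing family $\Delta$ supplied by the condition need not equal $\Delta_X$. We only use that $\Delta$ is contained in $\Delta_X$, which is exactly the monotonicity of unions that the argument above relies on.
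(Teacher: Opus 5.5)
Your proof is correct and follows essentially the same route as the paper: the ($\subseteq$) direction is immediate, and the ($\supseteq$) direction uses actual power decomposition to obtain a family whose members all lie inside $X$. The paper argues the second inclusion elementwise, whereas you phrase it as $\Delta \subseteq \Delta_X$ and use monotonicity of unions; this is only a difference of presentation.
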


\begin{proof}

Fix $s \in \FST$, coalitions $\FCC \subseteq \FDD \subseteq \FAG$, and $X \in \Facnei_\FCC(s)$.

\smallskip\noindent\textit{($\subseteq$)}
This is immediate since every $Z$ in the indexed family satisfies $Z\subseteq X$.

\smallskip\noindent\textit{($\supseteq$)}
Let $x \in X$. By \emph{actual power decomposition} of $\FACNF$, there exists $\Delta \subseteq \Facnei_\FDD (s)$ such that
$\bigcup \Delta = X$.
Thus $x \in Z$ for some $Z \in \Delta$, and necessarily $Z \subseteq X$.
Hence $x \in \bigcup \{Z \in \Facnei_\FDD (s) \mid Z \subseteq X\}$.

\end{proof}

\subsection{Actual havingness theorems for two-agent frames}
\label{subsec:actual-havingness-theorems-two-agents}

\begin{theorem}[Actual havingness theorem]
\label{thm:actual-havingness}

Assume \(\FAG=\{a,b\}\). Let $\FXX \in \FES$.
For every actual neighborhood frame, if it $\PAC$-represents a general concurrent game $\FXX$-frame, then it is an $\PAC$-representative actual neighborhood $\FXX$-frame.

\end{theorem}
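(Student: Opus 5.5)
We fix a general concurrent game $\FXX$-frame $\FGCGF$ and an actual neighborhood frame $\FACNF$ that $\PAC$-represents it, so that $\Facnei_\FCC(s)=\{\Fout_\FCC(s,\sigma_\FCC)\mid \sigma_\FCC\in\Fav_\FCC(s)\}$ for all $\FCC$ and $s$. The plan is to verify the four $\PAC$-representativeness conditions one by one directly from this identity, using the GCI- and ODA-conditions together with Fact~\ref{fact:gcgf-basic-consequences}, and then to check that each of seriality, independence and determinism of $\FGCGF$ transfers to its $\PAC$-counterpart.

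\emph{Actual triviality of the empty coalition} holds because $\FJA_\emptyset=\{\emptyset\}$, so $\Facnei_\emptyset(s)$ has at most the one element $\Fout_\emptyset(s,\emptyset)$. \emph{Liveness} is immediate from the ODA-condition: every available joint action has a nonempty outcome set. For \emph{actual power inclusion}, we take $X=\Fout_\FDD(s,\sigma_\FDD)$ with $\sigma_\FDD\in\Fav_\FDD(s)$ and set $\sigma_\FCC=\sigma_\FDD|_\FCC$. Outcome monotonicity gives $X\subseteq\Fout_\FCC(s,\sigma_\FCC)$. Since $X\neq\emptyset$, this set is nonempty, so $\sigma_\FCC\in\Fav_\FCC(s)$ by ODA, and it is the required $Y$.

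For \emph{actual power decomposition}, we take $X=\Fout_\FCC(s,\sigma_\FCC)$ and let $\Delta=\{\Fout_\FDD(s,\sigma_\FDD)\mid\sigma_\FDD\in\Fav_\FDD(s),\ \sigma_\FDD|_\FCC=\sigma_\FCC\}$. The task is to show $\bigcup\Delta=X$. The inclusion $\subseteq$ follows from outcome monotonicity. For $\supseteq$, each $x\in X$ lies in some $\Fout_\FAG(s,\sigma_\FAG)$ with $\sigma_\FAG|_\FCC=\sigma_\FCC$ and $\sigma_\FAG\in\Fav_\FAG(s)$, by the alternative GCI-condition. Then $x\in\Fout_\FDD(s,\sigma_\FAG|_\FDD)$ by monotonicity, this set is nonempty and hence available, and it restricts to $\sigma_\FCC$. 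This is the only step that needs a little bookkeeping. It does not use the two-agent assumption, and the same holds for the previous steps.

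For the $\FXX$-properties, seriality is immediate: $\Fav_\FCC(s)\neq\emptyset$ gives $\Facnei_\FCC(s)\neq\emptyset$. Determinism is immediate as well, since $\Facnei_\FAG(s)$ consists of the singletons $\Fout_\FAG(s,\sigma_\FAG)$.

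The step requiring most care is independence. Given disjoint $\FCC,\FDD$ with $X=\Fout_\FCC(s,\sigma_\FCC)$ and $Y=\Fout_\FDD(s,\sigma_\FDD)$ for available actions, independence of $\FGCGF$ yields $\sigma_\FCC\cup\sigma_\FDD\in\Fav_{\FCC\cup\FDD}(s)$. We take $Z=\Fout_{\FCC\cup\FDD}(s,\sigma_\FCC\cup\sigma_\FDD)$. Outcome monotonicity, applied once with restriction to $\FCC$ and once with restriction to $\FDD$, gives $Z\subseteq X\cap Y$. So $Z\in\Facnei_{\FCC\cup\FDD}(s)$ is as required. Since each property is verified separately, the conclusion holds for every combination $\FXX\in\FES$.
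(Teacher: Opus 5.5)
Your proposal is correct and follows essentially the same route as the paper: you transfer $\mathtt{S}$, $\mathtt{I}$, $\mathtt{D}$ and verify the four representativeness conditions directly from $\Facnei_\FCC=\FACEF_\FCC$, using the ODA-condition, the GCI-condition and outcome monotonicity. The only cosmetic difference is in actual power decomposition: you index $\Delta$ by the available $\FDD$-extensions of $\sigma_\FCC$, so $\bigcup\Delta\subseteq X$ follows from monotonicity alone, whereas the paper indexes $\Delta$ by restrictions of available grand-coalition extensions and invokes the GCI-condition for that inclusion.
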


\begin{proof}

Let $\FACNF = (\FST, \{\Facnei_\FCC \mid \FCC \subseteq \FAG\})$ be an actual neighborhood frame.
Assume that $\FACNF$ $\PAC$-represents a general concurrent game $\FXX$-frame
\[
\FGCGF= (\FST, \FAC, \{\Fav_\FCC \mid \FCC \subseteq \FAG\}, \{\Fout_\FCC \mid \FCC \subseteq \FAG\}).
\]
Then for every coalition $\FCC \subseteq \FAG$ and every state $s\in \FST$, we have $\Facnei_\FCC(s)=\FACEF_\FCC(s)$.

\paragraph{We show that $\FACNF$ is an $\FXX$-frame.}

We verify that each of the properties $\mathtt{S}$, $\mathtt{I}$, and $\mathtt{D}$ is preserved under $\PAC$-representation. It then follows, since $\FGCGF$ is an $\FXX$-frame, that $\FACNF$ is an $\FXX$-frame.

\begin{itemize}

\item \textbf{Seriality.}
Assume that $\FGCGF$ is serial. Fix $s \in \FST$ and $\FCC \subseteq \FAG$.
Then $\Fav_\FCC(s)\neq\emptyset$, and hence
\[
\FACEF_\FCC(s)
=\{\Fout_\FCC(s,\sigma_\FCC)\mid \sigma_\FCC\in\Fav_\FCC(s)\}\neq\emptyset.
\]
Therefore $\Facnei_\FCC(s)\neq\emptyset$, and thus $\FACNF$ is $\PAC$-serial.

\item \textbf{Independence.}
Assume that $\FGCGF$ is independent. Fix $s \in \FST$ and disjoint coalitions $\FCC,\FDD\subseteq \FAG$ with $\FCC\cap\FDD=\emptyset$.
Let $X\in\Facnei_\FCC(s)$ and $Y\in\Facnei_\FDD(s)$.
Since $\Facnei_\FCC=\FACEF_\FCC$ and $\Facnei_\FDD=\FACEF_\FDD$, there exist $\sigma_\FCC\in\Fav_\FCC(s)$ and $\sigma_\FDD\in\Fav_\FDD(s)$ such that
\[
\Fout_\FCC(s,\sigma_\FCC)=X
\quad\text{and}\quad
\Fout_\FDD(s,\sigma_\FDD)=Y.
\]
By independence of $\FGCGF$, we have $\sigma_\FCC\cup\sigma_\FDD \in \Fav_{\FCC\cup\FDD}(s)$.
Let $Z=\Fout_{\FCC\cup\FDD}(s,\sigma_\FCC\cup\sigma_\FDD)$.
Then $Z\in\FACEF_{\FCC\cup\FDD}(s)= \Facnei_{\FCC\cup\FDD}(s)$.
Moreover, since $\sigma_\FCC\subseteq \sigma_\FCC\cup\sigma_\FDD$ and $\sigma_\FDD\subseteq \sigma_\FCC\cup\sigma_\FDD$, by outcome monotonicity, we have
$Z\subseteq X$ and $Z\subseteq Y$, and hence $Z\subseteq X\cap Y$.
Therefore $\FACNF$ is $\PAC$-independent.

\item \textbf{Determinism.}
Assume $\FGCGF$ is deterministic. Fix $s\in\FST$ and $X\in\Facnei_\FAG(s)$.
Then $X\in\FACEF_\FAG(s)$, so there exists $\sigma_\FAG\in\Fav_\FAG(s)$ such that $\Fout_\FAG(s,\sigma_\FAG)=X$.
By determinism of $\FGCGF$, $\Fout_\FAG(s,\sigma_\FAG)$ is a singleton; hence so is $X$.
Therefore $\FACNF$ is $\PAC$-deterministic.

\end{itemize}

\paragraph{We show that $\FACNF$ satisfies the four $\PAC$-representativeness conditions.}

\begin{enumerate}
\item \textbf{Actual triviality of the empty coalition.}
Fix $s\in\FST$.
Note that
\[
\Fout_\emptyset(s,\emptyset)= \bigcup \{\Fout_\FAG (s,\sigma_\FAG ) \mid \sigma_\FAG \in \FJA_\FAG \}.
\]
If $\Fout_\emptyset(s,\emptyset)\neq\emptyset$, then by definition of $\FACEF_\emptyset$ we have
\[
\Facnei_\emptyset(s)=\FACEF_\emptyset(s)=\{\Fout_\emptyset(s,\emptyset)\},
\]
so $\Facnei_\emptyset(s)$ is a singleton. If $\Fout_\emptyset(s,\emptyset)=\emptyset$, then $\Fav_\emptyset(s)=\emptyset$ and hence
$\Facnei_\emptyset(s)=\FACEF_\emptyset(s)=\emptyset$.

\item \textbf{Liveness.}
Fix $s\in\FST$ and $\FCC\subseteq\FAG$.
Since
\[
\FACEF_\FCC(s)=\{\Fout_\FCC(s,\sigma_\FCC)\mid \sigma_\FCC\in\Fav_\FCC(s)\}
\]
and each $\sigma_\FCC\in\Fav_\FCC(s)$ satisfies $\Fout_\FCC(s,\sigma_\FCC)\neq\emptyset$, we obtain
$\emptyset\notin \FACEF_\FCC(s)$.
Thus $\emptyset\notin \Facnei_\FCC(s)$.

\item \textbf{Actual power inclusion.}
Fix $s\in\FST$ and coalitions $\FCC\subseteq\FDD\subseteq\FAG$.
Let $X\in\Facnei_\FDD(s)=\FACEF_\FDD(s)$.
Then there exists $\sigma_\FDD\in\Fav_\FDD(s)$ such that $\Fout_\FDD(s,\sigma_\FDD)=X$.
Write $\sigma_\FDD=\sigma_\FCC\cup\sigma_{\FDD\setminus\FCC}$ with $\sigma_\FCC\in \FJA_\FCC$ and $\sigma_{\FDD\setminus\FCC}\in \FJA_{\FDD\setminus\FCC}$.
Since $\sigma_\FDD$ is available for $\FDD$, its restriction $\sigma_\FCC$ is available for $\FCC$, i.e., $\sigma_\FCC\in\Fav_\FCC(s)$.
Moreover,
\[
\Fout_\FDD(s,\sigma_\FDD)\subseteq \Fout_\FCC(s,\sigma_\FCC).
\]
Let $Y=\Fout_\FCC(s,\sigma_\FCC)$. Then $Y\in\FACEF_\FCC(s)=\Facnei_\FCC(s)$ and $X\subseteq Y$, as required.

\item \textbf{Actual power decomposition.}
Fix $s\in\FST$ and coalitions $\FCC\subseteq\FDD\subseteq\FAG$.
Let $X\in\Facnei_\FCC(s)=\FACEF_\FCC(s)$.
Then there exists $\sigma_\FCC\in\Fav_\FCC(s)$ such that $\Fout_\FCC(s,\sigma_\FCC)=X$.
Let
\[
\Theta=\{\sigma_\FAG\in\Fav_\FAG(s)\mid \sigma_\FCC\subseteq\sigma_\FAG\}.
\]
By the GCI-condition of $\FGCGF$ and Fact \ref{fact:gcgf-basic-consequences},
\[
X=\Fout_\FCC(s,\sigma_\FCC)=\bigcup\{\Fout_\FAG(s,\sigma_\FAG)\mid \sigma_\FAG\in\Theta\}.
\]
Consider the family
\[
\Delta=\{\Fout_\FDD(s,\sigma_\FAG|_\FDD)\mid \sigma_\FAG\in\Theta\}.
\]
For each $\sigma_\FAG\in\Theta$, we have $\Fout_\FAG(s,\sigma_\FAG)\neq\emptyset$ and $\Fout_\FAG(s,\sigma_\FAG)\subseteq \Fout_\FDD(s,\sigma_\FAG|_\FDD)$, hence $\sigma_\FAG|_\FDD\in\Fav_\FDD(s)$ and therefore $\Fout_\FDD(s,\sigma_\FAG|_\FDD)\in\FACEF_\FDD(s)=\Facnei_\FDD(s)$.
Thus $\Delta\subseteq \Facnei_\FDD(s)$.

It remains to show that $X=\bigcup \Delta$.

\smallskip\noindent\textit{($\subseteq$)} Let $u\in X$. Then $u\in\Fout_\FAG(s,\sigma_\FAG)$ for some $\sigma_\FAG\in\Theta$, hence
$u\in \Fout_\FDD(s,\sigma_\FAG|_\FDD)$ and thus $u\in\bigcup\Delta$.

\smallskip\noindent\textit{($\supseteq$)} Let $u\in\bigcup\Delta$. Then $u\in\Fout_\FDD(s,\sigma_\FAG|_\FDD)$ for some $\sigma_\FAG\in\Theta$.
By the GCI-condition of $\FGCGF$, there exists $\lambda_\FAG \in \FJA_\FAG$ such that
$\sigma_\FAG|_\FDD\subseteq \lambda_\FAG$ and $u\in\Fout_\FAG(s,\lambda_\FAG)$.
Since $\sigma_\FCC\subseteq \sigma_\FAG|_\FDD\subseteq \lambda_\FAG$ and $\lambda_\FAG$ is available, we have $\lambda_\FAG\in\Theta$.
Hence
\[
u\in\bigcup\{\Fout_\FAG(s,\tau_\FAG)\mid \tau_\FAG\in\Theta\}= X.
\]

\end{enumerate}

This completes the proof.

\end{proof}

\subsection{Actual enoughness theorems for two-agent frames}
\label{subsec:actual-enoughness-theorems-two-agents}

\begin{theorem}[Actual enoughness theorem]
\label{thm:actual-enoughness}

Assume \(\FAG=\{a,b\}\). Let $\FXX\in\FES$.
Every $\PAC$-representative actual neighborhood $\FXX$-frame $\PAC$-represents some general concurrent game $\FXX$-frame.

\end{theorem}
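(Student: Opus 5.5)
The plan is to build, from a given $\PAC$-representative actual neighborhood $\FXX$-frame $\FACNF=(\FST,\{\Facnei_\FCC\})$ with $\FAG=\{a,b\}$, a general concurrent game frame by specifying only $\Fout_\FAG$ and letting the GCI- and ODA-conditions determine $\Fout_\FCC$ and $\Fav_\FCC$ for the other coalitions. I take the fixed action set $\FAC=\mathcal{P}(\FST)\times\mathcal{P}(\FST)\times\{0,1\}$. An action of $a$ is a triple $(X,Z_a,n)$; the intended reading is ``$a$ realises its actual power $X$ and proposes the grand-coalition power $Z_a$.'' The actions of $b$ are triples $(Y,Z_b,m)$ with the symmetric reading.

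Call $(X,Z_a,n)$ \emph{well-formed at $s$} if $X\in\Facnei_a(s)$, $Z_a\in\Facnei_\FAG(s)$ and $Z_a\subseteq X$, and define well-formedness for $b$ symmetrically. If either action is not well-formed, the outcome is $\emptyset$. Otherwise the parity of $n+m$ selects a decider: $Z^*=Z_a$ if $n+m$ is even, and $Z^*=Z_b$ if it is odd. If $Z^*\subseteq X\cap Y$, then $\Fout_\FAG(s,\cdot)=Z^*$. Otherwise the outcome is $g_s(X,Y)$, where $g_s$ is a choice function picking some $Z\in\Facnei_\FAG(s)$ with $Z\subseteq X\cap Y$. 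Such a $Z$ exists by $\PAC$-independence (take $\FCC=a$, $\FDD=b$). If no such $Z$ exists, which can only happen when the frame is not independent, the outcome is $\emptyset$.

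The key invariant is that every nonempty grand outcome is an element $Z\in\Facnei_\FAG(s)$ with $Z\subseteq X\cap Y$. The verification proceeds in the following order.

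\begin{enumerate}
\item $\FACEF_\FAG(s)=\Facnei_\FAG(s)$. The inclusion $\subseteq$ is the invariant. For $\supseteq$, given $Z\in\Facnei_\FAG(s)$, actual power inclusion yields $X\in\Facnei_a(s)$ and $Y\in\Facnei_b(s)$ with $Z\subseteq X$ and $Z\subseteq Y$. The profile $((X,Z,0),(Y,Z,0))$ then produces $Z$.
\item $\FACEF_a(s)=\Facnei_a(s)$.
\begin{itemize}
\item Fix a well-formed $(X,Z_a,n)$. By the GCI-condition and the invariant, $\Fout_a\subseteq X$.
\item Conversely, take any $Z\in\Facnei_\FAG(s)$ with $Z\subseteq X$. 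Choose $Y\supseteq Z$ in $\Facnei_b(s)$ by inclusion, and let $b$ play $(Y,Z,m)$ with $m$ chosen so that $b$ decides. This profile yields $Z$. Hence $\Fout_a$ contains $\bigcup\{Z\in\Facnei_\FAG(s)\mid Z\subseteq X\}$, which equals $X$ by Fact~\ref{fact:x-union-delta-x}.
\item Every $X\in\Facnei_a(s)$ actually arises this way. By liveness and decomposition, $X$ contains some $Z\in\Facnei_\FAG(s)$, so a well-formed action with first component $X$ exists.
\item Ill-formed actions have empty outcome, so they are unavailable by the ODA-condition.
\end{itemize}
\item $\FACEF_b(s)=\Facnei_b(s)$, by the symmetric argument with the parity bit now giving $a$ the decision.
\item $\FACEF_\emptyset(s)=\Facnei_\emptyset(s)$. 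Here $\Fout_\emptyset(s,\emptyset)=\bigcup\Facnei_\FAG(s)$. If this set is nonempty, it is the unique element of $\Facnei_\emptyset(s)$ by Fact~\ref{fact:empty-coalition-unique-overall-successor}. If it is empty, both sides are empty by Fact~\ref{fact:either-all-empty-or-none-empty} and liveness.
\item The three properties transfer as follows.
\begin{itemize}
\item Determinism: grand outcomes are elements of $\Facnei_\FAG(s)$, which are singletons.
\item Seriality: by Fact~\ref{fact:either-all-empty-or-none-empty} and the step above, every $\Facnei_\FCC(s)\ne\emptyset$ yields an available action for $\FCC$.
\item Independence: for $\FCC,\FDD$ disjoint, the only nontrivial case is $\{a\},\{b\}$. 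There, any two available (hence well-formed) actions produce at least $g_s(X,Y)\neq\emptyset$, so their union is available. The cases involving $\emptyset$ are immediate.
\end{itemize}
\end{enumerate}

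The main obstacle is making $a$'s outcome exactly $X$ while $b$'s outcome is exactly $Y$ and grand outcomes stay inside $\Facnei_\FAG(s)$ rather than becoming unions of such sets. The parity bit, in the style of Pauly's construction, is what resolves this: whichever agent is held fixed, the other can take over the decision and thereby sweep through all grand powers inside the fixed agent's set. A secondary point is independence, which is secured by the fallback $g_s$. That fallback keeps every outcome inside $X\cap Y$, so it does not disturb the computations of $\Fout_a$ and $\Fout_b$.
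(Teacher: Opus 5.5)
Your proof is correct, and it takes a different route from the paper's. The paper works state by state with a local action set of ``names'' $\alpha_{n-X-Z}$, $\beta_{n-Y-Z}$ at three levels $n\in\{1,2,3\}$, and adds joint actions in stages. In Step~2, level-$n$ names of one agent are paired cyclically with level-$(n+1)$ names of the other, so that each individual name meets the GCI-condition. Every added pair is ``safe'': its outcome lies in $\Facnei_\FAG$ and inside both individual outcomes. Step~3 then pairs all remaining compatible names, and this is what delivers independence.

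You instead give a single closed-form grand-coalition outcome function on the global set $\mathcal{P}(\FST)\times\mathcal{P}(\FST)\times\{0,1\}$. You then let GCI and ODA \emph{define} $\Fout_\FCC$ and $\Fav_\FCC$ for the other coalitions. So those two frame conditions hold by construction, rather than by the paper's ``routine to verify''.

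Both constructions resolve the same tension. A joint action has one outcome, yet it must serve the GCI sweep of both of its components. Your parity bit does this with two levels: pairs with $n+m$ odd serve $a$'s sweep, and pairs with $n+m$ even serve $b$'s. The paper's cyclic third level is therefore unnecessary. Likewise, your fallback $g_s$ folds the paper's Step~3 into the definition. Your invariant --- every nonempty grand outcome is some $Z\in\Facnei_\FAG(s)$ with $Z\subseteq X\cap Y$ --- does the work of the paper's safety bookkeeping.

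The paper's version gives a more visibly name-based picture, in which every available action labels an actual power and the role of the levels is explicit. Yours gives a uniform definition that is easy to check, and it avoids any question of how the paper's ``repeat for all remaining'' steps should be read when the neighborhoods are infinite. Both yield a finite action set when $\FST$ is finite.

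One small point to tighten is your empty-coalition item. Fact~\ref{fact:empty-coalition-unique-overall-successor} presupposes $\Facnei_\emptyset(s)\neq\emptyset$. You should first obtain this from $\Facnei_\FAG(s)\neq\emptyset$ via Fact~\ref{fact:either-all-empty-or-none-empty}.
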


The proof is somewhat involved; we give an illustrative example in Section~\ref{app:full-example-actual-enoughness} in the Appendix.

\begin{proof}
~

\paragraph{Analysis of the task.}

Let $\FACNF = (\FST, \{\Facnei_\FCC \mid \FCC \subseteq \FAG\})$ be an $\PAC$-representative actual neighborhood $\FXX$-frame.
We aim to construct a general concurrent game $\FXX$-frame
\[
\FGCGF = (\FST, \FAC, \{\Fav_\FCC \mid \FCC \subseteq \FAG\}, \{\Fout_\FCC \mid \FCC \subseteq \FAG\})
\]
such that $\FGCGF$ is $\PAC$-representable by $\FACNF$.

Note that general concurrent game $\FXX$-frames impose no cross-state constraints. Hence a global frame can be obtained by assembling local components.
Therefore, it suffices to proceed as follows: for each $s \in \FST$, starting from the local neighborhood structure
\[
\Fln \;=\; \{\Facnei_\FCC(s) \mid \FCC \subseteq \FAG\},
\]
we construct a local structure
\[
\Flg \;=\; \bigl(\FAC_s,\{\Fav_\FCC(s)\mid \FCC\subseteq\FAG\},\{\Fout_\FCC(s,\cdot)\mid \FCC\subseteq\FAG\}\bigr)
\]
satisfying the following conditions.

\begin{enumerate}
\item \label{item:local-component}
$\Flg$ is a local component (called a \Fdefs{local game}) of a general concurrent game frame:

\begin{enumerate}
\item $\FAC_s$ is a set of actions.
\item For every $\FCC \subseteq \FAG$, $\Fav_\FCC(s) \subseteq \FJA_\FCC$.
\item For every $\FCC \subseteq \FAG$ and every $\sigma_\FCC \in \FJA_\FCC$, $\Fout_\FCC(s,\sigma_\FCC) \subseteq \FST$.
\item For every $\FCC \subseteq \FAG$:
\[
\Fav_\FCC(s)=\{\sigma_\FCC\in\FJA_\FCC \mid \Fout_\FCC(s,\sigma_\FCC)\neq\emptyset\}.
\]
\item For every $\FCC \subseteq \FAG$ and every $\sigma_\FCC \in \FJA_\FCC$, the action $\sigma_\FCC$ satisfies the GCI-condition at $s$:
\[
\Fout_\FCC(s,\sigma_\FCC)
=\bigcup \{\Fout_\FAG(s,\sigma_\FAG)\mid \sigma_\FAG\in\Fav_\FAG \text{ and } \sigma_\FCC \subseteq \sigma_\FAG\}.
\]

\end{enumerate}

\item \label{item:representable-locally}
$\Flg$ is $\PAC$-representable by $\Fln$, i.e., for every $\FCC \subseteq \FAG$,
\[
\{\Fout_\FCC(s,\sigma_\FCC)\mid \sigma_\FCC\in\Fav_\FCC(s)\} \;=\; \Facnei_\FCC(s).
\]

\item \label{item:xx-properties}
$\Flg$ has the properties corresponding to $\FXX$.

\end{enumerate}

\smallskip
Fix $s \in \FST$. In the sequel, to simplify notation, we drop the explicit mention of $s$. More precisely:

\begin{itemize}
  \item We write $\Facnei_\FCC$ for $\Facnei_\FCC(s)$.
  \item We write $\FAC$ for $\FAC_s$.
  \item We write $\Fav_\FCC$ for $\Fav_\FCC(s)$.
  \item We write $\Fout_\FCC(\sigma_\FCC)$ for $\Fout_\FCC(s,\sigma_\FCC)$.
\end{itemize}

\paragraph{The trivial case.}

Assume $\Facnei_\emptyset=\emptyset$. By Fact~\ref{fact:either-all-empty-or-none-empty}, we have $\Facnei_\FCC=\emptyset$ for every $\FCC \subseteq \FAG$.
Define a game $\Flg=(\FAC,\{\Fav_\FCC\mid \FCC\subseteq\FAG\},\{\Fout_\FCC\mid \FCC\subseteq\FAG\})$ as follows:

\begin{itemize}
  \item Let $\FAC$ be any set (e.g., a singleton).
  \item For every $\FCC \subseteq \FAG$, let $\Fav_\FCC=\emptyset$.
  \item For every $\FCC \subseteq \FAG$ and every $\sigma_\FCC\in\FJA_\FCC$, let $\Fout_\FCC(\sigma_\FCC)=\emptyset$.
\end{itemize}

It is straightforward to verify that $\Flg$ satisfies the requirements in the analysis above.

Assume now that $\Facnei_\emptyset\neq\emptyset$.

\paragraph{Construction of a local game.}

By Fact~\ref{fact:either-all-empty-or-none-empty}, none of $\Facnei_a$, $\Facnei_b$, and $\Facnei_\FAG$ is empty.
Assume
\[
\Facnei_a=\{X_i\mid i\in I\},\qquad
\Facnei_b=\{Y_j\mid j\in J\},\qquad
\Facnei_\FAG=\{Z_k\mid k\in K\},
\]
where $I,J,K$ are index sets.

Initialize $\Flg=(\FAC,\{\Fav_\FCC\mid \FCC\subseteq\FAG\},\{\Fout_\FCC\mid \FCC\subseteq\FAG\})$ by setting $\FAC=\emptyset$ and leaving all $\Fav_\FCC$ and $\Fout_\FCC$ empty. We will add actions, availability, and outcomes step by step until $\Flg$ satisfies conditions~\ref{item:local-component}--\ref{item:xx-properties}.

Before the construction, we fix some terminology and conventions, and also give some quick preview of what we are going to do next:

\begin{itemize}
\item By \textbf{individual powers} we mean powers of $a$ or of $b$.

\item Actions in $\FAC$ will have the form $\alpha_{n-X-Z}$ and $\beta_{n-Y-Z}$, where
$n\in\{1,2,3\}$, $X\in\Facnei_a$, $Y\in\Facnei_b$, and $Z\in\Facnei_\FAG$ satisfy $Z\subseteq X$ and $Z\subseteq Y$.
We regard the index $n$ as indicating the \emph{level} (three groups). Joint actions of $\FAG$ will have the form $(\alpha_{n-X-Z},\beta_{m-Y-Z'})$

All $\alpha_{n-X-Z}$ will be available actions of $a$, and all $\beta_{n-Y-Z}$ will be available actions of $b$.
We will stipulate
\[
\Fout_a(\alpha_{n-X-Z})=X
\quad\text{and}\quad
\Fout_b(\beta_{n-Y-Z})=Y.
\]
We say that $\alpha_{n-X-Z}$ is a \Fdefs{name} of $X$, and that $\beta_{n-Y-Z}$ is a name of $Y$.

Available joint actions of $\FAG$ will be introduced gradually.
Not every such pair must be available, since independence need not hold.
Whenever we declare $(\alpha_{n-X-Z}, \ab \beta_{m-Y-Z'})$ to be available, we will simultaneously specify its outcome set
$\Fout_\FAG(\alpha_{n-X-Z},\ab \beta_{m-Y-Z'})=Z''$.
In that case we say that the joint action is a name of $Z''$.

\item For any available action $\alpha$ of $a$, saying that we \emph{make $\alpha$ satisfy the GCI-condition} means that we ensure
\[
\Fout_a(\alpha)
=
\bigcup \left\{ \Fout_\FAG(\alpha,\beta) \;\middle|\; \beta \text{ is available for } b \text{ and } (\alpha,\beta) \text{ is available for } \FAG \right\}.
\]

The same interpretation applies to any available action $\beta$ of $b$.

\item Suppose $\alpha$ is an available action of $a$ and $\beta$ is an available action of $b$. Saying that we \emph{pair $\alpha$ and $\beta$} means that we make $(\alpha,\beta)$ an available action of $\FAG$ and specify its outcome set.

\item 

Let $\alpha$ be an available action of $a$ and $\beta$ an available action of $b$.
Declaring $(\alpha,\beta)$ to be an available joint action with outcome $Z$ is called \Fdefs{safe} (for $\alpha$ and $\beta$) if
\[
Z \subseteq \Fout_a(\alpha)\cap \Fout_b(\beta).
\]
In that case, adding this pair cannot destroy the GCI-condition for $\alpha$ or for $\beta$: if either action already satisfies its GCI-condition before the addition, it still satisfies it afterwards.

\item Any joint action not explicitly declared available will be treated as unavailable, with outcome $\emptyset$.

\end{itemize}

\smallskip

The construction consists of three steps.

\paragraph{Step 1: introducing names for individual powers.}

For every $X\in\Facnei_a$, every $Z\in\Facnei_\FAG$ with $Z\subseteq X$, and every $n \in \{1,2,3\}$, let $\alpha_{n-X-Z}$ be an available action of $a$ and let $\Fout_a(\alpha_{n-X-Z})=X$.

For every $Y\in\Facnei_b$, every $Z\in\Facnei_\FAG$ with $Z\subseteq Y$, and every $n \in \{1,2,3\}$, let $\beta_{n-Y-Z}$ be an available action of $b$ and let $\Fout_b(\beta_{n-Y-Z})=Y$.

\emph{By liveness and actual power decomposition, for every $X\in\Facnei_a$ there is some $Z\in\Facnei_\FAG$ such that $Z\subseteq X$. Hence Step~1 introduces, at each of the three levels, an action $\alpha_{n-X-Z}$ naming $X$. The same argument applies to every $Y\in\Facnei_b$.}

\paragraph{Step 2: enforcing the GCI-condition for names of individual powers.}

For each $X\in\Facnei_a$ define $\Delta_X=\{Z\in\Facnei_\FAG\mid Z\subseteq X\}$, and for each $Y\in\Facnei_b$ define $\Delta_Y=\{Z\in\Facnei_\FAG\mid Z\subseteq Y\}$.

\emph{By Fact~\ref{fact:x-union-delta-x}, $\bigcup \Delta_X=X$ for each $X\in\Facnei_a$ and $\bigcup \Delta_Y=Y$ for each $Y\in\Facnei_b$; since $X$ and $Y$ are non-empty by liveness, each $\Delta_X$ and each $\Delta_Y$ is non-empty.}

\begin{enumerate}

\item \textbf{Make all Group~1 names of $a$ satisfy the GCI-condition.}
Pick an available action $\alpha_{1-X-Z}$ of $a$.
For every $Z'\in\Delta_X$, choose some $Y\in\Facnei_b$ such that $Z'\subseteq Y$, and declare $(\alpha_{1-X-Z}, \ab \beta_{2-Y-Z'})$ to be an available action of $\FAG$ with
\[
\Fout_\FAG(\alpha_{1-X-Z},\beta_{2-Y-Z'})=Z'.
\]
\emph{Note that the pairing is safe for $\alpha_{1-X-Z}$ and $\beta_{2-Y-Z'}$: $Z' \subseteq \Fout_a(\alpha_{1-X-Z})\cap \Fout_b(\beta_{2-Y-Z'}) = X \cap Y$. Note that such a $Y$ exists for every $Z'\in\Delta_X$, by actual power inclusion in $\Fln$. Note that for every $Z'\in\Delta_X$, $\beta_{2-Y-Z'}$ is a Group 2 action of $b$.}
\emph{After this, every $Z'\in\Delta_X$ has a name extending $\alpha_{1-X-Z}$, and $\alpha_{1-X-Z}$ satisfies the GCI-condition.}

Repeat this procedure for any remaining Group~1 action of $a$ not yet satisfying the GCI-condition.

\item \textbf{Make all Group~1 names of $b$ satisfy the GCI-condition.}
Proceed analogously for Group~1 actions of $b$ not yet satisfying the GCI-condition.
\emph{Here we only pair such actions of $b$ with Group~2 actions of $a$. Again, the pairing is safe in this step.}

\item \textbf{Make all Group~2 names of $a$ satisfy the GCI-condition.}
Proceed analogously.
\emph{Here we only pair such actions of $a$ with Group~3 actions of $b$. Again, the pairing is safe in this step.}

\item \textbf{Make all Group~2 names of $b$ satisfy the GCI-condition.}
Proceed analogously.
\emph{Here we only pair such actions of $b$ with Group~3 actions of $a$. Again, the pairing is safe in this step.}

\item \textbf{Make all Group~3 names of $a$ satisfy the GCI-condition.}
Proceed analogously.
\emph{Note that at this point, no Group~1 action of $b$ has been paired with a Group~3 action of $a$.
Here we only pair Group~3 actions of $a$ with Group~1 actions of $b$. Again, the pairing is safe in this step.}

\item \textbf{Make all Group~3 names of $b$ satisfy the GCI-condition.}
Proceed analogously.
\emph{Note that at this point, no Group~1 action of $a$ has been paired with a Group~3 action of $b$.
Here we only pair Group~3 actions of $b$ with Group~1 actions of $a$. Again, the pairing is safe in this step.}

\end{enumerate}

\emph{We now explain why three levels of individual actions are used. The idea behind Step~2 is to make the names of individual powers satisfy the GCI-condition one level at a time. For a name $\alpha$ of an individual power, this means ensuring that its outcome set is exactly the union of the outcome sets of its available joint extensions. To achieve this for names at one level, we add joint extensions using names of the other agent at the next level: Group~1 names are handled using Group~2 names, and Group~2 names are handled using Group~3 names. If this procedure were continued linearly, it would seem to require a further level to handle Group~3 names. The point of the three-level construction is that the process closes cyclically. When Group~3 names are treated, the corresponding Group~1 names of the other agent are already available and already satisfy their own GCI-condition, and they have not yet been paired with the Group~3 names currently under consideration. We can therefore pair Group~3 names with Group~1 names. These new pairings are safe, so they do not destroy the GCI-condition already achieved for the Group~1 names. Thus no fourth level is needed.}

\paragraph{Step 3: pair names of individual powers whenever possible.}

Pick any available actions $\alpha_{n-X-Z'}$ of $a$ and $\beta_{m-Y-Z''}$ of $b$ (at arbitrary levels $n,m\in\{1,2,3\}$) such that:

\begin{itemize}
\item They have not been paired yet, and
\item There exists $Z\in\Facnei_\FAG$ with $Z\subseteq \Fout_a(\alpha_{n-X-Z'})\cap \Fout_b(\beta_{m-Y-Z''})$.
\end{itemize}

Choose such a $Z$, and declare $(\alpha_{n-X-Z'},\beta_{m-Y-Z''})$ to be an available action of $\FAG$ with
\[
\Fout_\FAG(\alpha_{n-X-Z'},\beta_{m-Y-Z''})=Z.
\]
Repeat for all remaining pairs satisfying the conditions above.

\paragraph{Final local game.}

Finally, define $\Flg = (\FAC,\{\Fav_\FCC\mid \FCC\subseteq\FAG\},\{\Fout_\FCC\mid \FCC\subseteq\FAG\})$ as follows:

\begin{itemize}
\item Let $\FAC$ be the set of all actions specified at Step 1.
\item Let $\Fav_\emptyset=\{\emptyset\}$ and $\Fout_\emptyset(\emptyset)=\bigcup\Facnei_\FAG$.
\item Let $\Fav_a,\Fout_a$ be as specified above; similarly for $\Fav_b,\Fout_b$; and let $\Fav_\FAG,\Fout_\FAG$ be as specified above.
\end{itemize}

It is routine to verify that for every $\FCC \subseteq \FAG$,
\[
\Fav_\FCC=\{\sigma_\FCC\in\FJA_\FCC \mid \Fout_\FCC(\sigma_\FCC)\neq\emptyset\},
\]
and for every $\FCC \subseteq \FAG$ and every $\sigma_\FCC \in \FJA_\FCC$,
\[
\Fout_\FCC(\sigma_\FCC)
=
\bigcup \{\Fout_\FAG(\sigma_\FAG)\mid \sigma_\FAG\in\FJA_\FAG \text{ and } \sigma_\FCC \subseteq \sigma_\FAG\}.
\]
Hence $\Flg$ is a game.

\paragraph{We show that $\Flg$ is $\PAC$-representable by $\Fln$.}

It suffices to show that for every $\FCC \subseteq \FAG$,
\[
\{\Fout_\FCC(\sigma_\FCC)\mid \sigma_\FCC\in\Fav_\FCC\} \;=\; \Facnei_\FCC.
\]
The inclusion ``$\subseteq$'' is immediate from the construction.
For $\FCC=\{a\}$ or $\FCC=\{b\}$, the reverse inclusion is also immediate from Step~1.
It remains to show the reverse inclusion for $\FCC=\FAG$, i.e., that every $Z\in\Facnei_\FAG$ has a name.

Let $Z\in\Facnei_\FAG$. By actual power inclusion, $Z\subseteq X$ for some $X\in\Facnei_a$, hence $Z\in\Delta_X$.
Consider the action $\alpha_{1-X-Z}$ introduced in Step~1.
In Step~2 we choose some $Y\in\Facnei_b$ with $Z\subseteq Y$ and declare $(\alpha_{1-X-Z},\beta_{2-Y-Z})$ to be an available action of $\FAG$ with
$\Fout_\FAG(\alpha_{1-X-Z},\beta_{2-Y-Z})=Z$.
Thus $Z$ has a name, as required.
Therefore, $\Flg$ is $\PAC$-representable by $\Fln$.

\paragraph{We show that $\Flg$ has the properties corresponding to $\FXX$.}

\begin{itemize}
\item \textbf{Seriality.}
Assume $\mathtt{S}\in\FXX$. Then $\FACNF$ is $\PAC$-serial, hence $\Facnei_\FCC\neq\emptyset$ for all $\FCC\subseteq\FAG$.
In particular, pick $X\in\Facnei_a$ and choose $Z\in\Facnei_\FAG$ with $Z\subseteq X$.
Then $\alpha_{1-X-Z}$ is available for $a$, so $\Fav_a\neq\emptyset$.
Similarly, $\Fav_b\neq\emptyset$.
Also, since $\Facnei_\FAG\neq\emptyset$, there exists $\sigma_\FAG\in\Fav_\FAG$ with $\Fout_\FAG(\sigma_\FAG)\in\Facnei_\FAG$, hence $\Fav_\FAG\neq\emptyset$.
Finally, $\Fav_\emptyset=\{\emptyset\}$ by definition. Therefore $\Fav_\FCC\neq\emptyset$ for all $\FCC\subseteq\FAG$.

\item \textbf{Independence.}
Assume $\mathtt{I}\in\FXX$. Then $\FACNF$ is $\PAC$-independent. Let $\FCC,\FDD\subseteq\FAG$ with $\FCC\cap\FDD=\emptyset$, and let
$\sigma_\FCC\in\Fav_\FCC$ and $\sigma_\FDD\in\Fav_\FDD$.
We show that $\sigma_\FCC\cup\sigma_\FDD\in\Fav_{\FCC\cup\FDD}$.

If $\FCC=\emptyset$ or $\FDD=\emptyset$, the claim is immediate.
Otherwise, since $\FAG=\{a,b\}$, both $\FCC$ and $\FDD$ are singletons; without loss of generality let $\FCC=\{a\}$ and $\FDD=\{b\}$.
Write $\sigma_a=\alpha$ and $\sigma_b=\beta$, and let $\Fout_a(\alpha)=X$ and $\Fout_b(\beta)=Y$.
By $\PAC$-independence of $\FACNF$, there exists $Z\in\Facnei_\FAG$ such that $Z\subseteq X\cap Y$.
By the construction (already in Step 2 or else in Step 3), the pair $(\alpha,\beta)$ is (made) available for $\FAG$ with outcome $Z$, hence $(\alpha,\beta)\in\Fav_\FAG$.
Therefore $\sigma_a\cup\sigma_b = (\alpha, \beta) \in\Fav_{\{a,b\}}=\Fav_{\FCC\cup\FDD}$.

\item \textbf{Determinism.}
Assume $\mathtt{D}\in\FXX$. Then $\FACNF$ is $\PAC$-deterministic, so every $Z\in\Facnei_\FAG$ is a singleton.
By construction, for every $\sigma_\FAG\in\Fav_\FAG$ we have $\Fout_\FAG(\sigma_\FAG)\in\Facnei_\FAG$, hence $\Fout_\FAG(\sigma_\FAG)$ is a singleton.
\end{itemize}

\end{proof}

\paragraph{Remarks.}

The proof of Theorem~\ref{thm:actual-enoughness} also yields a finite-action version
of the two-agent enoughness direction. Indeed, if the state set of the given
$\PAC$-representative actual neighborhood $\FXX$-frame $\FACNF$ is finite, then
the construction uses only finitely many local action names at each state.
Since there are only finitely many states, the resulting general concurrent
game $\FXX$-frame $\FGCGF$ has a finite action set.

\section{Representing alpha powers: finite-agent havingness and two-agent enoughness}
\label{sec:representing-alpha-powers}

In this section, we proceed in four steps. First, we define, for arbitrary finite nonempty agent sets, eight classes of \(\PAL\)-representative alpha neighborhood frames. Second, we establish several basic facts about these frames. Third, we prove the alpha havingness theorems, still at this finite-agent level of generality. Fourth, under the additional assumption that there are exactly two agents, and using the actual enoughness results proved in the previous section, we prove the alpha enoughness theorems.

\subsection{Eight classes of $\PAL$-representative alpha neighborhood frames for finite agent sets}
\label{subsec:eight-classes-pal-representative-alpha-neighborhood-frames}

\begin{definition}[$\PAL$-representative alpha neighborhood frames]
\label{def:pal-representative-alpha-neighborhood-frames}

Let \(\FAG\) be a finite nonempty set of agents, and let $\FALNF = (\FST, \{\Falnei_\FCC \mid \FCC \subseteq \FAG\})$ be an alpha neighborhood frame. We say $\FALNF$ is \Fdefs{$\PAL$-representative} if the following conditions are satisfied:

\begin{enumerate}

\item \Fdefs{Alpha triviality of the empty coalition:} 
for all $s \in \FST$, if $\Falnei_\emptyset(s) \neq \emptyset$, then $\Falneinc_\emptyset(s)$ is a singleton.

\emph{Intuitively, this condition ensures that the empty coalition has at most one actual power. Specifically, $\bigcup \Falneinc_\emptyset(s)$ represents the set of successor states.}

\item \Fdefs{Liveness:} 
for all $s \in \FST$ and $\FCC \subseteq \FAG$, we have $\emptyset \notin \Falnei_\FCC(s)$.

\emph{This condition implies that no coalition can have the ``absurd'' alpha power $\emptyset$.}

\item \Fdefs{Groundedness of alpha powers:} 
for all $s \in \FST$, $\FCC \subseteq \FAG$, and $X \in \Falnei_\FCC(s)$, there exists $Y \in \Falnei_\FCC(s)$ such that $Y \subseteq \bigcup \Falneinc_\emptyset(s)$ and $Y \subseteq X$.

\emph{This condition expresses that every alpha power of a coalition $\FCC$ at a state $s$ is derived from some subset of the successor set of $s$.}

\item \Fdefs{Monotonicity of alpha neighborhoods:} 
for all $s \in \FST$ and $\FCC, \FDD \subseteq \FAG$ with $\FCC \subseteq \FDD$, we have $\Falnei_\FCC(s) \subseteq \Falnei_\FDD(s)$.

\emph{This condition ensures that all alpha powers of a smaller coalition are also alpha powers of any larger coalition.}

\end{enumerate}

\end{definition}

\begin{definition}[Seriality, independence, and determinism of $\PAL$-representative alpha neighborhood frames]
\label{def:seriality-independence-determinism-pal-alpha-neighborhood-frames}

Let \(\FAG\) be a finite nonempty set of agents, and let $\FALNF = (\FST, \{\Falnei_\FCC \mid \FCC \subseteq \FAG\})$ be an $\PAL$-representative alpha neighborhood frame. We define the following properties:

\begin{itemize}

\item $\FALNF$ is \Fdefs{$\PAL$-serial} if for all $s \in \FST$ and $\FCC \subseteq \FAG$, we have $\Falnei_\FCC(s) \neq \emptyset$.

\item $\FALNF$ is \Fdefs{$\PAL$-independent} if for all $s \in \FST$, $\FCC, \FDD \subseteq \FAG$ with $\FCC \cap \FDD = \emptyset$, $X \in \Falnei_\FCC(s)$ and $Y \in \Falnei_\FDD(s)$, we have $X \cap Y \in \Falnei_{\FCC \cup \FDD}(s)$.

\item $\FALNF$ is \Fdefs{$\PAL$-deterministic} if for all $s \in \FST$:

\begin{itemize}
    \item[(a)] all elements of $\Falneinc_\FAG(s)$ are singletons, and
    \item[(b)] $\bigcup \Falneinc_\emptyset(s) \subseteq \bigcup \Falneinc_\FAG(s)$.
\end{itemize}

\end{itemize}

\end{definition}

\noindent
We denote the three properties by the symbols $\mathtt{S}$ (seriality), $\mathtt{I}$ (independence), and $\mathtt{D}$ (determinism). The eight possible combinations of these properties are representable by the strings: $\epsilon$, $\mathtt{S}$, $\mathtt{I}$, $\mathtt{D}$, $\mathtt{SI}$, $\mathtt{SD}$, $\mathtt{ID}$, and $\mathtt{SID}$.

For each $\FXX \in \FES$ and alpha neighborhood frame $\FALNF$, we say that $\FALNF$ is an \Fdefs{$\FXX$-frame} if it satisfies the properties corresponding to $\FXX$.

\paragraph{Remarks on the notion of true playability.}

Pauly~\cite{pauly_modal_2002} and Goranko, Jamroga, and Turrini~\cite{goranko_strategic_2013} showed that the class of concurrent game frames---that is, the class of general concurrent game $\mathtt{SID}$-frames---is $\PAL$-representable by the class of so-called \emph{truly playable} alpha neighborhood frames. Moreover, it can be shown that the class of truly playable alpha neighborhood frames coincides with the class of $\PAL$-representative alpha neighborhood $\mathtt{SID}$-frames. A proof is provided in Section~\ref{app:true-playability-pal-representative-sid} in the Appendix.

\subsection{Auxiliary facts about $\PAL$-representative alpha neighborhood frames}
\label{subsec:auxiliary-facts-pal-representative-alpha-neighborhood-frames}

In this subsection, we present several basic facts about $\PAL$-representative alpha neighborhood frames. These facts are useful either for understanding $\PAL$-representative alpha neighborhood frames or for later developments.

The following fact shows that, at any state, either every coalition has an alpha power or no coalition has an alpha power.

\begin{fact}
\label{fact:alpha-empty-iff-empty}

Let $\FALNF = (\FST, \{\Falnei_\FCC \mid \FCC \subseteq \FAG\})$ be an $\PAL$-representative alpha neighborhood frame.
For all $s\in \FST$ and $\FCC \subseteq \FAG$, we have
\[
\Falnei_\FCC(s) = \emptyset
\quad\text{if and only if}\quad
\Falnei_\emptyset (s)= \emptyset.
\]

\end{fact}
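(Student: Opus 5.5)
Fix $s\in\FST$ and $\FCC\subseteq\FAG$. The plan is to prove the two directions separately, using only monotonicity of alpha neighborhoods and groundedness of alpha powers from Definition~\ref{def:pal-representative-alpha-neighborhood-frames}.

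\emph{(If $\Falnei_\FCC(s)=\emptyset$ then $\Falnei_\emptyset(s)=\emptyset$.)} Since $\emptyset\subseteq\FCC$, monotonicity of alpha neighborhoods gives $\Falnei_\emptyset(s)\subseteq\Falnei_\FCC(s)=\emptyset$.

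\emph{(If $\Falnei_\emptyset(s)=\emptyset$ then $\Falnei_\FCC(s)=\emptyset$.)} This is the direction that needs an argument; I would argue by contraposition. Suppose $X\in\Falnei_\FCC(s)$. By groundedness there is $Y\in\Falnei_\FCC(s)$ with $Y\subseteq\bigcup\Falneinc_\emptyset(s)$. By liveness $Y\neq\emptyset$, so $\bigcup\Falneinc_\emptyset(s)\neq\emptyset$. Hence $\Falneinc_\emptyset(s)$ is nonempty, and since $\Falneinc_\emptyset(s)\subseteq\Falnei_\emptyset(s)$ by Definition~\ref{def:nonmonotonic-cores}, we get $\Falnei_\emptyset(s)\neq\emptyset$, as required.

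The only subtle point is the second direction. The nonmonotonic core may in general be empty even when the neighborhood is not, so one must not try to pass from $\Falnei_\emptyset(s)$ to $\Falneinc_\emptyset(s)$. The argument above avoids this: it goes from a nonempty union of the core, which is supplied by groundedness together with liveness, to nonemptiness of the core itself. Since the core is a subset of $\Falnei_\emptyset(s)$, this gives nonemptiness of $\Falnei_\emptyset(s)$ directly. Alpha triviality is not needed.
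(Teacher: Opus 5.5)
Your proof is correct and follows the paper's argument essentially step for step: monotonicity of alpha neighborhoods gives one direction, and for the other, groundedness plus liveness yield a nonempty $\bigcup \Falneinc_\emptyset(s)$, hence $\Falnei_\emptyset(s)\neq\emptyset$. One small slip: your opening sentence says you use only monotonicity and groundedness, but the argument also (correctly) uses liveness.
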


\begin{proof}

Let $s\in \FST$ and $\FCC \subseteq \FAG$.

Assume that $\Falnei_\FCC(s)$ is empty. By monotonicity of alpha neighborhoods, $\Falnei_\emptyset (s)$ is empty.

Assume that $\Falnei_\FCC(s)$ is nonempty. Let $X \in \Falnei_\FCC(s)$. By groundedness of alpha powers, there exists $Y \in \Falnei_\FCC(s)$ such that
\[
Y \subseteq \bigcup \Falneinc_\emptyset(s)
\quad\text{and}\quad
Y \subseteq X.
\]
By liveness, since $Y \in \Falnei_\FCC(s)$, we have $Y \neq \emptyset$. Hence $\bigcup \Falneinc_\emptyset(s)\neq \emptyset$, and therefore $\Falnei_\emptyset(s)\neq \emptyset$.

\end{proof}

The next fact shows that every alpha power of a larger coalition is contained in some alpha power of any smaller coalition.

\begin{fact}[Alpha power inclusion]
\label{fact:alpha-power-inclusion}

Let $\FALNF = (\FST, \{\Falnei_\FCC \mid \FCC \subseteq \FAG\})$ be an $\PAL$-representative alpha neighborhood frame.
For all $s\in\FST$ and $\FCC, \FDD \subseteq \FAG$ with $\FCC \subseteq\FDD$, for every $X \in \Falnei_\FDD(s)$, there exists $Y \in \Falnei_\FCC(s)$ such that $X \subseteq Y$.

\end{fact}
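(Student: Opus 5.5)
The claim follows almost immediately from the fact that every alpha neighborhood is closed under supersets. We do not need to relate $X$ to any nonmonotonic core. Instead we produce a witness $Y$ that is as large as possible, namely $Y=\FST$.

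Fix $s\in\FST$, coalitions $\FCC\subseteq\FDD\subseteq\FAG$, and $X\in\Falnei_\FDD(s)$. The first step is to show that $\Falnei_\FCC(s)\neq\emptyset$. Since $X\in\Falnei_\FDD(s)$, the neighborhood $\Falnei_\FDD(s)$ is nonempty. By Fact~\ref{fact:alpha-empty-iff-empty}, applied once to $\FDD$ and once to $\FCC$, we get $\Falnei_\emptyset(s)\neq\emptyset$ and then $\Falnei_\FCC(s)\neq\emptyset$. Alternatively, one can avoid the cited fact. From $\Falnei_\FDD(s)\neq\emptyset$, groundedness gives a nonempty $Y'\in\Falnei_\FDD(s)$ with $Y'\subseteq\bigcup\Falneinc_\emptyset(s)$. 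Hence $\Falneinc_\emptyset(s)\neq\emptyset$, so $\Falnei_\emptyset(s)\neq\emptyset$. Monotonicity of alpha neighborhoods then gives $\Falnei_\emptyset(s)\subseteq\Falnei_\FCC(s)$.

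The second step picks any $W\in\Falnei_\FCC(s)$. Since $W\subseteq\FST$ and, by Definition~\ref{def:alpha-neighborhood-frames}, $\Falnei_\FCC(s)$ is closed under supersets, we get $\FST\in\Falnei_\FCC(s)$. Setting $Y=\FST$ gives $X\subseteq Y$, which is what the statement requires.

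The only step needing care is the nonemptiness of $\Falnei_\FCC(s)$. Monotonicity of alpha neighborhoods goes the wrong direction: it transfers powers from $\FCC$ up to $\FDD$, not down. So the argument must pass through the empty coalition, and that is exactly what Fact~\ref{fact:alpha-empty-iff-empty} provides. Once nonemptiness is established, the rest is immediate from upward closure.
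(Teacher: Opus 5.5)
Your proof is correct and is essentially the paper's own argument: nonemptiness of $\Falnei_\FCC(s)$ via Fact~\ref{fact:alpha-empty-iff-empty}, then upward closure to get $\FST\in\Falnei_\FCC(s)$ as the witness $Y$. Your alternative derivation of nonemptiness through liveness, groundedness, and monotonicity from $\emptyset\subseteq\FCC$ is also valid; it simply unfolds the proof of that fact.
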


\begin{proof}

Let $s\in\FST$, let $\FCC, \FDD \subseteq \FAG$ with $\FCC \subseteq\FDD$, and let $X \in \Falnei_\FDD(s)$. Then $\Falnei_\FDD(s) \neq \emptyset$. By Fact~\ref{fact:alpha-empty-iff-empty}, we obtain $\Falnei_\FCC (s) \neq \emptyset$. Since $\Falnei_\FCC (s)$ is closed under supersets, it follows that $\FST \in \Falnei_\FCC (s)$. Clearly, $X \subseteq \FST$.
Thus, taking $Y=\FST$, we have $Y \in \Falnei_\FCC(s)$ and $X \subseteq Y$.

\end{proof}

The following fact clarifies how successor states are determined in $\PAL$-representative alpha neighborhood frames. In particular, it shows that the empty coalition determines the overall range of possible successors at a state, while minimal alpha powers of other coalitions must lie within this range. Under determinism, the grand coalition already exhausts all possible successors.

\begin{fact}
\label{fact:alpha-successor-range-and-grand-coalition-determinism}

Let $\FALNF = (\FST, \{\Falnei_\FCC \mid \FCC \subseteq \FAG\})$ be an $\PAL$-representative alpha neighborhood frame.

\begin{enumerate}

\item 
For all $s\in \FST$ and $\FCC \subseteq \FAG$,
\[
\bigcup \Falneinc_\FCC(s) \subseteq \bigcup \Falneinc_\emptyset (s).
\]

\item
If $\FALNF$ is $\PAL$-deterministic, then for all $s\in \FST$,
\[
\bigcup \Falneinc_\FAG (s) = \bigcup \Falneinc_\emptyset (s).
\]

\end{enumerate}

\end{fact}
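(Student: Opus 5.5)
The first item follows from groundedness together with minimality. Fix $s\in\FST$ and $\FCC\subseteq\FAG$, and let $t\in\bigcup\Falneinc_\FCC(s)$. Then $t\in X$ for some $X\in\Falneinc_\FCC(s)$; in particular $X\in\Falnei_\FCC(s)$. By \emph{groundedness of alpha powers} there is $Y\in\Falnei_\FCC(s)$ with $Y\subseteq X$ and $Y\subseteq\bigcup\Falneinc_\emptyset(s)$. Since $X$ is $\subseteq$-minimal in $\Falnei_\FCC(s)$, the inclusion $Y\subseteq X$ cannot be proper, so $Y=X$. Hence
\[
t\in X=Y\subseteq\bigcup\Falneinc_\emptyset(s),
\]
which proves the first item. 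No other representativeness condition is needed here. The argument also does not need $\Falneinc_\FCC(s)$ to be nonempty: if it is empty, the inclusion holds trivially.

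For the second item, assume $\FALNF$ is $\PAL$-deterministic and fix $s\in\FST$. The inclusion $\bigcup\Falneinc_\FAG(s)\subseteq\bigcup\Falneinc_\emptyset(s)$ is the first item instantiated with $\FCC=\FAG$. The reverse inclusion $\bigcup\Falneinc_\emptyset(s)\subseteq\bigcup\Falneinc_\FAG(s)$ is exactly clause (b) in the definition of $\PAL$-determinism. Combining the two inclusions gives the equality.

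No step is a genuine obstacle. The only point requiring care is the first item: groundedness is stated for arbitrary alpha powers, and one must use minimality to conclude that the grounding witness $Y$ coincides with $X$ itself, rather than merely some subset of it.
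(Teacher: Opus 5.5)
Your proposal is correct and matches the paper's proof essentially step for step. For item~1, groundedness together with $\subseteq$-minimality forces the grounding witness $Y$ to equal $X$. Item~2 then combines item~1 at $\FCC=\FAG$ with clause~(b) of $\PAL$-determinism.
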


\begin{proof}~

\begin{enumerate}

\item 

Let $s\in \FST$, $\FCC \subseteq \FAG$, and let $t \in \bigcup \Falneinc_\FCC(s)$. Then $t \in X$ for some $X \in \Falneinc_\FCC(s)$. In particular, $X \in \Falnei_\FCC(s)$. By groundedness of alpha powers, there exists $Y \in \Falnei_\FCC(s)$ such that
\[
Y \subseteq \bigcup \Falneinc_\emptyset (s)
\quad\text{and}\quad
Y \subseteq X.
\]
Since $X$ is a minimal element of $\Falnei_\FCC(s)$, we must have $Y = X$. Hence
\[
X \subseteq \bigcup \Falneinc_\emptyset (s),
\]
and therefore $t \in \bigcup \Falneinc_\emptyset (s)$.

\item 

Assume that $\FALNF$ is $\PAL$-deterministic and let $s\in \FST$. By item~(1),
\[
\bigcup \Falneinc_\FAG(s) \subseteq \bigcup \Falneinc_\emptyset(s).
\]
By the definition of $\PAL$-determinism,
\[
\bigcup \Falneinc_\emptyset(s) \subseteq \bigcup \Falneinc_\FAG(s).
\]
Therefore,
\[
\bigcup \Falneinc_\FAG (s) = \bigcup \Falneinc_\emptyset (s).
\]

\end{enumerate}

\end{proof}

Note that, in general,
\[
\bigcup \Falneinc_\FCC(s) = \bigcup \Falneinc_\emptyset (s)
\]
need not hold. To see this, consider the case where $\Falnei_\FCC(s)$ is nonempty but $\Falneinc_\FCC(s)$ is empty. Then, by Fact~\ref{fact:alpha-empty-iff-empty}, $\Falnei_\emptyset (s)$ is nonempty. By alpha triviality of the empty coalition, $\Falneinc_\emptyset (s)$ is a singleton. Moreover, $\emptyset \notin \Falneinc_\emptyset (s)$ by liveness. Hence
\[
\bigcup \Falneinc_\emptyset (s) \nsubseteq \bigcup \Falneinc_\FCC(s).
\]

\subsection{Alpha havingness theorems for finite-agent frames}
\label{subsec:alpha-havingness-theorems}

\begin{theorem}[Alpha havingness theorem]
\label{thm:alpha-havingness}

Let \(\FAG\) be a finite nonempty set of agents, and let $\FXX\in\FES$. For every alpha neighborhood frame, if it $\PAL$-represents a general concurrent game $\FXX$-frame, then it is an $\PAL$-representative alpha neighborhood $\FXX$-frame.

\end{theorem}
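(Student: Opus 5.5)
We argue directly from the definitions, in the same pattern as the proof of Theorem~\ref{thm:actual-havingness}. Fix an alpha neighborhood frame $\FALNF$ that $\PAL$-represents a general concurrent game $\FXX$-frame $\FGCGF$, so $\Falnei_\FCC(s)=\FALEF_\FCC(s)$ for all $\FCC$ and $s$. The key preliminary observation identifies the nonmonotonic cores. Since $\FALEF_\FCC(s)$ is the upward closure of $\FACEF_\FCC(s)=\{\Fout_\FCC(s,\sigma_\FCC)\mid\sigma_\FCC\in\Fav_\FCC(s)\}$, every minimal element of $\Falnei_\FCC(s)$ is of the form $\Fout_\FCC(s,\sigma_\FCC)$. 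For $\FCC=\emptyset$ we have $\JA_\emptyset=\{\emptyset\}$, so $\Falnei_\emptyset(s)$ is either empty or the upward closure of the single set $O_s:=\Fout_\emptyset(s,\emptyset)$. In the latter case $\Falneinc_\emptyset(s)=\{O_s\}$ and $\bigcup\Falneinc_\emptyset(s)=O_s$.

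This immediately gives alpha triviality of the empty coalition. Liveness follows from the ODA-condition: available actions have nonempty outcomes, so $\emptyset$ is never a superset of an outcome. For groundedness, given $X\supseteq\Fout_\FCC(s,\sigma_\FCC)$, we take $Y=\Fout_\FCC(s,\sigma_\FCC)$. Then $Y\in\Falnei_\FCC(s)$ and $Y\subseteq X$. By outcome monotonicity (Fact~\ref{fact:gcgf-basic-consequences}) applied to $\emptyset\subseteq\FCC$, we get $Y\subseteq O_s$. Here $O_s\neq\emptyset$ because $Y\neq\emptyset$, so $O_s$ is indeed the core element. For monotonicity of alpha neighborhoods with $\FCC\subseteq\FDD$, take $X\supseteq\Fout_\FCC(s,\sigma_\FCC)$. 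By the alternative GCI-condition there is an available $\sigma_\FAG$ extending $\sigma_\FCC$, since the outcome is nonempty. Its restriction $\sigma_\FDD=\sigma_\FAG|_\FDD$ has nonempty outcome, hence is available, and by outcome monotonicity $\Fout_\FDD(s,\sigma_\FDD)\subseteq\Fout_\FCC(s,\sigma_\FCC)\subseteq X$.

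Next we handle the $\FXX$-properties. Seriality transfers trivially. For independence, take $X\supseteq\Fout_\FCC(s,\sigma_\FCC)$ and $Y\supseteq\Fout_\FDD(s,\sigma_\FDD)$ with $\FCC\cap\FDD=\emptyset$. Independence of $\FGCGF$ makes $\sigma_\FCC\cup\sigma_\FDD$ available. Outcome monotonicity puts its outcome inside both sets, hence inside $X\cap Y$, so $X\cap Y\in\Falnei_{\FCC\cup\FDD}(s)$.

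Determinism is the step needing the most care. For (a), each minimal element of $\Falnei_\FAG(s)$ is some $\Fout_\FAG(s,\sigma_\FAG)$ with $\sigma_\FAG$ available, and this is a singleton. For (b), we must show $O_s\subseteq\bigcup\Falneinc_\FAG(s)$. Take $t\in O_s$. By the alternative GCI-condition, $t\in\Fout_\FAG(s,\sigma_\FAG)$ for some available $\sigma_\FAG$, and by determinism this outcome is exactly $\{t\}$. So $\{t\}\in\Falnei_\FAG(s)$, and it is minimal because $\emptyset\notin\Falnei_\FAG(s)$ by liveness. The subtle point throughout is to remember that cores might be empty in general. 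In every case we handle this by exhibiting the relevant outcome set explicitly and checking that it is minimal, using liveness or the fact that the upward closure is generated by outcome sets. No earlier result beyond Fact~\ref{fact:gcgf-basic-consequences} is needed.
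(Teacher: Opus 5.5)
Your proposal is correct and takes essentially the same route as the paper. Both check each condition directly from $\Falnei_\FCC=\FALEF_\FCC$, using the ODA-condition, outcome monotonicity and the alternative GCI-condition, with liveness ensuring that the singleton outcomes are minimal in determinism~(b). The only difference is organizational: you identify $\Falneinc_\emptyset(s)=\{\Fout_\emptyset(s,\emptyset)\}$ once up front and derive groundedness from outcome monotonicity along $\emptyset\subseteq\FCC$, whereas the paper repeats the case split on whether $\Fout_\emptyset(s,\emptyset)$ is empty inside each item.
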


\begin{proof}
Let
\[
\FALNF=(\FST,\{\Falnei_\FCC \mid \FCC\subseteq \FAG\})
\]
be an alpha neighborhood frame, and assume that $\FALNF$ $\PAL$-represents a general concurrent game $\FXX$-frame
\[
\FGCGF=(\FST,\FAC,\{\Fav_\FCC \mid \FCC\subseteq \FAG\},\{\Fout_\FCC \mid \FCC\subseteq \FAG\}).
\]
Hence, for every $\FCC\subseteq\FAG$ and every $s \in \FST$, we have
\[
\Falnei_\FCC(s) = \FALEF_\FCC(s).
\]

\paragraph{We show that $\FALNF$ is an $\FXX$-frame.}

We verify that each of the properties $\mathtt{S}$, $\mathtt{I}$, and $\mathtt{D}$ is preserved under $\PAL$-representation. It then follows, since $\FGCGF$ is an $\FXX$-frame, that $\FALNF$ is an $\FXX$-frame.

\begin{itemize}
\item \textbf{Seriality.}
Assume $\FGCGF$ is serial. Fix $s\in\FST$ and $\FCC\subseteq\FAG$. Since $\Fav_\FCC(s)\neq\emptyset$, choose $\sigma_\FCC\in\Fav_\FCC(s)$ and let
\[
Y:=\Fout_\FCC(s,\sigma_\FCC).
\]
By definition of $\FALEF_\FCC$,
\[
\FALEF_\FCC(s)=\{Z\subseteq\FST \mid \Fout_\FCC(s,\tau_\FCC)\subseteq Z \text{ for some } \tau_\FCC\in\Fav_\FCC(s)\},
\]
so $Y\in \FALEF_\FCC(s)$. Since $\FALEF_\FCC=\Falnei_\FCC$, it follows that $\Falnei_\FCC(s)\neq\emptyset$. Therefore $\FALNF$ is $\PAL$-serial.

\item \textbf{Independence.}
Assume $\FGCGF$ is independent. Fix $s\in\FST$, disjoint coalitions $\FCC,\FDD\subseteq\FAG$ ($\FCC\cap\FDD=\emptyset$), and $X,Y\subseteq\FST$ with
\[
X\in\Falnei_\FCC(s),\qquad Y\in\Falnei_\FDD(s).
\]
Since $\Falnei=\FALEF$, there exist $\sigma_\FCC\in\Fav_\FCC(s)$ and $\sigma_\FDD\in\Fav_\FDD(s)$ such that
\[
X':=\Fout_\FCC(s,\sigma_\FCC)\subseteq X,\qquad
Y':=\Fout_\FDD(s,\sigma_\FDD)\subseteq Y.
\]
By independence of $\FGCGF$, $\sigma_\FCC\cup\sigma_\FDD\in\Fav_{\FCC\cup\FDD}(s)$. Let
\[
Z:=\Fout_{\FCC\cup\FDD}(s,\sigma_\FCC\cup\sigma_\FDD).
\]
Then $Z\in\Falnei_{\FCC\cup\FDD}(s)$ (again because $\Falnei=\FALEF$). Moreover, since $\sigma_\FCC\subseteq \sigma_\FCC\cup\sigma_\FDD$ and $\sigma_\FDD\subseteq \sigma_\FCC\cup\sigma_\FDD$, outcome monotonicity yields
\[
Z\subseteq X' \subseteq X,\qquad Z\subseteq Y' \subseteq Y.
\]
Hence $Z\subseteq X\cap Y$, so $X\cap Y\in\Falnei_{\FCC\cup\FDD}(s)$. Therefore $\FALNF$ is $\PAL$-independent.

\item \textbf{Determinism.}
Assume $\FGCGF$ is deterministic.

\begin{itemize}

\item[(a)] 
Fix $s\in\FST$ and $Z\in\Falneinc_\FAG(s)$. Then $Z\in\Falnei_\FAG(s)=\FALEF_\FAG(s)$, so there exists $\sigma_\FAG\in\Fav_\FAG(s)$ with
\[
\Fout_\FAG(s,\sigma_\FAG)\subseteq Z.
\]
If the inclusion were strict, then $\Fout_\FAG(s,\sigma_\FAG)\in\Falnei_\FAG(s)$ would witness that $Z$ is not inclusion-minimal, contradicting $Z\in\Falneinc_\FAG(s)$. Hence
\[
\Fout_\FAG(s,\sigma_\FAG)=Z.
\]
By determinism, $\Fout_\FAG(s,\sigma_\FAG)$ is a singleton; therefore $Z$ is a singleton.

\item[(b)] Assume $\Falnei_\emptyset(s)=\emptyset$. Then $\Falneinc_\emptyset(s) = \emptyset$. Then $\bigcup \Falneinc_\emptyset(s) \subseteq \bigcup \Falneinc_\FAG(s)$.

Assume $\Falnei_\emptyset(s)\neq\emptyset$. Since
\[
\Falnei_\emptyset(s)=\FALEF_\emptyset(s)
=\{Y\subseteq\FST \mid \Fout_\emptyset(s,\emptyset)\subseteq Y \text{ and } \emptyset\in\Fav_\emptyset(s)\},
\]
we have $\emptyset\in\Fav_\emptyset(s)$ and thus
\[
\Falneinc_\emptyset(s)=\{\Fout_\emptyset(s,\emptyset)\}
=\Bigl\{\bigcup\{\Fout_\FAG(s,\sigma_\FAG)\mid \sigma_\FAG\in\FJA_\FAG\}\Bigr\}.
\]
It remains to show
\[
\bigcup\{\Fout_\FAG(s,\sigma_\FAG)\mid \sigma_\FAG\in\FJA_\FAG\}
\subseteq
\bigcup\Falneinc_\FAG(s).
\]

Note 
\[
\bigcup\{\Fout_\FAG(s,\sigma_\FAG)\mid \sigma_\FAG\in\FJA_\FAG\} = \bigcup\{\Fout_\FAG(s,\sigma_\FAG)\mid \sigma_\FAG\in\Fav_\FAG (s)\}.
\]

Take
\[
Z\in\{\Fout_\FAG(s,\sigma_\FAG)\mid \sigma_\FAG\in\Fav_\FAG (s)\}.
\]

It suffices to show $Z\in\Falneinc_\FAG(s)$.

Then $Z=\Fout_\FAG(s,\sigma_\FAG)$ for some $\sigma_\FAG\in\Fav_\FAG(s)$, so $Z\in\FALEF_\FAG(s)=\Falnei_\FAG(s)$. Suppose $Z\notin\Falneinc_\FAG(s)$. Then some $Z'\in\Falnei_\FAG(s)$ satisfies $Z'\subset Z$. Since $\FGCGF$ is deterministic, $Z$ is a singleton, hence $Z'=\emptyset$. But $\emptyset\notin\FALEF_\FAG(s)=\Falnei_\FAG(s)$, contradiction. Therefore $Z\in\Falneinc_\FAG(s)$.

\end{itemize} 

\end{itemize}

\paragraph{We show that $\FALNF$ satisfies the four $\PAL$-representativeness conditions.}

\begin{enumerate}

\item \textbf{Alpha triviality of the empty coalition.}
Fix $s \in \FST$. Note
\[
\Fout_\emptyset(s,\emptyset)=\bigcup\{\Fout_\FAG(s,\sigma_\FAG)\mid \sigma_\FAG\in\FJA_\FAG\}.
\]
We consider two cases.

Assume
\[
\bigcup\{\Fout_\FAG(s,\sigma_\FAG)\mid \sigma_\FAG\in\FJA_\FAG\}\neq\emptyset.
\]
Then $\emptyset\in\Fav_\emptyset(s)$ and
\[
\Falnei_\emptyset(s)=\FALEF_\emptyset(s)
=\{Y\subseteq\FST\mid \Fout_\emptyset(s,\emptyset)\subseteq Y\}.
\]
Then
\[
\Falneinc_\emptyset(s)=\{\Fout_\emptyset(s,\emptyset)\}
=\Bigl\{\bigcup\{\Fout_\FAG(s,\sigma_\FAG)\mid \sigma_\FAG\in\FJA_\FAG\}\Bigr\}.
\]

Assume
\[
\bigcup\{\Fout_\FAG(s,\sigma_\FAG)\mid \sigma_\FAG\in\FJA_\FAG\}=\emptyset.
\]
Then $\Fout_\emptyset(s,\emptyset)=\emptyset$, hence $\FALEF_\emptyset(s)=\emptyset$, and thus $\Falnei_\emptyset(s)=\emptyset$.

It is easy to see that in the two cases, if $\Falnei_\emptyset(s) \neq \emptyset$, then $\Falneinc_\emptyset(s)$ is a singleton.

\item \textbf{Liveness.}
Fix $s\in\FST$ and $\FCC\subseteq\FAG$. By definition,
\[
\FALEF_\FCC(s)=\{Y\subseteq\FST \mid \Fout_\FCC(s,\sigma_\FCC)\subseteq Y
\text{ for some } \sigma_\FCC\in\Fav_\FCC(s)\}.
\]
For every $\sigma_\FCC\in\Fav_\FCC(s)$, $\Fout_\FCC(s,\sigma_\FCC)\neq\emptyset$, so $\Fout_\FCC(s,\sigma_\FCC)\nsubseteq\emptyset$. Hence $\emptyset\notin\FALEF_\FCC(s)$, and therefore $\emptyset\notin\Falnei_\FCC(s)$.

\item \textbf{Groundedness of alpha powers.}
Let
\[
T:=\bigcup\{\Fout_\FAG(s,\sigma_\FAG)\mid \sigma_\FAG\in\FJA_\FAG\}.
\]
Fix $\FCC\subseteq\FAG$ and $X\in\Falnei_\FCC(s)$. Then $X\in\FALEF_\FCC(s)$, so there is $\sigma_\FCC\in\Fav_\FCC(s)$ such that
\[
Y:=\Fout_\FCC(s,\sigma_\FCC)\subseteq X.
\]
Also,
\[
\Fout_\FCC(s,\sigma_\FCC)
=
\bigcup\{\Fout_\FAG(s,\sigma_\FAG)\mid \sigma_\FCC\subseteq\sigma_\FAG,\ \sigma_\FAG\in\FJA_\FAG\}
\subseteq T.
\]
Hence $Y\subseteq T$.

We claim
\[
T=\bigcup \Falneinc_\emptyset(s).
\]

Note
\[
\Fout_\emptyset(s,\emptyset)=\bigcup\{\Fout_\FAG(s,\sigma_\FAG)\mid \sigma_\FAG\in\FJA_\FAG\}.
\]
We consider two cases.

Assume
\[
\bigcup\{\Fout_\FAG(s,\sigma_\FAG)\mid \sigma_\FAG\in\FJA_\FAG\}\neq\emptyset.
\]
Then $\emptyset\in\Fav_\emptyset(s)$ and
\[
\Falnei_\emptyset(s)=\FALEF_\emptyset(s)
=\{Y\subseteq\FST\mid \Fout_\emptyset(s,\emptyset)\subseteq Y\}.
\]
Then
\[
\Falneinc_\emptyset(s)=\{\Fout_\emptyset(s,\emptyset)\}
=\Bigl\{\bigcup\{\Fout_\FAG(s,\sigma_\FAG)\mid \sigma_\FAG\in\FJA_\FAG\}\Bigr\}.
\]
Then
\[
T=\bigcup \Falneinc_\emptyset(s).
\]

Assume
\[
\bigcup\{\Fout_\FAG(s,\sigma_\FAG)\mid \sigma_\FAG\in\FJA_\FAG\}=\emptyset.
\]
Then $\Fout_\emptyset(s,\emptyset)=\emptyset$, hence $\FALEF_\emptyset(s)=\emptyset$, and thus $\Falnei_\emptyset(s)=\emptyset$. Then $\Falneinc_\emptyset(s)=\emptyset$. Then
\[
T=\bigcup \Falneinc_\emptyset(s).
\]

Thus, the claim holds. Hence $Y\subseteq \bigcup \Falneinc_\emptyset(s)$, as required.

\item \textbf{Monotonicity of alpha neighborhoods.}
Fix $s\in\FST$ and $\FCC,\FDD\subseteq\FAG$ with $\FCC\subseteq\FDD$. Let $X\in\Falnei_\FCC(s)$. Then $X\in\FALEF_\FCC(s)$, so for some $\sigma_\FCC\in\Fav_\FCC(s)$,
\[
X':=\Fout_\FCC(s,\sigma_\FCC)\subseteq X.
\]
Define
\[
\Theta:=\{\sigma_\FAG\in\Fav_\FAG(s)\mid \sigma_\FCC\subseteq \sigma_\FAG\}.
\]

Since $\sigma_\FCC$ is available for $\FCC$ at $s$, it is easy to check $\Theta$ is not empty. Let $\sigma_\FAG\in\Theta$. Then the restriction $\sigma_\FAG|_\FDD$ is available for $\FDD$, and
\[
\Fout_\FDD(s,\sigma_\FAG|_\FDD)\in\Falnei_\FDD(s).
\]
Since $\sigma_\FCC\subseteq \sigma_\FAG|_\FDD$, we obtain
\[
\Fout_\FDD(s,\sigma_\FAG|_\FDD)\subseteq \Fout_\FCC(s,\sigma_\FCC)=X'\subseteq X.
\]
Therefore $X\in\Falnei_\FDD(s)$.
\end{enumerate}

Thus $\FALNF$ is an $\PAL$-representative alpha neighborhood $\FXX$-frame.

\end{proof}

\subsection{Alpha enoughness theorems for two-agent frames}
\label{subsec:alpha-enoughness-theorems-two-agents}

\begin{theorem}[Alpha enoughness theorem]
\label{thm:alpha-enoughness}

Assume \(\FAG=\{a,b\}\). Let $\FXX \in \FES$. Every $\PAL$-representative alpha neighborhood $\FXX$-frame $\PAL$-represents a general concurrent game $\FXX$-frame.

\end{theorem}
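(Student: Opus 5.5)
The plan is to reduce Theorem~\ref{thm:alpha-enoughness} to Theorem~\ref{thm:actual-enoughness}. Given a $\PAL$-representative alpha neighborhood $\FXX$-frame $\FALNF=(\FST,\{\Falnei_\FCC\mid\FCC\subseteq\FAG\})$, I will define an actual neighborhood frame $\FACNF$ on the same states. I will show it is a $\PAC$-representative actual neighborhood $\FXX$-frame, take the general concurrent game $\FXX$-frame $\FGCGF$ that $\FACNF$ $\PAC$-represents, and then check that the alpha effectivity functions of $\FGCGF$ coincide with the $\Falnei_\FCC$. The last check is easy once one notes that, for any action frame, $\FALEF_\FCC(s)$ is exactly the upward closure of $\FACEF_\FCC(s)=\Facnei_\FCC(s)$. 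So the whole task is to choose $\Facnei_\FCC(s)$ whose upward closure is $\Falnei_\FCC(s)$ and which satisfy the four representativeness conditions and the $\FXX$-properties.

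Using nonmonotonic cores directly is tempting but fails, because $\Falneinc_\FCC(s)$ may be empty when $\FST$ is infinite. Instead, fix $s$ and let $T=\bigcup\Falneinc_\emptyset(s)$.
\begin{itemize}
\item If $\Falnei_\emptyset(s)=\emptyset$, then by Fact~\ref{fact:alpha-empty-iff-empty} all $\Falnei_\FCC(s)$ are empty, and I put $\Facnei_\FCC(s)=\emptyset$ for all $\FCC$.
\item Otherwise $\Falneinc_\emptyset(s)=\{T\}$, and I set $\Facnei_\FCC(s)=\{Y\in\Falnei_\FCC(s)\mid Y\subseteq T\}$ for every $\FCC$.
\item The one exception: when $\mathtt{D}\in\FXX$, I instead set $\Facnei_\FAG(s)=\{\{t\}\mid t\in T\}$.
\end{itemize}

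Upward closure is recovered as follows. In the general case it holds by groundedness together with superset-closure. In the deterministic case I use $\PAL$-determinism: (b) gives $T\subseteq\bigcup\Falneinc_\FAG(s)$ and (a) says core elements are singletons, so every $\{t\}$ with $t\in T$ lies in $\Falnei_\FAG(s)$. Conversely, every $X\in\Falnei_\FAG(s)$ contains a nonempty grounded $Y\subseteq T$, hence some $\{t\}$.

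Next I verify the four conditions.
\begin{itemize}
\item \emph{Triviality.} $\Facnei_\emptyset(s)=\{T\}$, since any $Y\in\Falnei_\emptyset(s)$ with $Y\subseteq T$ equals $T$ by minimality of $T$.
\item \emph{Liveness.} This is inherited from the alpha frame.
\item \emph{Inclusion.} $T\in\Falnei_\emptyset(s)\subseteq\Falnei_\FCC(s)$ by monotonicity of alpha neighborhoods, so $T\in\Facnei_\FCC(s)$ contains every element of $\Facnei_\FDD(s)$.
\item \emph{Decomposition.} For $\FCC\subseteq\FDD$ and $X\in\Facnei_\FCC(s)$, monotonicity gives $X\in\Falnei_\FDD(s)$, so $\Delta=\{X\}$ works. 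In the deterministic case with $\FDD=\FAG$, take $\Delta=\{\{t\}\mid t\in X\}$.
\end{itemize}

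The $\FXX$-properties then follow.
\begin{itemize}
\item \emph{Seriality.} Since $T\in\Facnei_\FCC(s)$ whenever the frame is nonempty at $s$, seriality transfers.
\item \emph{Independence.} For $X\in\Facnei_\FCC(s)$ and $Y\in\Facnei_\FDD(s)$, $\PAL$-independence gives $X\cap Y\in\Falnei_{\FCC\cup\FDD}(s)$. This set lies inside $T$, so take $Z=X\cap Y$. In the deterministic case with $\FCC\cup\FDD=\FAG$, the set $X\cap Y$ is nonempty by liveness, so pick any singleton $\{t\}\subseteq X\cap Y$.
\item \emph{Determinism.} This holds by construction.
\end{itemize}

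I expect the main obstacle to be the interaction of determinism with the other conditions. In particular, I must make sure that the singleton-based $\Facnei_\FAG$ still satisfies inclusion and independence, and still has upward closure equal to $\Falnei_\FAG$; this is exactly where clauses (a) and (b) of $\PAL$-determinism are consumed. Once this is done, Theorem~\ref{thm:actual-enoughness} yields $\FGCGF$, and the upward-closure observation finishes the proof.
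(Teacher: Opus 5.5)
Your proposal is correct and follows the paper's proof essentially step for step. Like the paper, you restrict each $\Falnei_\FCC(s)$ to the sets contained in the unique element $T$ of $\Falneinc_\emptyset(s)$. In the deterministic case the paper uses $\Falneinc_\FAG(s)$ for the grand coalition, and by clauses (a) and (b) of $\PAL$-determinism together with Fact~\ref{fact:alpha-successor-range-and-grand-coalition-determinism} this is exactly your $\{\{t\}\mid t\in T\}$. The paper then applies the actual enoughness construction and recovers $\Falnei_\FCC(s)$ as the upward closure via groundedness, as you do.

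The differences are cosmetic: you assemble a global actual frame and use Theorem~\ref{thm:actual-enoughness} as a black box, where the paper works state by state. There are two small slips. In the deterministic case $T\notin\Facnei_\FAG(s)$ in general, so seriality for $\FAG$ should instead use $T\neq\emptyset$. Also, inclusion for $\FCC=\FDD=\FAG$ is witnessed by $X$ itself, not by $T$.
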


\begin{proof}

In this proof, we will repeatedly use the construction and preservation claims established in the proof of Theorem~\ref{thm:actual-enoughness}: namely, any local neighborhood structure satisfying the four $\PAC$-representability conditions yields a local game that $\PAC$-represents it, and the properties corresponding to $\FXX$ (in particular, seriality, independence, and determinism when required) are preserved.

\paragraph{Analysis of the task}

Let $\FALNF = (\FST, \{ \Falnei_\FCC \mid \FCC \subseteq \FAG\})$ be an $\PAL$-representative alpha neighborhood $\FXX$-frame.
We aim to construct a general concurrent game $\FXX$-frame
\[
\FGCGF = (\FST, \FAC, \{\Fav_\FCC \mid \FCC \subseteq \FAG\}, \{\Fout_\FCC \mid \FCC \subseteq \FAG\})
\]
such that $\FGCGF$ is $\PAL$-representable by $\FALNF$.

It suffices to show the following. For every $s \in \FST$, from the local alpha neighborhood structure
\[
\Fln = \{\Falnei_\FCC (s) \mid \FCC \subseteq \FAG\},
\]
we can construct a local structure
\[
\Flg = (\FAC_s, \{\Fav_\FCC (s) \mid \FCC \subseteq \FAG\}, \{\Fout_\FCC (s, \cdot) \mid \FCC \subseteq \FAG\})
\]
satisfying the following conditions:

\begin{enumerate}

\item $\Flg$ is a local component (or a \Fdefs{local game}) of a general concurrent game frame:

\begin{enumerate}

\item $\FAC_s$ is a set of actions.

\item For every $\FCC \subseteq \FAG$, $\Fav_\FCC (s) \subseteq \FJA_\FCC$.

\item For every $\FCC \subseteq \FAG$ and every $\sigma_\FCC \in \FJA_\FCC$,
\[
\Fout_\FCC (s, \sigma_\FCC) \subseteq \FST.
\]

\item For every $\FCC \subseteq \FAG$,
\[
\Fav_\FCC (s) = \{ \sigma_\FCC \in \FJA_\FCC \mid \Fout_\FCC (s, \sigma_\FCC) \neq \emptyset\}.
\]

\item For every $\FCC \subseteq \FAG$ and every $\sigma_\FCC \in \FJA_\FCC$, $\sigma_\FCC$ satisfies the GCI-condition at $s$:
\[
\Fout_\FCC (s, \sigma_\FCC) =
\bigcup \{\Fout_\FAG (s, \sigma_\FAG ) \mid \sigma_\FAG \in \FJA_\FAG \text{ and } \sigma_\FCC \subseteq \sigma_\FAG \}.
\]

\end{enumerate}

\item $\Flg$ is $\PAL$-representable by $\Fln$: for every $\FCC \subseteq \FAG$,
\[
\{X \subseteq \FST \mid \Fout_\FCC (s, \sigma_\FCC) \subseteq X \text{ for some } \sigma_\FCC \in \Fav_\FCC (s)\}
= \Falnei_\FCC (s).
\]

\item $\Flg$ has the properties corresponding to $\FXX$.

\end{enumerate}

Now fix $s \in \FST$, and write
\[
\Fln = \{\Falnei_\FCC (s) \mid \FCC \subseteq \FAG\}.
\]
We construct a structure
\[
\Flg = (\FAC_s, \{\Fav_\FCC (s) \mid \FCC \subseteq \FAG\}, \{\Fout_\FCC (s, \cdot) \mid \FCC \subseteq \FAG\})
\]
satisfying the above requirements.

To simplify notation, we will suppress the parameter $s$ throughout the remainder of the proof. More precisely:

\begin{itemize}
    \item We write $\Falnei_\FCC$ for $\Falnei_\FCC (s)$.
    \item We write $\FAC$ for $\FAC_s$.
    \item We write $\Fav_\FCC$ for $\Fav_\FCC (s)$.
    \item We write $\Fout_\FCC (\sigma_\FCC)$ for $\Fout_\FCC (s, \sigma_\FCC)$, where $\sigma_\FCC$ is an action of $\FCC$ over $\FAC_s$.
\end{itemize}

\paragraph{The case $\Falnei_\emptyset = \emptyset$.}

By Fact~\ref{fact:alpha-empty-iff-empty}, $\Falnei_\FCC$ is empty for every $\FCC \subseteq \FAG$.
Construct a local game
\[
\Flg = (\FAC, \{\Fav_\FCC \mid \FCC \subseteq \FAG\}, \{\Fout_\FCC \mid \FCC \subseteq \FAG\})
\]
as follows:

\begin{itemize}
\item Let $\FAC$ be arbitrary.

\item For every $\FCC \subseteq \FAG$, let $\Fav_\FCC = \emptyset$.

\item For every $\FCC \subseteq \FAG$ and every $\sigma_\FCC \in \FJA_\FCC$, let
\[
\Fout_\FCC (\sigma_\FCC) = \emptyset.
\]
\end{itemize}

It is straightforward to verify that $\Flg$ satisfies all the conditions listed in the analysis of the task.

\paragraph{The case $\Falnei_\emptyset \neq \emptyset$ and $\mathtt{D} \notin \FXX$.}

By Fact~\ref{fact:alpha-empty-iff-empty}, for all $\FCC \subseteq \FAG$, $\Falnei_\FCC \neq \emptyset$.
By alpha triviality of the empty coalition, $\Falneinc_\emptyset$ is a singleton. Let $T \in \Falneinc_\emptyset$.

Construct a structure
\[
\Fln' = \{{\Falnei_\FCC}' \mid \FCC \subseteq \FAG\},
\]
where, for every $\FCC \subseteq \FAG$,
\[
{\Falnei_\FCC}'= \{X \in \Falnei_\FCC \mid X \subseteq T \}.
\]

\paragraph{We show that $\Fln'$ satisfies the four $\PAC$-representability conditions}.

This would let us use the actual enoughness theorem proved in the previous section.

\begin{itemize}

\item \textbf{Actual triviality of the empty coalition.} This is immediate.

\item \textbf{Liveness.} This is immediate.

\item \textbf{Actual power inclusion.} Indeed, let $\FCC \subseteq \FDD \subseteq \FAG$ and $X \in {\Falnei_\FDD}'$. Then $X \subseteq T$. Note $T \in \Falnei_\emptyset$.
By monotonicity of alpha neighborhoods, $T \in \Falnei_\FCC$. Hence $T \in {\Falnei_\FCC}'$, and clearly $X \subseteq T$.

\item \textbf{Actual power decomposition.} Indeed, fix $\FCC \subseteq \FDD \subseteq \FAG$ and let $X \in {\Falnei_\FCC}'$.
By monotonicity of alpha neighborhoods, $X \in \Falnei_\FDD$.
Note $X \subseteq T$. Then we have $X \in {\Falnei_\FDD}'$.
Now let $\Delta = \{X\}$. Then $\Delta \subseteq {\Falnei_\FDD}'$ and $X = \bigcup \Delta$.

\end{itemize}

By Theorem~\ref{thm:actual-enoughness}, $\Fln'$ $\PAC$-represents a local game
\[
\Flg = (\FAC, \{\Fav_\FCC \mid \FCC \subseteq \FAG\}, \{\Fout_\FCC \mid \FCC \subseteq \FAG\})
\]
with the properties corresponding to $\FXX$. Hence, for every $\FCC \subseteq \FAG$,
\[
\{\Fout_\FCC (\sigma_\FCC) \mid \sigma_\FCC \in \Fav_\FCC \} = {\Falnei_\FCC}'.
\]

\paragraph{We show that $\Flg$ is $\PAL$-representable by $\Fln$.} We must show that for every $\FCC \subseteq \FAG$,
\[
\{X \subseteq \FST \mid \Fout_\FCC (\sigma_\FCC) \subseteq X \text{ for some } \sigma_\FCC \in \Fav_\FCC \}
= \Falnei_\FCC.
\]

Fix $\FCC \subseteq \FAG$.

\smallskip\noindent\textit{($\subseteq$)} Let
\[
X \in \{X \subseteq \FST \mid \Fout_\FCC (\sigma_\FCC) \subseteq X \text{ for some } \sigma_\FCC \in \Fav_\FCC \}.
\]
Then there exists $\sigma_\FCC \in \Fav_\FCC$ such that $\Fout_\FCC(\sigma_\FCC)\subseteq X$.
Let $Y = \Fout_\FCC(\sigma_\FCC)$. Then $Y \in {\Falnei_\FCC}'$, hence $Y \in \Falnei_\FCC$, and $Y \subseteq X$.
Since $\Falnei_\FCC$ is upward closed, it follows that $X \in \Falnei_\FCC$.

\smallskip\noindent\textit{($\supseteq$)} Let $X \in \Falnei_\FCC$.
By groundedness of alpha powers, there exists $Y \in \Falnei_\FCC$ such that $Y \subseteq T$ and $Y \subseteq X$.
Hence $Y \in {\Falnei_\FCC}'$.
Therefore there exists $\sigma_\FCC \in \Fav_\FCC$ such that $\Fout_\FCC(\sigma_\FCC)= Y$.
Since $Y \subseteq X$, we conclude that
\[
X \in \{X \subseteq \FST \mid \Fout_\FCC (\sigma_\FCC) \subseteq X \text{ for some } \sigma_\FCC \in \Fav_\FCC \}.
\]

Thus $\Flg$ is $\PAL$-representable by $\Fln$.

\paragraph{We show that $\Flg$ has the properties corresponding to $\FXX$.}

No verification for determinism is needed in this case since $\mathtt{D}\notin\FXX$.

\begin{itemize}

\item 

Assume $\mathtt{S}$ is in $\FXX$. Let $\FCC \subseteq \FAG$. We want to show $\Fav_\FCC$ is not empty. Note $T \in \Falnei_\emptyset$. By monotonicity of alpha neighborhoods, $T \in \Falnei_\FCC$. Then $T \in {\Falnei_\FCC}'$. Then $\Fav_\FCC$ is not empty.

\item 

Assume $\mathtt{I}$ is in $\FXX$. Then $\Fln$ is $\PAL$-independent, i.e., for all $\FCC, \FDD \subseteq \FAG$ with $\FCC\cap \FDD = \emptyset$, all $X \in \Falnei_\FCC$ and all $Y \in \Falnei_\FDD$,
\[
X \cap Y \in \Falnei_{\FCC \cup \FDD}.
\]

We first claim that $\Fln'$ is $\PAC$-independent: for all $\FCC, \FDD \subseteq \FAG$ with $\FCC\cap \FDD = \emptyset$, and all $X \in {\Falnei_\FCC}'$ and all $Y \in {\Falnei_\FDD}'$, there exists
\[
Z \in {\Falnei_{\FCC \cup \FDD}}'
\]
such that $Z \subseteq X \cap Y$.

Indeed, fix such $\FCC,\FDD,X,Y$. Since $X \in \Falnei_\FCC$ and $Y \in \Falnei_\FDD$, $\PAL$-independence yields
\[
X \cap Y \in \Falnei_{\FCC \cup \FDD}.
\]
Moreover, $X \subseteq T$ and $Y \subseteq T$, so $X \cap Y \subseteq T$.
Hence
\[
X \cap Y \in {\Falnei_{\FCC \cup \FDD}}'.
\]
Taking $Z = X \cap Y$ proves the claim.

By the proof of Theorem~\ref{thm:actual-enoughness}, from that $\Fln'$ is $\PAC$-independent, we can get $\Flg$ is independent, i.e., for all $\FCC, \FDD \subseteq \FAG$ with $\FCC\cap \FDD = \emptyset$, if $\sigma_\FCC \in \Fav_\FCC$ and $\sigma_\FDD \in \Fav_\FDD$, then
\[
\sigma_\FCC \cup \sigma_\FDD \in \Fav_{\FCC \cup \FDD}.
\]

\end{itemize}

\paragraph{The case $\Falnei_\emptyset \neq \emptyset$ and $\mathtt{D} \in \FXX$.}

By alpha triviality of the empty coalition, $\Falneinc_\emptyset$ is a singleton. Let $T \in \Falneinc_\emptyset$.

Construct a structure
\[
\Fln' = \{{\Falnei_\FCC}' \mid \FCC \subseteq \FAG\},
\]
where:

\begin{itemize}
\item ${\Falnei_\FAG}' =\Falneinc_\FAG$.
\item ${\Falnei_\FCC}' =\{X \in \Falnei_\FCC \mid X \subseteq T\}$ for every $\FCC \subset \FAG$.
\end{itemize}

We treat $\Fln'$ as an actual neighborhood structure (no upward-closure requirement).

\paragraph{We show that $\Fln'$ satisfies the four $\PAC$-representability conditions.}

This would let us use the actual enoughness theorem proved in the previous section.

\begin{enumerate}

\item \textbf{Actual triviality of the empty coalition.} Note $\Falneinc_\emptyset = \{ T \}$.
Also,
\[
{\Falnei_\emptyset}' =\{X \in \Falnei_\emptyset \mid X \subseteq T \}.
\]
Hence ${\Falnei_\emptyset}' =\{ T \}$, which is a singleton.

\item \textbf{Liveness.} This is easy to verify.

\item \textbf{Actual power inclusion.} Let $\FCC, \FDD \subseteq \FAG$ with $\FCC \subseteq\FDD$, and let $X \in {\Falnei_\FDD}'$.
We distinguish two cases.

\smallskip\par\noindent
\emph{Case 1: $\FCC = \FAG$.}

Then $\FCC = \FDD = \FAG$, so we may simply take $Y=X$.

\smallskip\par\noindent
\emph{Case 2: $\FCC \subset \FAG$.}

Since $T \in \Falnei_\emptyset$, by monotonicity of alpha neighborhoods we have $T \in \Falnei_\FCC$.
Hence $T \in {\Falnei_\FCC}'$.
It remains to note that (in this situation) $X \subseteq T$.

\item \textbf{Actual power decomposition.} Fix $\FCC \subseteq \FDD \subseteq \FAG$, and let $X \in {\Falnei_\FCC}'$.
We distinguish three cases.

\smallskip\par\noindent
\emph{Case 1: $\FCC = \FAG$.}

Then $\FDD = \FAG$. Take $\Delta=\{X\}$; clearly $\Delta \subseteq {\Falnei_\FDD}'$ and $X=\bigcup\Delta$.

\smallskip\par\noindent
\emph{Case 2: $\FCC \subset \FAG$ and $\FDD = \FAG$.}

Then $X \subseteq T$.

Because $\mathtt{D} \in \FXX$, $\Fln$ is $\PAL$-deterministic; thus we have
\[
T = \bigcup \Falneinc_\emptyset \subseteq \bigcup \Falneinc_\FAG.
\]
Hence
\[
X \subseteq \bigcup \Falneinc_\FAG.
\]
Since $\Falneinc_\FAG = {\Falnei_\FDD}'$, it follows that
\[
X \subseteq \bigcup {\Falnei_\FDD}'.
\]

Again, because $\mathtt{D}\in \FXX$, $\Fln$ is $\PAL$-deterministic; thus every element of $\Falneinc_\FAG$ is a singleton.
Therefore there exists $\Delta \subseteq {\Falnei_\FDD}'$ such that
\[
X = \bigcup \Delta.
\]

\smallskip\par\noindent
\emph{Case 3: $\FCC \subset \FAG$ and $\FDD \subset \FAG$.}

Since $X \in \Falnei_\FCC$, monotonicity yields $X \in \Falnei_\FDD$.
Also $X \subseteq T$, so $X \in {\Falnei_\FDD}'$.
Taking $\Delta = \{X\}$, we get $\Delta \subseteq {\Falnei_\FDD}'$ and $X = \bigcup \Delta$.

\end{enumerate}

By Theorem~\ref{thm:actual-enoughness}, $\Fln'$ $\PAC$-represents a local game
\[
\Flg = (\FAC, \{\Fav_\FCC \mid \FCC \subseteq \FAG\}, \{\Fout_\FCC \mid \FCC \subseteq \FAG\})
\]
with the properties corresponding to $\FXX$. Hence, for every $\FCC \subseteq \FAG$,
\[
\{\Fout_\FCC (\sigma_\FCC) \mid \sigma_\FCC \in \Fav_\FCC \} = {\Falnei_\FCC}'.
\]

\paragraph{We show that $\Flg$ is $\PAL$-representable by $\Fln$.}

We need to show that for every $\FCC \subseteq \FAG$,
\[
\{X \subseteq \FST \mid \Fout_\FCC (\sigma_\FCC) \subseteq X \text{ for some } \sigma_\FCC \in \Fav_\FCC \} = \Falnei_\FCC.
\]

Fix $\FCC \subseteq \FAG$.

\begin{itemize}

\item \textbf{We show ``$\supseteq$''.}

Fix $X \in \Falnei_\FCC$.

Assume $\FCC=\FAG$.
By groundedness of alpha powers, there exists $X' \in\Falnei_\FAG$ such that $X' \subseteq X$ and $X' \subseteq T$.

Also,
\[
\bigcup \Falneinc_\emptyset = T \subseteq \bigcup\Falneinc_\FAG.
\]
Hence $X' \subseteq \bigcup \Falneinc_\FAG$.

By liveness, $X'\neq\emptyset$, so pick $x\in X'$.
Since $\Falneinc_\FAG$ consists only of singletons, we have $\{x\} \in \Falneinc_\FAG$.
Moreover, $\{x\} \subseteq T$, so $\{x\} \in {\Falnei_\FAG}'$.
Therefore there exists $\sigma_\FAG \in \Fav_\FAG$ such that
\[
\Fout_\FAG(\sigma_\FAG)= \{x\}.
\]
Since $\{x\} \subseteq X$, we conclude that
\[
X \in \{X \subseteq \FST \mid \Fout_\FCC (\sigma_\FCC) \subseteq X \text{ for some } \sigma_\FCC \in \Fav_\FCC \}.
\]

Assume $\FCC \neq \FAG$.
By groundedness of alpha powers, there exists $Y \in \Falnei_\FCC$ such that $Y \subseteq X$ and $Y \subseteq T$.
Hence $Y \in {\Falnei_\FCC}'$.
So there exists $\sigma_\FCC \in \Fav_\FCC$ such that $\Fout_\FCC(\sigma_\FCC)= Y$.
Since $Y \subseteq X$, we obtain
\[
X \in \{X \subseteq \FST \mid \Fout_\FCC (\sigma_\FCC) \subseteq X \text{ for some } \sigma_\FCC \in \Fav_\FCC \}.
\]

\item \textbf{We show ``$\subseteq$''.}

Fix
\[
Y \in \{X \subseteq \FST \mid \Fout_\FCC (\sigma_\FCC) \subseteq X \text{ for some } \sigma_\FCC \in \Fav_\FCC \}.
\]
Then there exists $\sigma_\FCC \in \Fav_\FCC$ such that
\[
\Fout_\FCC(\sigma_\FCC)\subseteq Y.
\]
Let $X=\Fout_\FCC(\sigma_\FCC)$. Then $X \subseteq Y$ and $X \in {\Falnei_\FCC}'$.
Hence $X \in \Falnei_\FCC$.
Since $\Falnei_\FCC$ is upward closed, it follows that $Y\in \Falnei_\FCC$.

\end{itemize}

Therefore $\Flg$ is $\PAL$-representable by $\Fln$.

\paragraph{We show that $\Flg$ has the properties corresponding to $\FXX$.}

\begin{itemize}

\item Assume $\mathtt{S}$ is in $\FXX$.

Let $\FCC \subseteq \FAG$. We show that $\Fav_\FCC$ is nonempty.
Since $T \in \Falnei_\emptyset$, monotonicity of alpha neighborhoods gives $T \in \Falnei_\FCC$.

Suppose $\FCC \subset \FAG$. Then $T \in {\Falnei_\FCC}'$ by definition. Then $\Fav_\FCC \neq \emptyset$.

Suppose $\FCC=\FAG$. Because $\mathtt{D} \in \FXX$, $\Fln$ is $\PAL$-deterministic; thus we have
\[
T = \bigcup \Falneinc_\emptyset \subseteq \bigcup \Falneinc_\FAG.
\]

Thus, ${\Falnei_\FAG}'=\Falneinc_\FAG \neq \emptyset$.
Hence $\Fav_\FCC \neq \emptyset$.

\item Assume $\mathtt{I}$ is in $\FXX$.

Then $\Fln$ is $\PAL$-independent: for all $\FCC, \FDD \subseteq \FAG$ with $\FCC\cap \FDD = \emptyset$, if $X \in \Falnei_\FCC$ and $Y \in \Falnei_\FDD$, then
\[
X \cap Y \in \Falnei_{\FCC \cup \FDD}.
\]

We claim that $\Fln'$ is $\PAC$-independent.
Fix $\FCC, \FDD \subseteq \FAG$ such that $\FCC\cap \FDD = \emptyset$, and let $X \in {\Falnei_\FCC}'$ and $Y \in {\Falnei_\FDD}'$. Then $X \in \Falnei_\FCC$ and $Y \in \Falnei_\FDD$, so
\[
X \cap Y \in \Falnei_{\FCC \cup \FDD}.
\]

We want to show that there is $Z \in {\Falnei_{\FCC \cup \FDD}}'$ such that $Z \subseteq X \cap Y$.

Assume $\FCC \cup \FDD \subset \FAG$.

Note that $X \subseteq T$ and $Y \subseteq T$, hence $X \cap Y \subseteq T$.
Therefore
\[
X \cap Y \in {\Falnei_{\FCC \cup \FDD}}'.
\]

Let $Z = X \cap Y$. Clearly, $Z \in {\Falnei_{\FCC \cup \FDD}}'$ and $Z \subseteq X \cap Y$.

Assume $\FCC \cup \FDD = \FAG$.

By groundedness of alpha powers, there is $U \in \Falnei_\FAG$ such that $U \subseteq T$ and $U \subseteq X \cap Y$.
By $\PAL$-determinism, $T \subseteq \bigcup \Falneinc_\FAG$.
Then $U \subseteq \bigcup \Falneinc_\FAG$. Note that $U \neq \emptyset$. Let $u \in U$. Then $u \in \bigcup \Falneinc_\FAG$. By $\PAL$-determinism, $\{u\} \in \Falneinc_\FAG = {\Falnei_\FAG}'$. Note $\{u\}\subseteq X\cap Y$.

Thus, $\Fln'$ is $\PAC$-independent. 

By the proof of Theorem~\ref{thm:actual-enoughness}, from that $\Fln'$ is $\PAC$-independent, we can get $\Flg$ is independent: for all $\FCC, \FDD \subseteq \FAG$ such that $\FCC\cap \FDD = \emptyset$, if $\sigma_\FCC \in \Fav_\FCC$ and $\sigma_\FDD \in \Fav_\FDD$, then
\[
\sigma_\FCC \cup \sigma_\FDD \in \Fav_{\FCC \cup \FDD}.
\]

\item Since $\mathtt{D}$ is in $\FXX$, we show that $\Flg$ is deterministic.

For every $X \in \Falneinc_\FAG = {\Falnei_\FAG}'$, the set $X$ is a singleton.
Hence $\Flg$ is deterministic.

\end{itemize}

\end{proof}

\paragraph{Remarks.}

The proof of this two-agent result has the following finite-action feature. If the state
set of the given $\PAL$-representative alpha neighborhood $\FXX$-frame
$\FALNF$ is finite, then the construction uses only finitely many local action
names at each state. Hence, since there are only finitely many states, the
resulting general concurrent game $\FXX$-frame $\FGCGF$ has only finitely many
actions.

\section{Weak finite-agent \texorpdfstring{$\PAC$}{PAC}-representativeness is not sufficient for \texorpdfstring{$\PAC$}{PAC}-representability}
\label{sec:weak-finite-agent-pac-representativeness-not-sufficient-pac-representability}

The actual enoughness theorem above is essentially a two-agent result. 
A natural attempt at extending it to arbitrary finite agent sets is to read the four two-agent actual representativeness conditions from Definition~\ref{def:pac-representative-actual-neighborhood-frames} over arbitrary finite \(\FAG\):
actual triviality of the empty coalition, liveness, actual power inclusion,
and actual power decomposition. These conditions describe how actual powers
behave along inclusions of coalitions. This section shows that this direct finite-agent analogue is too weak. The
four conditions remain necessary for an actual neighborhood frame to
\(\PAC\)-represent a general concurrent game frame, but they are not sufficient
once three agents are present.

\begin{definition}[Weak finite-agent \texorpdfstring{$\PAC$}{PAC}-representativeness]
\label{def:weak-finite-agent-pac-representativeness}

Let \(\FACNF=(\FST,\{\Facnei_\FCC\mid \FCC\subseteq \FAG\})\) be an actual
neighborhood frame, where \(\FAG\) is a finite nonempty set of agents. We say
that \(\FACNF\) is \Fdefs{weak finite-agent \(\PAC\)-representative} if the
following conditions are satisfied:

\begin{enumerate}

\item \Fdefs{Actual triviality of the empty coalition:}
for every \(s\in\FST\), if \(\Facnei_\emptyset(s)\) is nonempty, then it is a
singleton.

\item \Fdefs{Liveness:}
for every \(s\in\FST\) and every \(\FCC\subseteq\FAG\),
\(\emptyset\notin\Facnei_\FCC(s)\).

\item \Fdefs{Actual power inclusion:}
for every \(s\in\FST\), all coalitions \(\FCC\subseteq\FDD\subseteq\FAG\),
and every \(X\in\Facnei_\FDD(s)\), there exists \(Y\in\Facnei_\FCC(s)\) such
that \(X\subseteq Y\).

\item \Fdefs{Actual power decomposition:}
for every \(s\in\FST\), all coalitions \(\FCC\subseteq\FDD\subseteq\FAG\),
and every \(X\in\Facnei_\FCC(s)\), there exists
\(\Delta\subseteq\Facnei_\FDD(s)\) such that \(X=\bigcup\Delta\).

\end{enumerate}

\end{definition}

The qualification ``weak'' indicates that, for arbitrary finite agent sets,
these four conditions no longer characterize the actual neighborhood frames
induced by general concurrent game frames.

The next proposition records that the weak conditions are genuine necessary
conditions for \(\PAC\)-representation. Its proof is the finite-agent version
of the havingness argument used in Theorem~\ref{thm:actual-havingness}, and is
omitted.

\begin{proposition}[Necessity of the weak finite-agent conditions]
\label{prop:weak-finite-agent-pac-necessity}

Let \(\FACNF=(\FST,\{\Facnei_\FCC\mid \FCC\subseteq\FAG\})\) be an actual
neighborhood frame, where \(\FAG\) is finite and nonempty. If \(\FACNF\)
\(\PAC\)-represents a general concurrent game frame, then \(\FACNF\) is weak
finite-agent \(\PAC\)-representative.

\end{proposition}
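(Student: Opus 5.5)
The plan is to repeat the four-part verification from the proof of Theorem~\ref{thm:actual-havingness}, observing at each step that nothing there used \(\FAG=\{a,b\}\). Assume \(\FACNF\) \(\PAC\)-represents a general concurrent game frame \(\FGCGF\), so \(\Facnei_\FCC(s)=\FACEF_\FCC(s)=\{\Fout_\FCC(s,\sigma_\FCC)\mid\sigma_\FCC\in\Fav_\FCC(s)\}\) for all \(s\) and \(\FCC\). Throughout I will use only the GCI- and ODA-conditions and Fact~\ref{fact:gcgf-basic-consequences}, all of which are stated for arbitrary finite \(\FAG\).

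\emph{Actual triviality} holds because \(\FJA_\emptyset=\{\emptyset\}\). Hence \(\FACEF_\emptyset(s)\) is either \(\{\Fout_\emptyset(s,\emptyset)\}\), when that set is nonempty, or \(\emptyset\). \emph{Liveness} follows from the ODA-condition: every available joint action has a nonempty outcome set.

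For \emph{actual power inclusion}, take \(\FCC\subseteq\FDD\) and \(X=\Fout_\FDD(s,\sigma_\FDD)\) with \(\sigma_\FDD\) available. Outcome monotonicity gives \(X\subseteq\Fout_\FCC(s,\sigma_\FDD|_\FCC)\). Since \(X\neq\emptyset\), this superset is nonempty, so \(\sigma_\FDD|_\FCC\) is available and its outcome is the required \(Y\).

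For \emph{actual power decomposition}, take \(X=\Fout_\FCC(s,\sigma_\FCC)\) and let
\[
\Theta=\{\sigma_\FAG\in\Fav_\FAG(s)\mid \sigma_\FCC\subseteq\sigma_\FAG\},\qquad
\Delta=\{\Fout_\FDD(s,\sigma_\FAG|_\FDD)\mid \sigma_\FAG\in\Theta\}.
\]
Each member of \(\Delta\) contains a nonempty \(\Fout_\FAG(s,\sigma_\FAG)\), so it lies in \(\Facnei_\FDD(s)\). By the alternative GCI-condition, \(X=\bigcup_{\sigma_\FAG\in\Theta}\Fout_\FAG(s,\sigma_\FAG)\subseteq\bigcup\Delta\). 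For the converse, monotonicity gives \(\Fout_\FDD(s,\sigma_\FAG|_\FDD)\subseteq\Fout_\FCC(s,\sigma_\FCC)=X\), because \(\sigma_\FCC\subseteq\sigma_\FAG|_\FDD\).

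The only step needing real care is decomposition, and specifically the reverse inclusion \(\bigcup\Delta\subseteq X\). Outcome monotonicity handles it directly, which is slightly cleaner than the \(\lambda_\FAG\)-argument used in the two-agent proof; in any case both arguments are agent-count independent. So I do not expect a genuine obstacle: the proposition amounts to noticing that the two-agent havingness proof never used \(|\FAG|=2\).
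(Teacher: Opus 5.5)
Your proposal is correct and follows the paper's approach. The paper omits this proof, saying only that it is the finite-agent version of the havingness argument of Theorem~\ref{thm:actual-havingness}, and your verification of the four conditions reproduces that argument without ever using \(\FAG=\{a,b\}\). Your two small refinements are both sound: you use the ODA-condition and outcome monotonicity to justify that restrictions of available joint actions are available, and you get \(\bigcup\Delta\subseteq X\) directly from outcome monotonicity rather than through the \(\lambda_\FAG\) step.
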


The four weak conditions compare only coalitions lying on a single inclusion
chain \(\FCC\subseteq\FDD\). They do not see the compatibility constraints
imposed by overlapping incomparable coalitions. In a genuine game frame, the
restrictions of one full joint action to coalitions such as \(\{a,b\}\) and
\(\{a,c\}\) must agree on their common \(a\)-component. The four
inclusion-based conditions do not record this overlap-coherence. In the
two-agent setting this obstruction is absent, because there are no two
incomparable proper nonempty coalitions with nonempty overlap.

The following counterexample illustrates this point.

\begin{proposition}[Weak finite-agent \texorpdfstring{$\PAC$}{PAC}-representativeness is not sufficient for three agents]
\label{prop:weak-finite-agent-pac-representativeness-not-sufficient-three-agents}

There exists a three-agent actual neighborhood frame that is weak finite-agent
\(\PAC\)-representative but does not \(\PAC\)-represent any general concurrent
game frame.

\end{proposition}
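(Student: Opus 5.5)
The plan is to exhibit a single-state (or few-state) three-agent actual neighborhood frame with $\FAG=\{a,b,c\}$, check the four weak conditions of Definition~\ref{def:weak-finite-agent-pac-representativeness} directly, and then show by contradiction that no general concurrent game frame can induce it. Because general concurrent game frames impose no cross-state constraints, I will fix one state $s$ with successor set $W$ (a small finite set such as $\{1,2,3\}$ or $\{1,2,3,4\}$). All other states get trivial empty neighborhoods, handled exactly as in the trivial case of Theorem~\ref{thm:actual-enoughness}. I will set $\Facnei_\emptyset(s)=\{W\}$ and choose the singleton, pair and grand-coalition neighborhoods by hand. The weak conditions compare only coalitions on one inclusion chain $\FCC\subseteq\FDD$, so checking them is a finite table check: inclusion into a larger power, and covering by smaller ones.

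For the impossibility direction, I will assume a game $\FGCGF$ $\PAC$-represents the frame and extract the overlap-coherence constraints that the remarks above point to. Fix a full available action $\sigma=(x,y,z)$. By outcome monotonicity (Fact~\ref{fact:gcgf-basic-consequences}),
\[
\Fout_{abc}(\sigma)\subseteq \Fout_{ab}(x,y)\cap\Fout_{ac}(x,z)\cap\Fout_{bc}(y,z).
\]
The three pair-powers on the right share components pairwise. By the GCI-condition, for a fixed $a$-action $x$, the set $\Fout_a(x)$ equals both $\bigcup_y\Fout_{ab}(x,y)$ and $\bigcup_z\Fout_{ac}(x,z)$. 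Moreover, every point $t$ of an $\{a,b\}$-power $P=\Fout_{ab}(x,y)$ must be realized by some full action $(x,y,z)$, and that action then lands inside some $\{a,c\}$-power $Q=\Fout_{ac}(x,z)$ with the same $x$. So every full power $Z\ni t$ used there satisfies $Z\subseteq P\cap Q$.

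The frame will be designed so that the weak conditions hold coalition-pair by coalition-pair, but this ``common-$x$'' requirement fails. Concretely, I will choose the neighborhoods so that some point $t$ lies in an $\{a,b\}$-power $P$ and in $\{a,c\}$-powers $Q$ contained in a common $a$-power. In each such case, no element of $\Facnei_{abc}(s)$ containing $t$ fits inside $P\cap Q\cap R$ for any $\{b,c\}$-power $R$ that could come from the matching $(y,z)$. The natural shape is a ``triangle'' configuration: the three pairwise neighborhoods are mutually crossing partitions of $W$, and $\Facnei_{abc}(s)$ consists of sets each contained in some member of each pairwise neighborhood separately (so inclusion holds), but with no simultaneous consistent choice. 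The contradiction is then obtained by following a point $t$ through the forced chain $x\to(x,y)\to(x,y,z)\to(y,z)$ and showing that the $\{b,c\}$-power it lands in cannot contain the required full power.

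The main obstacle is pinning down the concrete frame. Many naive candidates (e.g.\ crossing partitions with all singleton and $\{b,c\}$ neighborhoods equal to $\{W\}$) turn out to be realizable. The reason is that a game may introduce arbitrarily many action names per power and restrict availability freely, since independence is not assumed. So the counterexample must force the contradiction through the GCI-condition alone, not through counting actions. I would first try to make one agent's neighborhood a single set $\{W\}$ while the two pairs containing that agent carry incompatible partitions and $\Facnei_{abc}(s)$ is built from sets that are not singletons. Then I would check exhaustively, over the finitely many possible triples of pair-powers a full action can carry, that every admissible assignment violates either the union condition for some $a$-action or inclusion into a $\{b,c\}$-power. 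If that fails, I would adjust $\Facnei_{abc}(s)$ until the case analysis closes.
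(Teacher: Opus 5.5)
Your proposal does not yet prove the statement. It is an existence claim, so the proof consists of a concrete frame together with the two verifications, and you provide neither. What you describe is a search procedure (``first try \ldots; if that fails, adjust $\Facnei_{abc}(s)$ until the case analysis closes''), and no frame is ever fixed on which the weak conditions or the impossibility could be checked.

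Your first design choice also works against you. The coherence that the weak conditions miss is this: for a full action $\sigma$, the restrictions $\sigma|_{ab}$ and $\sigma|_{ac}$ share their $a$-component. Outcome monotonicity therefore gives $\Fout_{ab}(s,\sigma|_{ab})\cup\Fout_{ac}(s,\sigma|_{ac})\subseteq\Fout_a(s,\sigma|_a)$, and the right-hand side must belong to $\Facnei_a(s)$. If you set $\Facnei_a(s)=\{W\}$, containing both pair outcomes is automatic. So ``incompatible partitions'' on $ab$ and $ac$ cannot be detected through $a$. Any obstruction would then have to be routed through $b$, $c$ and $bc$, which your plan does not analyze. (If all three singleton neighborhoods are $\{W\}$, one can show that every weak finite-agent $\PAC$-representative frame is realizable by a construction with sufficiently many fresh action names. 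This is consistent with your crossing-partition candidates turning out realizable.) Likewise, hunting for a point of a pair-power that no compatible full power can cover is more delicate than necessary, and you exhibit no configuration where it happens.

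The missing idea is a forcing device that makes the contradiction immediate. Choose a grand-coalition power $Z$ contained in exactly one member $P$ of $\Facnei_{ab}(s)$ and exactly one member $Q$ of $\Facnei_{ac}(s)$, and make sure no member of $\Facnei_a(s)$ contains $P\cup Q$. Any full action naming $Z$ then has $ab$-outcome $P$ and $ac$-outcome $Q$. (By outcome monotonicity each restriction's outcome contains $Z$, so it is available and its outcome lies in the corresponding neighborhood, where $P$, resp.\ $Q$, is the only superset of $Z$.) Hence its $a$-restriction has an outcome containing $P\cup Q$ that must lie in $\Facnei_a(s)$, a contradiction reached with no case analysis. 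The weak conditions survive because each power of $a$ only has to be \emph{some} union of pair powers, and that union can avoid the forced supersets.

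This is the paper's frame. With $Z=\{t_1,t_2\}$, $P=Z\cup\{u\}$, $Q=Z\cup\{v\}$, $O=P\cup Q$, set
\[
\Facnei_a(s)=\{P,Q\},\qquad
\Facnei_{ab}(s)=\{P,\{t_1\},\{t_2\},\{v\}\},\qquad
\Facnei_{ac}(s)=\{Q,\{t_1\},\{t_2\},\{u\}\},
\]
together with $\Facnei_{\FAG}(s)=\{Z,\{t_1\},\{t_2\},\{u\},\{v\}\}$ and $\Facnei_\emptyset(s)=\Facnei_b(s)=\Facnei_c(s)=\Facnei_{bc}(s)=\{O\}$. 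Here $Q$ decomposes in $ab$ as $\{t_1\}\cup\{t_2\}\cup\{v\}$ and $P$ decomposes in $ac$ as $\{t_1\}\cup\{t_2\}\cup\{u\}$, so neither decomposition needs a second superset of $Z$. Note that the shared agent's neighborhood is deliberately nontrivial, the opposite of your first attempt. The remaining verification is a finite table check, and decomposition need only be checked along cover inclusions, since decompositions compose.
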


\begin{proof}

Let \(\FAG=\{a,b,c\}\) and \(\FST=\{s,t_1,t_2,u,v\}\). Put:

\begin{itemize}
    \item \(Z=\{t_1,t_2\}\);
    \item \(U=\{u\}\);
    \item \(V=\{v\}\);
    \item \(P=Z\cup U = \{t_1,t_2,u\}\);
    \item \(Q=Z\cup V = \{t_1,t_2,v\}\);
    \item \(O=Z\cup U\cup V = \{t_1,t_2,u,v\}\).
\end{itemize}

When no confusion can arise, we write \(a\) for \(\{a\}\), \(ab\) for
\(\{a,b\}\), and similarly for the other coalitions. Define an actual
neighborhood frame
\[
\FACNF=(\FST,\{\Facnei_\FCC\mid \FCC\subseteq\FAG\})
\]
as follows. At the distinguished state \(s\), set
\[
\begin{array}{rcl@{\qquad}rcl}
\Facnei_\emptyset(s)&=&\{O\},
&
\Facnei_a(s)&=&\{P,Q\},
\\
\Facnei_b(s)&=&\{O\},
&
\Facnei_c(s)&=&\{O\},
\\
\Facnei_{ab}(s)&=&\{P,\{t_1\},\{t_2\},V\},
&
\Facnei_{ac}(s)&=&\{Q,\{t_1\},\{t_2\},U\},
\\
\Facnei_{bc}(s)&=&\{O\},
&
\Facnei_\FAG(s)&=&\{Z,\{t_1\},\{t_2\},U,V\}
\end{array}
\]
At every non-distinguished state \(x\in O=\{t_1,t_2,u,v\}\), set
\(\Facnei_\FCC(x)=\{\FST\}\) for every coalition \(\FCC\subseteq\FAG\).

The construction has three features that will be used in the contradiction:

\begin{itemize}
\item the only member of \(\Facnei_{ab}(s)\) containing \(Z\) is \(P\);
\item the only member of \(\Facnei_{ac}(s)\) containing \(Z\) is \(Q\);
\item \(\Facnei_a(s)=\{P,Q\}\), and \(P,Q \subset O\).
\end{itemize}

\noindent
\textbf{Claim 1.} The frame \(\FACNF\) is weak finite-agent
\(\PAC\)-representative.

At the states \(t_1,t_2,u,v\), all coalitions have the same unique actual power
\(\FST\), so the four weak conditions are immediate. It remains to check the
state \(s\). For readability, write \(N_\FCC=\Facnei_\FCC(s)\) for each
coalition \(\FCC\subseteq\FAG\).

Actual triviality of the empty coalition holds because \(N_\emptyset=\{O\}\)
is a singleton. Liveness holds because every set appearing in the displayed
neighborhoods is nonempty.

For actual power inclusion, fix
\(\FCC\subseteq\FDD\subseteq\FAG\) and \(X\in N_\FDD\). We must find
\(Y\in N_\FCC\) such that \(X\subseteq Y\). If
\(\FCC\in\{\emptyset,b,c,bc\}\), then \(O\in N_\FCC\), and every power
appearing at \(s\) is included in \(O\). Hence \(Y=O\) is a witness.
The only smaller coalitions requiring separate attention are \(a\), \(ab\),
and \(ac\). In the following table, each line gives a witness \(Y\in N_\FCC\)
for every displayed larger-coalition power \(X\):
\[
\begin{array}{c|c|c}
\text{smaller coalition }\FCC
&
\text{larger-coalition power }X
&
\text{witness }Y\in N_\FCC
\\ \hline
 a & P,\ Z,\ \{t_1\},\ \{t_2\},\ U & P
\\
 a & Q,\ V & Q
\\
 ab & Z,\ U & P
\\
 ab & \{t_1\} & \{t_1\}
\\
 ab & \{t_2\} & \{t_2\}
\\
 ab & V & V
\\
 ac & Z,\ V & Q
\\
 ac & \{t_1\} & \{t_1\}
\\
 ac & \{t_2\} & \{t_2\}
\\
 ac & U & U
\end{array}
\]
Therefore actual power inclusion holds at \(s\).

Write \(\FCC\prec\FDD\), and call this a cover inclusion, if
\(\FCC\subsetneq\FDD\) and \(\FDD=\FCC\cup\{i\}\) for some agent \(i\).
For actual power decomposition, it is enough to verify the condition for all
cover inclusions in the coalition lattice.
Indeed, decompositions compose. Suppose that
\(\FCC\subseteq\FDD\subseteq\FEE\). If, for every \(X\in N_\FCC\), there is
\(\Delta_X\subseteq N_\FDD\) such that \(X=\bigcup\Delta_X\), and if, for
every \(Y\in N_\FDD\), there is \(\Gamma_Y\subseteq N_\FEE\) such that
\(Y=\bigcup\Gamma_Y\), then, for every \(X\in N_\FCC\),
\[
X
=
\bigcup\Delta_X
=
\bigcup_{Y\in\Delta_X}Y
=
\bigcup_{Y\in\Delta_X}\bigcup\Gamma_Y .
\]
Thus for every \(X\in N_\FCC\), \(X=\bigcup\Lambda_X\), where
\[
\Lambda_X=\bigcup_{Y\in\Delta_X}\Gamma_Y\subseteq N_\FEE .
\]
Hence verification for cover inclusions yields the general case by iterating
this argument along a chain from \(\FCC\) to \(\FDD\).

The following table gives, for every cover inclusion \(\FCC\prec\FDD\) and
every \(X\in N_\FCC\), a witness \(\Delta_X\subseteq N_\FDD\) such that
\(X=\bigcup\Delta_X\).

\begin{longtable}{c|c|c}
cover inclusion \(\FCC\prec\FDD\)
&
\(X\in N_\FCC\)
&
\(\Delta_X\subseteq N_\FDD\)
\\ \hline
\endfirsthead

cover inclusion \(\FCC\prec\FDD\)
&
\(X\in N_\FCC\)
&
\(\Delta_X\subseteq N_\FDD\)
\\ \hline
\endhead

\(\emptyset\prec a\)
&
\(O\)
&
\(\{P,Q\}\)
\\
\(\emptyset\prec b\)
&
\(O\)
&
\(\{O\}\)
\\
\(\emptyset\prec c\)
&
\(O\)
&
\(\{O\}\)
\\ \hline

\(a\prec ab\)
&
\(P\)
&
\(\{P\}\)
\\
\(a\prec ab\)
&
\(Q\)
&
\(\{\{t_1\},\{t_2\},V\}\)
\\
\(a\prec ac\)
&
\(P\)
&
\(\{\{t_1\},\{t_2\},U\}\)
\\
\(a\prec ac\)
&
\(Q\)
&
\(\{Q\}\)
\\ \hline

\(b\prec ab\)
&
\(O\)
&
\(\{P,V\}\)
\\
\(b\prec bc\)
&
\(O\)
&
\(\{O\}\)
\\ \hline

\(c\prec ac\)
&
\(O\)
&
\(\{Q,U\}\)
\\
\(c\prec bc\)
&
\(O\)
&
\(\{O\}\)
\\ \hline

\(ab\prec \FAG\)
&
\(P\)
&
\(\{Z,U\}\)
\\
\(ab\prec \FAG\)
&
\(\{t_1\}\)
&
\(\{\{t_1\}\}\)
\\
\(ab\prec \FAG\)
&
\(\{t_2\}\)
&
\(\{\{t_2\}\}\)
\\
\(ab\prec \FAG\)
&
\(V\)
&
\(\{V\}\)
\\ \hline

\(ac\prec \FAG\)
&
\(Q\)
&
\(\{Z,V\}\)
\\
\(ac\prec \FAG\)
&
\(\{t_1\}\)
&
\(\{\{t_1\}\}\)
\\
\(ac\prec \FAG\)
&
\(\{t_2\}\)
&
\(\{\{t_2\}\}\)
\\
\(ac\prec \FAG\)
&
\(U\)
&
\(\{U\}\)
\\ \hline

\(bc\prec \FAG\)
&
\(O\)
&
\(\{Z,U,V\}\)
\end{longtable}

In each row, the third entry is a subset of the corresponding neighborhood
\(N_\FDD\), and the equality \(X=\bigcup\Delta_X\) follows directly from
\[
Z=\{t_1,t_2\},
\qquad
P=Z\cup U,
\qquad
Q=Z\cup V,
\qquad
O=Z\cup U\cup V.
\]
Therefore actual power decomposition holds at \(s\).

\medskip
\noindent
\textbf{Claim 2.} The frame \(\FACNF\) does not \(\PAC\)-represent any general
concurrent game frame.

Suppose, for contradiction, that \(\FACNF\) \(\PAC\)-represents a general
concurrent game frame
\[
\FGCGF=
(\FST,\FAC,\{\Fav_\FCC\mid\FCC\subseteq\FAG\},
\{\Fout_\FCC\mid\FCC\subseteq\FAG\}).
\]

Since \(Z\in\Facnei_\FAG(s)\), there is some
\(\sigma_\FAG\in\Fav_\FAG(s)\) such that
\(\Fout_\FAG(s,\sigma_\FAG)=Z\). Put
\(\sigma_{ab}=\sigma_\FAG|_{ab}\),
\(\sigma_{ac}=\sigma_\FAG|_{ac}\), and
\(\sigma_a=\sigma_\FAG|_a\).

By outcome monotonicity,
\(Z=\Fout_\FAG(s,\sigma_\FAG)\subseteq
\Fout_{ab}(s,\sigma_{ab})\). The latter set is nonempty, so the ODA-condition
gives \(\sigma_{ab}\in\Fav_{ab}(s)\). By \(\PAC\)-representation,
\[
\Fout_{ab}(s,\sigma_{ab})\in\Facnei_{ab}(s)
=
\{P,\{t_1\},\{t_2\},V\}.
\]
Among these four sets, only \(P\) contains \(Z=\{t_1,t_2\}\). Therefore
\(\Fout_{ab}(s,\sigma_{ab})=P\).

The same argument for the restriction to \(ac\) gives
\(Z=\Fout_\FAG(s,\sigma_\FAG)\subseteq
\Fout_{ac}(s,\sigma_{ac})\). Hence
\(\sigma_{ac}\in\Fav_{ac}(s)\), and by \(\PAC\)-representation,
\[
\Fout_{ac}(s,\sigma_{ac})\in\Facnei_{ac}(s)
=
\{Q,\{t_1\},\{t_2\},U\}.
\]
Among these four sets, only \(Q\) contains \(Z\). Thus
\(\Fout_{ac}(s,\sigma_{ac})=Q\).

Now restrict once more to the common subcoalition \(a\). Since \(\sigma_a\) is
the common restriction of \(\sigma_{ab}\) and \(\sigma_{ac}\), outcome
monotonicity gives
\[
\Fout_{ab}(s,\sigma_{ab})\subseteq\Fout_a(s,\sigma_a)
\qquad\text{and}\qquad
\Fout_{ac}(s,\sigma_{ac})\subseteq\Fout_a(s,\sigma_a).
\]
Using the two equalities just obtained, we get
\(P\subseteq\Fout_a(s,\sigma_a)\) and
\(Q\subseteq\Fout_a(s,\sigma_a)\). Hence
\(P\cup Q\subseteq\Fout_a(s,\sigma_a)\).

In particular, \(\Fout_a(s,\sigma_a)\neq\emptyset\), so the ODA-condition gives
\(\sigma_a\in\Fav_a(s)\). By \(\PAC\)-representation,
\(\Fout_a(s,\sigma_a)\in\Facnei_a(s)=\{P,Q\}\). This is impossible, because $P\cup Q \not \subseteq P$ and $P\cup Q \not \subseteq Q$.

\end{proof}

\section{Concluding remarks}
\label{sec:concluding-remarks}

Concurrent game frames are among the standard semantic structures for logics of strategic reasoning. Li and Ju~\cite{li_minimal_2025, li_completeness_2026} argued that the standard framework builds in three assumptions that may be too strong in some applications: seriality, independence of agents, and determinism. By choosing which of these assumptions to retain, one obtains the eight classes of general concurrent game frames considered in this paper.

Under the assumption that there are two agents, we established representation theorems for all eight classes at two levels of abstraction. For actual powers, we showed that the eight classes of general concurrent game frames are representable by the corresponding eight classes of $\PAC$-representative actual neighborhood frames. For alpha powers, we proved the analogous representation result for the corresponding classes of $\PAL$-representative alpha neighborhood frames.

A natural next step is to remove the two-agent restriction. Section~\ref{sec:weak-finite-agent-pac-representativeness-not-sufficient-pac-representability} shows that this cannot be done by simply reading the four actual representativeness conditions over arbitrary finite sets of agents. Those conditions remain necessary, but the three-agent counterexample shows that they fail to capture an additional constraint created by overlapping incomparable coalitions. In a genuine game frame, the restrictions of a full joint action to such coalitions must agree on their common agents; the four inclusion-based conditions do not record this agreement.
Thus, for actual powers, the two-agent enoughness construction does not extend by a straightforward repetition of the same argument. An arbitrary finite-agent representation theorem appears to require new ingredients. Once the correct finite-agent actual conditions are isolated, one can then ask whether the actual-to-alpha method used in Section~\ref{sec:representing-alpha-powers} yields corresponding finite-agent alpha representation theorems.

\subsection*{Acknowledgments}

We thank Valentin Goranko, Johan van Benthem, and the audience of the Fourth Workshop on Logic of Multi-agent 
Systems.

\bibliographystyle{alpha}
\bibliography{Strategic-reasoning}

\appendix

\section{$\PAC$-independence cannot be replaced by STIT-indepen\-dence in the two-agent setting}
\label{app:pac-independence-not-stitut-independence}

In this section, we provide a counterexample showing that $\PAC$-independence (Definition~\ref{def:seriality-independence-determinism-pac-actual-neighborhood-frames}) cannot be replaced by STIT-independence (defined at the end of Section~\ref{subsec:eight-classes-of-pac-representative-actual-neighborhood-frames}).

\begin{proposition}
\label{prop:stitut-independence-not-enough-for-pac-independence}

There exists a two-agent actual neighborhood frame that is $\PAC$-representative and STIT-independent, but does not $\PAC$-represent any independent general concurrent game frame. Hence STIT-independence is not sufficient for the enoughness direction.

\end{proposition}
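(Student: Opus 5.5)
The plan is to build an explicit two-agent frame whose actual neighborhoods are non-trivial at a single state. At that state every actual power of $a$ meets every actual power of $b$, but some such intersection contains no actual power of $\FAG$. The negative part then needs no new game-theoretic argument. By Theorem~\ref{thm:actual-havingness} with $\FXX=\mathtt{I}$, any actual neighborhood frame that $\PAC$-represents an independent general concurrent game frame is $\PAC$-independent. So it is enough to exhibit a $\PAC$-representative, STIT-independent frame that is \emph{not} $\PAC$-independent.

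A necessary design observation: by actual power decomposition, each individual power is a union of grand-coalition powers. If all grand-coalition powers were singletons, any nonempty $X\cap Y$ would contain one of them, and $\PAC$-independence would hold. So the grand-coalition powers must be overlapping non-singletons. Concretely, let $\FST=\{s,1,2,3\}$, $O=\{1,2,3\}$, $A=\{1,2\}$, $B=\{2,3\}$. At $s$ set
\[
\Facnei_\emptyset(s)=\{O\},\qquad \Facnei_a(s)=\Facnei_b(s)=\Facnei_\FAG(s)=\{A,B\},
\]
and at every $x\in\{1,2,3\}$ set $\Facnei_\FCC(x)=\{\FST\}$ for all $\FCC\subseteq\FAG$.

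The verification then runs as follows. At the states $1,2,3$ every condition is immediate, since every coalition has the single power $\FST$. At $s$, the empty coalition has the singleton $\{O\}$, and no neighborhood contains $\emptyset$. For actual power inclusion, every set is contained in $O$, and powers of $a$, $b$ and $\FAG$ are included in themselves. For actual power decomposition, $O=A\cup B$, and each of $A,B$ is decomposed by $\{A\}$ or $\{B\}$ in any larger coalition.

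STIT-independence holds everywhere: at $s$, every pair drawn from $\{A,B\}\times\{A,B\}$ contains $2$, and pairs involving $\emptyset$ intersect $O$ nontrivially. Finally, $\PAC$-independence fails at $s$. Take $X=A\in\Facnei_a(s)$ and $Y=B\in\Facnei_b(s)$; then $X\cap Y=\{2\}$, and no member of $\Facnei_\FAG(s)=\{A,B\}$ is contained in $\{2\}$.

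The only delicate point is the choice of neighborhoods, namely overlapping non-singleton grand-coalition powers, as explained above; the checks are routine. Once the example is fixed, the theorem applies directly and gives the conclusion.
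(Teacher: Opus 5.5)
Your proof is correct and uses the same counterexample as the paper, up to renaming $t_i$ as $i$, with $O,A,B$ playing the roles of $X_0,X_1,X_2$. The only difference is packaging: you obtain the negative part by exhibiting a failure of $\PAC$-independence at $s$ and invoking the independence clause of Theorem~\ref{thm:actual-havingness}, whereas the paper derives the contradiction inline by a case split on $\Fout_\FAG(s,\sigma_a\cup\sigma_b)\in\{X_1,X_2\}$ plus outcome monotonicity, which is exactly the argument that clause of the theorem uses.
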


\begin{proof}

Let $\FAG=\{a,b\}$, and consider the actual neighborhood frame
\[
\FACNF=(\FST,\{\Facnei_\FCC \mid \FCC\subseteq\FAG\}),
\]
defined as follows.

\begin{itemize}
    \item The state space is
    \[
    \FST=\{s,t_1,t_2,t_3\}.
    \]
    
    \item At the state $s$, let
    \[
    \Facnei_\emptyset(s)=\{X_0\},
    \qquad\text{where } X_0=\{t_1,t_2,t_3\},
    \]
    and
    \[
    \Facnei_a(s)=\Facnei_b(s)=\Facnei_\FAG(s)=\{X_1,X_2\},
    \]
    where
    \[
    X_1=\{t_1,t_2\},
    \qquad
    X_2=\{t_2,t_3\}.
    \]
    
    \item For every $x\in\{t_1,t_2,t_3\}$ and every $\FCC\subseteq\FAG$, let
    \[
    \Facnei_\FCC(x)=\{\FST\}.
    \]
\end{itemize}

It is straightforward to verify that $\FACNF$ is $\PAC$-representative (actual triviality of the empty coalition, liveness, actual power inclusion, and actual power decomposition all hold). Moreover, $\FACNF$ is STIT-independent: at the only nontrivial state $s$, we have
\[
\Facnei_a(s)=\Facnei_b(s)=\{X_1,X_2\},
\]
and every intersection of two members of $\{X_1,X_2\}$ is nonempty (indeed, $X_1\cap X_2=\{t_2\}$).

We now show that $\FACNF$ does \emph{not} $\PAC$-represent any independent general concurrent game frame.

Suppose, towards a contradiction, that $\FACNF$ $\PAC$-represents an independent general concurrent game frame
\[
\FGCGF=(\FST,\FAC,\{\Fav_\FCC \mid \FCC\subseteq\FAG\},\{\Fout_\FCC \mid \FCC\subseteq\FAG\}).
\]
Since $\FGCGF$ is $\PAC$-representable by $\FACNF$, there exist actions
\[
\sigma_a\in\Fav_a(s)
\qquad\text{and}\qquad
\sigma_b\in\Fav_b(s)
\]
such that
\[
\Fout_a(s,\sigma_a)=X_1
\qquad\text{and}\qquad
\Fout_b(s,\sigma_b)=X_2.
\]

Because $\FGCGF$ is independent, we have
\[
\sigma_a\cup\sigma_b\in\Fav_\FAG(s).
\]
Since
\[
\Facnei_\FAG(s)=\{X_1,X_2\},
\]
and $\FGCGF$ is $\PAC$-representable by $\FACNF$, it follows that
\[
\Fout_\FAG(s,\sigma_a\cup\sigma_b)\in\{X_1,X_2\}.
\]
Thus there are two cases.

\smallskip\par\noindent
\emph{Case 1.} $\Fout_\FAG(s,\sigma_a\cup\sigma_b)=X_1$.

Since $\sigma_b\subseteq \sigma_a\cup\sigma_b$, by outcome monotonicity, we obtain
\[
\Fout_\FAG(s,\sigma_a\cup\sigma_b)\subseteq \Fout_b(s,\sigma_b).
\]
Hence
\[
X_1\subseteq X_2,
\]
which is impossible, because $t_1\in X_1$ but $t_1\notin X_2$.

\smallskip\par\noindent
\emph{Case 2.} $\Fout_\FAG(s,\sigma_a\cup\sigma_b)=X_2$.

Similarly, since $\sigma_a\subseteq \sigma_a\cup\sigma_b$, we have
\[
\Fout_\FAG(s,\sigma_a\cup\sigma_b)\subseteq \Fout_a(s,\sigma_a),
\]
and therefore
\[
X_2\subseteq X_1,
\]
which is impossible, because $t_3\in X_2$ but $t_3\notin X_1$.

In both cases we obtain a contradiction. Therefore, $\FACNF$ cannot $\PAC$-represent any independent general concurrent game frame.

\end{proof}

\section{A derived singleton-coalition actual representation theorem for two-agent \texorpdfstring{$\mathtt{SID}$}{SID}-frames}
\label{app:power-invariance}

As noted in the introduction, van Benthem, Bezhanishvili, and Enqvist~\cite{benthem_new_2019}
prove a representation theorem for \emph{basic} (strategy-based) powers on turn-based two-agent
extensive games with imperfect information, under the restriction to singleton coalitions.
Via a simple power-invariance transformation, this result yields a corresponding representation
theorem for \emph{actual} powers of single agents on standard two-agent concurrent game frames.
In this appendix section we briefly spell out the connection.

Throughout, we fix two agents $a$ and $b$.

\paragraph{Turn-based two-agent extensive games with imperfect information and basic powers.}
A \Fdefs{turn-based two-agent extensive game with imperfect information} comprises:
a set of states (with a designated set $T$ of terminal states),
a turn function assigning to each nonterminal state the unique agent who moves there,
a set of available actions at each state,
a transition mechanism,
and, for each agent, an indistinguishability relation on states capturing imperfect information
(see~\cite{benthem_new_2019} for the precise definition).

A (pure) \Fdefs{strategy} for an agent $x\in\{a,b\}$ prescribes an action at every state where it is $x$'s turn.
Such a strategy is \Fdefs{uniform} if it assigns the same action to any two states that $x$ cannot distinguish.

Following~\cite{benthem_new_2019}, a set $X\subseteq T$ is a \Fdefs{basic power} of agent $x$
if there exists a uniform strategy $f_x$ for $x$ such that:

\begin{enumerate}
    \item every terminal state that may result when $x$ plays $f_x$ lies in $X$ (safety), and
    \item for every $t\in X$ there is some uniform strategy of the other agent against which playing $f_x$ can lead to $t$ (tightness).
\end{enumerate}

\paragraph{The representation theorem of~\cite{benthem_new_2019} and two-step games.}
Let $\mathcal{X}_a,\mathcal{Y}_b\subseteq \mathcal{P}(T)$ be the collections of basic powers of $a$ and $b$
induced by such a game. Van Benthem, Bezhanishvili, and Enqvist isolate three abstract conditions on
$(\mathcal{X}_a,\mathcal{Y}_b)$:

\begin{itemize}
    \item \Fdefs{Non-emptiness:} $\mathcal{X}_a\neq\emptyset$ and $\mathcal{Y}_b\neq\emptyset$.
    \item \Fdefs{Consistency:} for all $X\in\mathcal{X}_a$ and $Y\in\mathcal{Y}_b$, we have $X\cap Y\neq\emptyset$.
    \item \Fdefs{Exhaustiveness:} $\bigcup \mathcal{X}_a = \bigcup \mathcal{Y}_b$.
\end{itemize}

They prove that:

\begin{enumerate}
    \item every such game induces a pair $(\mathcal{X}_a,\mathcal{Y}_b)$ satisfying these properties, and 
    \item conversely every pair $(\mathcal{X}_a,\mathcal{Y}_b)$ satisfying them is induced by some such game.
\end{enumerate}

Moreover, in their proof of the converse direction it suffices to consider a particularly simple
subclass of games with only two decision stages:
in stage~1 agent $a$ chooses an action; in stage~2 agent $b$ chooses an action; agent $b$ does \emph{not}
observe what $a$ did (all stage~2 states are $b$-indistinguishable); and the game then terminates.
We call these \Fdefs{two-step games}. In a two-step game, $a$'s (uniform) strategies coincide with
the available stage~1 actions, and likewise $b$'s (uniform) strategies coincide with the available
stage~2 actions.

\paragraph{A power-invariance transformation.}
We now relate two-step games to the local components of two-agent concurrent game frames.
Call such local components \Fdefs{local concurrent components}.

Given a two-step game $\mathcal{G}$, define a corresponding local concurrent component $\mathcal{C}(\mathcal{G})$
by taking:
(i) $a$'s available actions to be the stage~1 actions of $\mathcal{G}$;
(ii) $b$'s available actions to be the stage~2 actions of $\mathcal{G}$; and
(iii) the outcome function to map each action pair $(\alpha,\beta)$ to the terminal state reached in $\mathcal{G}$
when $a$ plays $\alpha$ and $b$ plays $\beta$.

Conversely, every local concurrent component $\mathcal{C}$ determines a two-step game $\mathcal{G}(\mathcal{C})$
by ``unfolding'' the concurrent move into two stages as above, with $b$ unable to observe $a$'s stage~1 choice.

This correspondence is \emph{power-invariant}: under the translations $\mathcal{G}\mapsto \mathcal{C}(\mathcal{G})$
and $\mathcal{C}\mapsto \mathcal{G}(\mathcal{C})$, the basic powers in the two-step game coincide with the actual
powers induced by the associated local concurrent component.

\paragraph{The derived singleton-coalition theorem.}

Combining the representation theorem for two-step games and the power-invariance correspondence above,
we obtain a representation theorem for two-agent concurrent game frames, restricted to singleton coalitions:

\begin{enumerate}
    \item every two-agent concurrent game frame induces a pair of collections of actual powers satisfying \emph{non-emptiness}, \emph{consistency}, and \emph{exhaustiveness}; and
    \item conversely, every pair of collections satisfying these properties is induced by some two-agent concurrent game frame.
\end{enumerate}

\paragraph{An alternative characterization of two-agent $\PAC$-representative $\mathtt{SID}$-frames.}

The representation theorem in~\cite{benthem_new_2019} is formulated for
basic powers of individual agents. It therefore gives conditions only on the
two singleton neighborhoods. The next
proposition shows that, for two agents, the
non-emptiness, consistency, and exhaustiveness conditions from
\cite{benthem_new_2019}, together with two induced clauses for the empty
coalition and the grand coalition, are exactly the same as
$\PAC$-representativeness plus seriality, independence, and determinism.

\begin{proposition}[Alternative characterization of two-agent $\PAC$-representative $\mathtt{SID}$-frames]
\label{prop:benthem-style-pac-sid}
Assume that $\FAG=\{a,b\}$, and let
\[
\FACNF=(\FST,\{\Facnei_\FCC \mid \FCC\subseteq\FAG\})
\]
be an actual neighborhood frame. For readability, write $\Facnei_a$ and
$\Facnei_b$ for $\Facnei_{\{a\}}$ and $\Facnei_{\{b\}}$, respectively. The
following conditions are equivalent.

\begin{enumerate}[label=(\arabic*)]

\item \label{item:benthem-style-conditions}
For every $s\in\FST$, the following five clauses hold:

\begin{enumerate}[label=(\alph*)]

\item \Fdefs{Non-emptiness:}
\[
\Facnei_a(s)\neq\emptyset
\quad\text{and}\quad
\Facnei_b(s)\neq\emptyset .
\]

\item \Fdefs{Consistency:}
for all $X\in\Facnei_a(s)$ and all $Y\in\Facnei_b(s)$,
\[
X\cap Y\neq\emptyset .
\]

\item \Fdefs{Exhaustiveness:}
\[
\bigcup\Facnei_a(s)=\bigcup\Facnei_b(s).
\]

\item \Fdefs{Empty-coalition clause:}
\[
\Facnei_\emptyset(s)
=
\left\{
\bigcup\Facnei_a(s)
\right\}.
\]

\item \Fdefs{Grand-coalition clause:}
\[
\Facnei_\FAG(s)
=
\left\{
\{t\}\mid t\in\bigcup\Facnei_a(s)
\right\}.
\]

\end{enumerate}

\item \label{item:pac-representative-sid-conditions}
$\FACNF$ is a $\PAC$-representative actual neighborhood $\mathtt{SID}$-frame;
that is, it satisfies actual triviality of the empty coalition, liveness,
actual power inclusion, actual power decomposition, $\PAC$-seriality,
$\PAC$-independence, and $\PAC$-determinism.

\end{enumerate}
\end{proposition}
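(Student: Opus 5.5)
We prove the two implications separately and pointwise, fixing an arbitrary $s\in\FST$ and writing $N_\FCC$ for $\Facnei_\FCC(s)$. Throughout, $E:=\bigcup N_a$ plays the role of the successor set.

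\textbf{From \ref{item:benthem-style-conditions} to \ref{item:pac-representative-sid-conditions}.} The verification is clause by clause.

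\emph{Triviality.} By (d), $N_\emptyset=\{E\}$, which is a singleton.

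\emph{Liveness.} Consistency (b) together with non-emptiness (a) gives that every $X\in N_a$ meets some $Y\in N_b$, hence $X\neq\emptyset$; symmetrically for $N_b$. Consequently $E\neq\emptyset$, so $N_\emptyset$ and $N_\FAG$ do not contain $\emptyset$ either.

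\emph{Seriality.} $N_a,N_b\neq\emptyset$ by (a). Also $N_\emptyset\neq\emptyset$, and $N_\FAG\neq\emptyset$ because $E\neq\emptyset$.

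\emph{Determinism.} This is immediate from (e).

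\emph{Inclusion.} For $\FCC\subseteq\FDD$ the cases are as follows. Everything in $N_\FDD$ lies in $E=\bigcup N_b$ by (c), so $E\in N_\emptyset$ always serves as the witness when $\FCC=\emptyset$. A singleton $\{t\}\in N_\FAG$ has $t\in E=\bigcup N_a=\bigcup N_b$, so it is included in some member of $N_a$ and in some member of $N_b$.

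\emph{Decomposition.} Every $X\in N_\FCC$ is a subset of $E$. The nontrivial cases are these. For $\FCC=\emptyset$ and $\FDD=a$ (resp.\ $b$), take $\Delta=N_a$ (resp.\ $N_b$, using (c)). For $\FDD=\FAG$, take $\Delta=\{\{t\}\mid t\in X\}\subseteq N_\FAG$.

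\emph{Independence.} The only nontrivial instance is $X\in N_a$, $Y\in N_b$. By (b) pick $t\in X\cap Y$. Then $\{t\}\in N_\FAG$ (as $t\in E$) and $\{t\}\subseteq X\cap Y$. The instances involving $\emptyset$ are handled by taking $Z$ to be the smaller power itself. For example, with $\FCC=\emptyset$ and $Y\in N_\FDD$, we take $Z=Y\subseteq E\cap Y$.

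\textbf{From \ref{item:pac-representative-sid-conditions} to \ref{item:benthem-style-conditions}.}

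\emph{(a)} This is seriality.

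\emph{(b)} $\PAC$-independence gives some $Z\in N_\FAG$ with $Z\subseteq X\cap Y$, and liveness gives $Z\neq\emptyset$.

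\emph{(c)} This follows from Fact~\ref{fact:enabling-same-successors}, which gives $\bigcup N_a=\bigcup N_\emptyset=\bigcup N_b$.

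\emph{(d)} By Fact~\ref{fact:empty-coalition-unique-overall-successor}, using seriality to get $N_\emptyset\neq\emptyset$, we have $N_\emptyset=\{\bigcup N_\FAG\}$. Fact~\ref{fact:enabling-same-successors} turns this into $\{\bigcup N_a\}$.

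\emph{(e)} By determinism every member of $N_\FAG$ is a singleton $\{t\}$, and $t\in\bigcup N_\FAG=\bigcup N_a$. Conversely, if $t\in\bigcup N_a=\bigcup N_\FAG$ (again by Fact~\ref{fact:enabling-same-successors}), then $t$ lies in some member of $N_\FAG$, and that member must be $\{t\}$.

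\textbf{Main obstacle.} No step is deep. The only places needing care are the bookkeeping of the coalition pairs involving $\emptyset$ in the independence and inclusion checks of the first direction, and making sure that liveness really is derivable from (a)+(b) alone. That derivation works because non-emptiness of $N_b$ supplies a $Y$ for each $X\in N_a$ to intersect, and vice versa.
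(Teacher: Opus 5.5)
Your proposal is correct and follows the paper's proof essentially clause by clause; in particular, the direction from (1) to (2) matches the paper's case analysis almost verbatim. The only difference is in the direction from (2) to (1): you cite Facts~\ref{fact:enabling-same-successors} and~\ref{fact:empty-coalition-unique-overall-successor} for exhaustiveness and for the empty- and grand-coalition clauses, whereas the paper re-derives these identities inline from inclusion, decomposition and determinism (obtaining exhaustiveness from the grand-coalition identity for both agents), which is a cosmetic rather than substantive difference.
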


\begin{proof}

Fix $s\in\FST$.

\smallskip

\ref{item:benthem-style-conditions} $\Rightarrow$
\ref{item:pac-representative-sid-conditions}.

Assume first that condition~\ref{item:benthem-style-conditions} holds at $s$.
Put
\[
A=\Facnei_a(s),
\qquad
B=\Facnei_b(s),
\qquad
U=\bigcup A.
\]

By non-emptiness, choose \(X_0\in A\) and \(Y_0\in B\). By consistency,
\(X_0\cap Y_0\neq\emptyset\). Hence $X_0 \neq\emptyset$. Hence \(U\neq\emptyset\). Exhaustiveness gives
\(U=\bigcup B\).

We shall use the following simple observation: every actual power at $s$ is
contained in $U$. This is immediate for members of $A$, follows from
exhaustiveness for members of $B$, follows from the empty-coalition clause for
$\Facnei_\emptyset(s)$, and follows from the grand-coalition clause for
$\Facnei_\FAG(s)$.

\begin{itemize}

\item 

\textbf{Actual triviality of the empty coalition.}
By the empty-coalition clause,
$
\Facnei_\emptyset(s)=\{U\}.
$
Thus $\Facnei_\emptyset(s)$ is a singleton.

\item 

\textbf{Liveness.}
If
$\emptyset\in A$, then for any $Y\in B$ we would have
$\emptyset\cap Y=\emptyset$, contradicting consistency. Hence
$\emptyset\notin A$. The same argument gives $\emptyset\notin B$.
Moreover, $\emptyset\notin\Facnei_\emptyset(s)$ because
$\Facnei_\emptyset(s)=\{U\}$ and $U\neq\emptyset$. Finally,
$\emptyset\notin\Facnei_\FAG(s)$ because every member of
$\Facnei_\FAG(s)$ is a singleton $\{t\}$ with $t\in U$.

\item 

\textbf{Actual power inclusion.}
Let $\FCC\subseteq\FDD\subseteq\FAG$ and let
$X\in\Facnei_\FDD(s)$. We find $Y\in\Facnei_\FCC(s)$ such that
$X\subseteq Y$.

\begin{itemize}

\item 

Suppose $\FCC=\FDD$. Take $Y=X$.

\item

Suppose $\FCC=\emptyset$. Then
$\Facnei_\emptyset(s)=\{U\}$, and by the observation above $X\subseteq U$;
hence $Y=U$ works.

\item

It remains to consider the proper inclusions
$\FCC = \{a\}\subseteq \FAG = \FDD$ and $\FCC = \{b\}\subseteq\FAG = \FDD$. By the grand-coalition clause, $X=\{t\}$ for some
$t\in U$. Since
$
U=\bigcup A=\bigcup B,
$
there are $Y_a\in A$ and $Y_b\in B$ such that
$t\in Y_a$ and $t\in Y_b$. Hence $X\subseteq Y_a$ in the case
$\FCC=\{a\}$, and $X\subseteq Y_b$ in the case $\FCC=\{b\}$.

\end{itemize}

\item 

\textbf{Actual power decomposition.}
Let $\FCC\subseteq\FDD\subseteq\FAG$ and let
$X\in\Facnei_\FCC(s)$. We find $\Delta\subseteq\Facnei_\FDD(s)$ such that
$X=\bigcup\Delta$.

\begin{itemize}

\item 
Assume $\FCC=\FDD$. Take $\Delta=\{X\}$.

\item

Assume $\FCC=\emptyset$. Then $X=U$. If $\FDD=\{a\}$, take $\Delta=A$; if $\FDD=\{b\}$, take
$\Delta=B$; and if $\FDD=\FAG$, take $\Delta=\Facnei_\FAG(s)$.
In all three cases, the relevant clause gives $\bigcup\Delta=U=X$.

\item 
Finally, consider the inclusions $\{a\}\subseteq\FAG$ and
$\{b\}\subseteq\FAG$. If $X\in A\cup B$, then $X\subseteq U$, and we may take
$
\Delta=\{\{t\}\mid t\in X\}.
$
By the grand-coalition clause, $\Delta\subseteq\Facnei_\FAG(s)$, and clearly
$X=\bigcup\Delta$.

\end{itemize}

\item 

\textbf{$\PAC$-seriality.}
By non-emptiness, $A\neq\emptyset$ and $B\neq\emptyset$. The empty-coalition
clause gives $\Facnei_\emptyset(s)=\{U\}$, and the grand-coalition clause gives
\[
\Facnei_\FAG(s)=\{\{t\}\mid t\in U\}.
\]
Since $U\neq\emptyset$, both of these neighborhoods are nonempty. Hence every
coalition has at least one actual power at $s$.

\item 

\textbf{$\PAC$-independence.}
Let $\FCC,\FDD\subseteq\FAG$ be disjoint, and let
$X\in\Facnei_\FCC(s)$ and $Y\in\Facnei_\FDD(s)$. We find
$Z\in\Facnei_{\FCC\cup\FDD}(s)$ such that $Z\subseteq X\cap Y$.

Suppose $\FCC=\emptyset$. Then $X=U$. Since $Y\subseteq U$, taking $Z=Y$ works. The case $\FDD=\emptyset$ is symmetric.

The only remaining nontrivial case,
up to symmetry, is $\FCC=\{a\}$ and $\FDD=\{b\}$. Then $X\in A$ and
$Y\in B$. By consistency, choose $t\in X\cap Y$. By the grand-coalition
clause, $\{t\}\in\Facnei_\FAG(s)$. Thus $Z=\{t\}$ is the required
refinement.

\item 

\textbf{$\PAC$-determinism.}
This is immediate from the grand-coalition clause, since every member of
$\Facnei_\FAG(s)$ is a singleton.

\end{itemize}

Thus condition~\ref{item:pac-representative-sid-conditions}
holds.

\smallskip
\ref{item:pac-representative-sid-conditions} $\Rightarrow$
\ref{item:benthem-style-conditions}.

Conversely, assume that condition~\ref{item:pac-representative-sid-conditions}
holds. Put
\[
U_a=\bigcup\Facnei_a(s),
\qquad
U_b=\bigcup\Facnei_b(s).
\]

\begin{itemize}

\item 

\textbf{Non-emptiness.}
By $\PAC$-seriality,
$
\Facnei_a(s)\neq\emptyset
\quad\text{and}\quad
\Facnei_b(s)\neq\emptyset .
$

\item 

\textbf{Consistency.}
Let $X\in\Facnei_a(s)$ and $Y\in\Facnei_b(s)$. By $\PAC$-independence, there
is some $Z\in\Facnei_\FAG(s)$ such that $Z\subseteq X\cap Y$. By liveness,
$Z\neq\emptyset$. Hence $X\cap Y\neq\emptyset$.

\item 

\textbf{Empty-coalition clause.}
By $\PAC$-seriality, $\Facnei_\emptyset(s)\neq\emptyset$. By actual triviality
of the empty coalition, there is a set $W$ such that
$
\Facnei_\emptyset(s)=\{W\}.
$
We show that $W=U_a$. First, let $t\in U_a$. Then $t\in X$ for some
$X\in\Facnei_a(s)$. By actual power inclusion, applied to
$\emptyset\subseteq\{a\}$, there is $Y\in\Facnei_\emptyset(s)$ such that
$X\subseteq Y$. Since $\Facnei_\emptyset(s)=\{W\}$, we have $Y=W$, and hence
$t\in W$. Thus $U_a\subseteq W$.

Conversely, by actual power decomposition, applied to
$\emptyset\subseteq\{a\}$ and $W\in\Facnei_\emptyset(s)$, there is
$\Delta\subseteq\Facnei_a(s)$ such that
$
W=\bigcup\Delta .
$
Hence $W\subseteq U_a$. Therefore $W=U_a$, and so
$
\Facnei_\emptyset(s)
=
\left\{
\bigcup\Facnei_a(s)
\right\}.
$

\item 

\textbf{Grand-coalition clause and exhaustiveness.}
We first prove that, for each $i\in\{a,b\}$,
\[
\Facnei_\FAG(s)=\{\{t\}\mid t\in U_i\},
\qquad
\text{where }U_i=\bigcup\Facnei_i(s).
\tag{$*$}
\]
Fix $i\in\{a,b\}$.

For the left-to-right inclusion, let $Z\in\Facnei_\FAG(s)$. By
$\PAC$-determinism, $Z=\{t\}$ for some $t\in\FST$. By actual power inclusion,
applied to $\{i\}\subseteq\FAG$, there is $X\in\Facnei_i(s)$ such that
$Z\subseteq X$. Hence $t\in U_i$.

For the right-to-left inclusion, let $t\in U_i$. Choose
$X\in\Facnei_i(s)$ with $t\in X$. By actual power decomposition, applied to
$\{i\}\subseteq\FAG$, there is $\Delta\subseteq\Facnei_\FAG(s)$ such that
$
X=\bigcup\Delta .
$
Since $t\in X$, some $Z\in\Delta$ contains $t$. By $\PAC$-determinism, $Z$ is
a singleton, and therefore $Z=\{t\}$. Hence
$\{t\}\in\Facnei_\FAG(s)$, proving $(*)$.

Taking $i=a$ in $(*)$ gives the grand-coalition clause:
$
\Facnei_\FAG(s)
=
\left\{
\{t\}\mid t\in\bigcup\Facnei_a(s)
\right\}.
$

Taking $i=a$ and $i=b$ in $(*)$ gives
$
\{\{t\}\mid t\in U_a\}
=
\Facnei_\FAG(s)
=
\{\{t\}\mid t\in U_b\}.
$
Hence $U_a=U_b$, that is,
$
\bigcup\Facnei_a(s)=\bigcup\Facnei_b(s).
$
This is exhaustiveness.

\end{itemize}

Thus condition~\ref{item:benthem-style-conditions} holds.

\end{proof}

\section{A worked example for the two-agent actual representation construction}
\label{app:full-example-actual-enoughness}

Here we give a complete worked local instance at a state $s$ of the proof construction in Theorem~\ref{thm:actual-enoughness}. For readability, we suppress the parameter $s$ below.

\paragraph{Local input.}

Let $u,v\in\FST$, and define
\[
\Facnei_\emptyset=\bigl\{\{u,v\}\bigr\},\qquad
\Facnei_a=\bigl\{\{u,v\}\bigr\},\qquad
\Facnei_b=\bigl\{\{u,v\}\bigr\},\qquad
\Facnei_\FAG=\bigl\{\{u\},\{v\}\bigr\}.
\]
Write
\[
W:=\{u,v\},\qquad U:=\{u\},\qquad V:=\{v\}.
\]
Thus
\[
\Facnei_\emptyset=\{W\},\qquad
\Facnei_a=\{W\},\qquad
\Facnei_b=\{W\},\qquad
\Facnei_\FAG=\{U,V\},
\]
and $W=U\cup V$.

\paragraph{Step 1: introducing names for individual powers.}

Introduce the following actions:
\[
\begin{aligned}
&\alpha_{1-W-U},\alpha_{1-W-V},\alpha_{2-W-U},\alpha_{2-W-V},\alpha_{3-W-U},\alpha_{3-W-V};\\
&\beta_{1-W-U},\beta_{1-W-V},\beta_{2-W-U},\beta_{2-W-V},\beta_{3-W-U},\beta_{3-W-V}.
\end{aligned}
\]
All of the actions in the first line are available for $a$, all of the actions in the second line are available for $b$, and
\[
\Fout_a(\alpha_{n-W-U})=\Fout_a(\alpha_{n-W-V})=W,
\quad
\Fout_b(\beta_{n-W-U})=\Fout_b(\beta_{n-W-V})=W
\quad (n=1,2,3).
\]

\paragraph{Step 2: enforcing the GCI-condition for names of individual powers.}

Here $\Delta_W=\{U,V\}$ (for both the $a$-side and the $b$-side), and $\bigcup \Delta_W=U\cup V=W$.
We now pair actions exactly as prescribed by the proof.

\begin{enumerate}

\item \textbf{Make all Group~1 names of $a$ satisfy the GCI-condition.}
\[
\begin{aligned}
&(\alpha_{1-W-U},\beta_{2-W-U})\in\Fav_\FAG,\quad \Fout_\FAG(\alpha_{1-W-U},\beta_{2-W-U})=U,\\
&(\alpha_{1-W-U},\beta_{2-W-V})\in\Fav_\FAG,\quad \Fout_\FAG(\alpha_{1-W-U},\beta_{2-W-V})=V,\\
&(\alpha_{1-W-V},\beta_{2-W-U})\in\Fav_\FAG,\quad \Fout_\FAG(\alpha_{1-W-V},\beta_{2-W-U})=U,\\
&(\alpha_{1-W-V},\beta_{2-W-V})\in\Fav_\FAG,\quad \Fout_\FAG(\alpha_{1-W-V},\beta_{2-W-V})=V.
\end{aligned}
\]

\item \textbf{Make all Group~1 names of $b$ satisfy the GCI-condition.}
\[
\begin{aligned}
&(\alpha_{2-W-U},\beta_{1-W-U})\in\Fav_\FAG,\quad \Fout_\FAG(\alpha_{2-W-U},\beta_{1-W-U})=U,\\
&(\alpha_{2-W-V},\beta_{1-W-U})\in\Fav_\FAG,\quad \Fout_\FAG(\alpha_{2-W-V},\beta_{1-W-U})=V,\\
&(\alpha_{2-W-U},\beta_{1-W-V})\in\Fav_\FAG,\quad \Fout_\FAG(\alpha_{2-W-U},\beta_{1-W-V})=U,\\
&(\alpha_{2-W-V},\beta_{1-W-V})\in\Fav_\FAG,\quad \Fout_\FAG(\alpha_{2-W-V},\beta_{1-W-V})=V.
\end{aligned}
\]

\item \textbf{Make all Group~2 names of $a$ satisfy the GCI-condition.}
\[
\begin{aligned}
&(\alpha_{2-W-U},\beta_{3-W-U})\in\Fav_\FAG,\quad \Fout_\FAG(\alpha_{2-W-U},\beta_{3-W-U})=U,\\
&(\alpha_{2-W-U},\beta_{3-W-V})\in\Fav_\FAG,\quad \Fout_\FAG(\alpha_{2-W-U},\beta_{3-W-V})=V,\\
&(\alpha_{2-W-V},\beta_{3-W-U})\in\Fav_\FAG,\quad \Fout_\FAG(\alpha_{2-W-V},\beta_{3-W-U})=U,\\
&(\alpha_{2-W-V},\beta_{3-W-V})\in\Fav_\FAG,\quad \Fout_\FAG(\alpha_{2-W-V},\beta_{3-W-V})=V.
\end{aligned}
\]

\item \textbf{Make all Group~2 names of $b$ satisfy the GCI-condition.}
\[
\begin{aligned}
&(\alpha_{3-W-U},\beta_{2-W-U})\in\Fav_\FAG,\quad \Fout_\FAG(\alpha_{3-W-U},\beta_{2-W-U})=U,\\
&(\alpha_{3-W-V},\beta_{2-W-U})\in\Fav_\FAG,\quad \Fout_\FAG(\alpha_{3-W-V},\beta_{2-W-U})=V,\\
&(\alpha_{3-W-U},\beta_{2-W-V})\in\Fav_\FAG,\quad \Fout_\FAG(\alpha_{3-W-U},\beta_{2-W-V})=U,\\
&(\alpha_{3-W-V},\beta_{2-W-V})\in\Fav_\FAG,\quad \Fout_\FAG(\alpha_{3-W-V},\beta_{2-W-V})=V.
\end{aligned}
\]

\item \textbf{Make all Group~3 names of $a$ satisfy the GCI-condition.}
\[
\begin{aligned}
&(\alpha_{3-W-U},\beta_{1-W-U})\in\Fav_\FAG,\quad \Fout_\FAG(\alpha_{3-W-U},\beta_{1-W-U})=U,\\
&(\alpha_{3-W-U},\beta_{1-W-V})\in\Fav_\FAG,\quad \Fout_\FAG(\alpha_{3-W-U},\beta_{1-W-V})=V,\\
&(\alpha_{3-W-V},\beta_{1-W-U})\in\Fav_\FAG,\quad \Fout_\FAG(\alpha_{3-W-V},\beta_{1-W-U})=U,\\
&(\alpha_{3-W-V},\beta_{1-W-V})\in\Fav_\FAG,\quad \Fout_\FAG(\alpha_{3-W-V},\beta_{1-W-V})=V.
\end{aligned}
\]

\item \textbf{Make all Group~3 names of $b$ satisfy the GCI-condition.}
\[
\begin{aligned}
&(\alpha_{1-W-U},\beta_{3-W-U})\in\Fav_\FAG,\quad \Fout_\FAG(\alpha_{1-W-U},\beta_{3-W-U})=U,\\
&(\alpha_{1-W-V},\beta_{3-W-U})\in\Fav_\FAG,\quad \Fout_\FAG(\alpha_{1-W-V},\beta_{3-W-U})=V,\\
&(\alpha_{1-W-U},\beta_{3-W-V})\in\Fav_\FAG,\quad \Fout_\FAG(\alpha_{1-W-U},\beta_{3-W-V})=U,\\
&(\alpha_{1-W-V},\beta_{3-W-V})\in\Fav_\FAG,\quad \Fout_\FAG(\alpha_{1-W-V},\beta_{3-W-V})=V.
\end{aligned}
\]

\end{enumerate}

\paragraph{Step 3: pair names of individual powers whenever possible.}
The only unpaired joint actions after Step~2 are the same-group pairs.
We add all of them and set their grand-coalition outcome to $U$.

\smallskip

\noindent\textit{Group 1:}
\[
\begin{aligned}
&(\alpha_{1-W-U},\beta_{1-W-U})\in\Fav_\FAG,\quad \Fout_\FAG(\alpha_{1-W-U},\beta_{1-W-U})=U,\\
&(\alpha_{1-W-U},\beta_{1-W-V})\in\Fav_\FAG,\quad \Fout_\FAG(\alpha_{1-W-U},\beta_{1-W-V})=U,\\
&(\alpha_{1-W-V},\beta_{1-W-U})\in\Fav_\FAG,\quad \Fout_\FAG(\alpha_{1-W-V},\beta_{1-W-U})=U,\\
&(\alpha_{1-W-V},\beta_{1-W-V})\in\Fav_\FAG,\quad \Fout_\FAG(\alpha_{1-W-V},\beta_{1-W-V})=U.
\end{aligned}
\]

\noindent\textit{Group 2:}
\[
\begin{aligned}
&(\alpha_{2-W-U},\beta_{2-W-U})\in\Fav_\FAG,\quad \Fout_\FAG(\alpha_{2-W-U},\beta_{2-W-U})=U,\\
&(\alpha_{2-W-U},\beta_{2-W-V})\in\Fav_\FAG,\quad \Fout_\FAG(\alpha_{2-W-U},\beta_{2-W-V})=U,\\
&(\alpha_{2-W-V},\beta_{2-W-U})\in\Fav_\FAG,\quad \Fout_\FAG(\alpha_{2-W-V},\beta_{2-W-U})=U,\\
&(\alpha_{2-W-V},\beta_{2-W-V})\in\Fav_\FAG,\quad \Fout_\FAG(\alpha_{2-W-V},\beta_{2-W-V})=U.
\end{aligned}
\]

\noindent\textit{Group 3:}
\[
\begin{aligned}
&(\alpha_{3-W-U},\beta_{3-W-U})\in\Fav_\FAG,\quad \Fout_\FAG(\alpha_{3-W-U},\beta_{3-W-U})=U,\\
&(\alpha_{3-W-U},\beta_{3-W-V})\in\Fav_\FAG,\quad \Fout_\FAG(\alpha_{3-W-U},\beta_{3-W-V})=U,\\
&(\alpha_{3-W-V},\beta_{3-W-U})\in\Fav_\FAG,\quad \Fout_\FAG(\alpha_{3-W-V},\beta_{3-W-U})=U,\\
&(\alpha_{3-W-V},\beta_{3-W-V})\in\Fav_\FAG,\quad \Fout_\FAG(\alpha_{3-W-V},\beta_{3-W-V})=U.
\end{aligned}
\]

\paragraph{Final local game.}

Let $\FAC$ be the set of all actions introduced above.
Define:
\[
\Fav_\emptyset=\{\emptyset\},\qquad \Fout_\emptyset(\emptyset)=\bigcup\Facnei_\FAG=U\cup V=W,
\]
\[
\Fav_a=\{\alpha_{n-W-U},\alpha_{n-W-V}\mid n=1,2,3\},
\quad
\Fav_b=\{\beta_{n-W-U},\beta_{n-W-V}\mid n=1,2,3\},
\]
with $\Fout_a,\Fout_b$ as specified in Step~1, and let $\Fav_\FAG,\Fout_\FAG$ be exactly as listed in Steps~2 and~3.

\paragraph{Checks.}

\begin{enumerate}
\item \textbf{The ODA-condition.} By construction, each available coalition action has nonempty outcome; unavailable actions are left with empty outcome.

\item \textbf{The GCI-condition for individual actions.} Every $\alpha$-action and every $\beta$-action has extensions with grand-coalition outcomes $U$ and $V$, so the union of all its grand-coalition extensions is $U\cup V=W$.
\item \textbf{The GCI-condition for the unique action $\emptyset$ of the empty coalition.} Since both $U$ and $V$ occur as grand-coalition outcomes, we have
\[
\bigcup\{\Fout_\FAG(\sigma_\FAG)\mid \sigma_\FAG\in\Fav_\FAG\}=U\cup V=W=\Fout_\emptyset(\emptyset).
\]
\item \textbf{$\PAC$-representation.} We obtain exactly the given local neighborhoods:
\[
\{\Fout_\emptyset(\emptyset) \}=\{W\}=\Facnei_\emptyset,
\]
\[
\{\Fout_a(\alpha)\mid \alpha\in\Fav_a\}=\{W\}=\Facnei_a,\qquad
\{\Fout_b(\beta)\mid \beta\in\Fav_b\}=\{W\}=\Facnei_b,
\]
\[
\{\Fout_\FAG(\sigma_\FAG)\mid \sigma_\FAG\in\Fav_\FAG\}=\{U,V\}=\Facnei_\FAG.
\]
\end{enumerate}

\section{Equivalence between truly playable alpha neighborhood frames and $\PAL$-representative alpha neighborhood $\mathtt{SID}$-frames}
\label{app:true-playability-pal-representative-sid}

In this section, we show that the class of truly playable alpha neighborhood frames, introduced by Pauly~\cite{pauly_modal_2002} and Goranko, Jamroga, and Turrini~\cite{goranko_strategic_2013}, coincides with the class of $\PAL$-representative alpha neighborhood $\mathtt{SID}$-frames.

\begin{definition}[True playability {\cite{goranko_strategic_2013}}]
\label{def:true-playability}

An alpha neighborhood frame $\FALNF = (\FST, \{\Falnei_\FCC \mid \FCC \subseteq \FAG\})$ is \Fdefs{truly playable} if the following conditions hold:

\begin{enumerate}

\item \textbf{Liveness:} for every $s \in \FST$ and every $\FCC \subseteq \FAG$, we have $\emptyset \notin \Falnei_\FCC(s)$;

\item \textbf{Safety:} for every $s \in \FST$ and every $\FCC \subseteq \FAG$, we have $\FST \in \Falnei_\FCC(s)$;

\item \textbf{Superadditivity:} for every $s \in \FST$, every $\FCC, \FDD \subseteq \FAG$ with $\FCC \cap \FDD = \emptyset$, and every $X \in \Falnei_\FCC(s)$ and $Y \in \Falnei_\FDD(s)$, we have $X \cap Y \in \Falnei_{\FCC \cup \FDD}(s)$;

\item \textbf{$\FAG$-maximality:} for every $s \in \FST$ and every $X \subseteq \FST$, if $\overline{X} \notin \Falnei_\emptyset(s)$, then $X \in \Falnei_\FAG(s)$;

\item \textbf{Crown:} for every $s \in \FST$ and every $X \in \Falnei_\FAG(s)$, there exists $x \in X$ such that $\{x\} \in \Falnei_\FAG(s)$.

\end{enumerate}

\end{definition}

For ease of reference, we recall the relevant terminology.
The notion of $\PAL$-representability for alpha neighborhood frames (including
alpha triviality of the empty coalition, liveness, groundedness of alpha powers, and monotonicity of alpha neighborhoods) was introduced in
Definition~\ref{def:pal-representative-alpha-neighborhood-frames}. The associated properties of $\PAL$-seriality,
$\PAL$-independence, and $\PAL$-determinism were defined in
Definition~\ref{def:seriality-independence-determinism-pal-alpha-neighborhood-frames}.

\begin{proposition}
\label{prop:true-playability-iff-pal-representative-sid}

Let $\FALNF = (\FST,\{\Falnei_\FCC \mid \FCC \subseteq \FAG\})$ be an alpha neighborhood frame.

Then $\FALNF$ is truly playable if and only if it is an $\PAL$-representative $\mathtt{SID}$-frame.

\end{proposition}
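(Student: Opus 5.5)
The plan is to prove both implications directly from the definitions, for an arbitrary finite $\FAG$, working at a fixed state $s$. The organising device is the set
\[
T \;=\; \{\, t\in\FST \mid \{t\}\in\Falnei_\FAG(s)\,\}.
\]
In both directions I will show that $\Falnei_\emptyset(s)$ is exactly the set of supersets of $T$, and that $\Falneinc_\emptyset(s)=\{T\}$. Everything else then reduces to short manipulations.

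\emph{Truly playable $\Rightarrow$ $\PAL$-representative $\mathtt{SID}$.}
\begin{itemize}
\item Liveness is shared by both definitions, and $\PAL$-independence is literally superadditivity.
\item $\PAL$-seriality follows from safety.
\item Monotonicity of alpha neighborhoods: for $\FCC\subseteq\FDD$ and $X\in\Falnei_\FCC(s)$, apply superadditivity to $\FCC$ and $\FDD\setminus\FCC$ with $Y=\FST$ (available by safety). This gives $X\cap\FST=X\in\Falnei_\FDD(s)$.
\item By crown, every member of $\Falnei_\FAG(s)$ meets $T$, so safety gives $T\neq\emptyset$.
\item $N_\emptyset$ is the up-set of $T$, first inclusion. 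Suppose $X\in\Falnei_\emptyset(s)$ and $t\in T\setminus X$. Superadditivity for $\emptyset$ and $\FAG$ gives $X\cap\{t\}=\emptyset\in\Falnei_\FAG(s)$, contradicting liveness. So every member of $\Falnei_\emptyset(s)$ contains $T$.
\item $N_\emptyset$ is the up-set of $T$, second inclusion. Let $X\supseteq T$. Then $\overline{X}$ misses $T$, so by crown $\overline{X}\notin\Falnei_\FAG(s)$. Applying $\FAG$-maximality to $\overline{X}$ (whose complement is $X$), we must have $X\in\Falnei_\emptyset(s)$.
\item Consequently $\Falneinc_\emptyset(s)=\{T\}$, which gives alpha triviality of the empty coalition.
\item Groundedness: for $X\in\Falnei_\FCC(s)$, superadditivity with $T\in\Falnei_\emptyset(s)$ gives $X\cap T\in\Falnei_\FCC(s)$.
\item Determinism~(a): a minimal element of $\Falnei_\FAG(s)$ contains some $x$ with $\{x\}\in\Falnei_\FAG(s)$ by crown, so it equals $\{x\}$.
\item Determinism~(b): each $\{t\}$ with $t\in T$ is minimal by liveness, so $T\subseteq\bigcup\Falneinc_\FAG(s)$.
\end{itemize}

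\emph{$\PAL$-representative $\mathtt{SID}$ $\Rightarrow$ truly playable.}
\begin{itemize}
\item Liveness and superadditivity are immediate.
\item Safety follows from $\PAL$-seriality together with upward closure.
\item By seriality and alpha triviality, $\Falneinc_\emptyset(s)=\{T'\}$ for some $T'$.
\item I will show that every $X\in\Falnei_\emptyset(s)$ contains $T'$. Groundedness gives $Y\in\Falnei_\emptyset(s)$ with $Y\subseteq X$ and $Y\subseteq T'$. Minimality of $T'$ forces $Y=T'$, hence $T'\subseteq X$.
\item Crown: let $X\in\Falnei_\FAG(s)$. Groundedness yields a nonempty $Y\subseteq X\cap T'$ in $\Falnei_\FAG(s)$; pick $y\in Y$. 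Determinism~(b) gives $y\in\bigcup\Falneinc_\FAG(s)$, and by determinism~(a) this means $\{y\}\in\Falneinc_\FAG(s)$.
\item $\FAG$-maximality: if $\overline{X}\notin\Falnei_\emptyset(s)$, then $\overline{X}\not\supseteq T'$, since $\Falnei_\emptyset(s)$ is upward closed and contains $T'$. So some $t\in T'\cap X$ exists. By determinism (b) then (a), $\{t\}\in\Falnei_\FAG(s)$, and upward closure gives $X\in\Falnei_\FAG(s)$.
\end{itemize}

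I expect the main obstacle to be the first direction's identification of $\Falnei_\emptyset(s)$ with the up-set of $T$. This is the only place where $\FAG$-maximality must be combined with crown, liveness and superadditivity, and both inclusions must be obtained. The converse direction needs the analogous fact that $T'$ is the least element of $\Falnei_\emptyset(s)$, and there it is groundedness, rather than minimality alone, that does the work.
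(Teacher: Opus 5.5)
Your proposal is correct, and for the direction from $\PAL$-representative $\mathtt{SID}$-frame to truly playable it is essentially the paper's argument. One step there is superfluous: your bullet showing that every member of $\Falnei_\emptyset(s)$ contains $T'$ is never used. $\FAG$-maximality only needs $T'\in\Falnei_\emptyset(s)$ together with upward closure, which is exactly how you in fact argue it. So that bullet, and your closing remark that groundedness ``does the work'' there, can be dropped.

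The forward direction, from truly playable, takes a genuinely different route.

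\begin{itemize}
\item \emph{The paper's route.} It takes $T$ to be the unique member of $\Falneinc_\emptyset(s)$, and gets its existence (alpha triviality) by citing Proposition~5 of \cite{goranko_strategic_2013}. For determinism~(b) it then shows that $T$ is the least element of $\Falnei_\emptyset(s)$, using superadditivity and minimality. Finally it applies $\FAG$-maximality to $\overline{\{x\}}$ for each $x\in T$.
\item \emph{Your route.} You define $T$ intrinsically as $\{t \mid \{t\}\in\Falnei_\FAG(s)\}$. You then prove directly that $\Falnei_\emptyset(s)$ is exactly the up-set of $T$. One inclusion comes from superadditivity of $\emptyset$ with $\FAG$, contradicting liveness. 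The other comes from crown plus $\FAG$-maximality applied to $\overline{X}$.
\end{itemize}

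What your route buys is a fully self-contained proof: alpha triviality is derived rather than imported, and this is precisely the step the paper outsources. It also makes determinism~(b) immediate, because the unique member of $\Falneinc_\emptyset(s)$ is by definition the set of states whose singletons are $\FAG$-powers, and each such singleton is minimal by liveness. As a by-product it gives an explicit description of the successor set. The paper's route is shorter on the page, but only because it relies on the external result.
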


\begin{proof}

($\Rightarrow$)

Assume that $\FALNF$ is truly playable.

\textbf{First, we show that $\FALNF$ is $\PAL$-representative.}

\begin{enumerate}

\item

\textbf{Alpha triviality of the empty coalition.}
By Proposition~5 of \cite{goranko_strategic_2013} (applied to our setting), true playability implies that $\Falneinc_\emptyset (s)$ is a singleton. Hence this condition holds.

\item

\textbf{Liveness.} This is immediate from the definition of true playability.

\item

\textbf{Groundedness of alpha powers.} Fix $s\in\FST$, $\FCC \subseteq \FAG$, and $X \in \Falnei_\FCC(s)$.

By item (1) (alpha triviality of the empty coalition), $\Falneinc_\emptyset(s)$ is a singleton. Let $ T \in \Falneinc_\emptyset(s)$. By superadditivity,
\[
X \cap T \in \Falnei_{\FCC\cup \emptyset}(s)=\Falnei_\FCC(s).
\]
Let $Y = X \cap T$. Then $Y \subseteq \bigcup \Falneinc_\emptyset(s)$ and $Y \subseteq X$, as required.

\item

\textbf{Monotonicity of alpha neighborhoods.}
Fix $s\in\FST$ and $\FCC, \FDD \subseteq \FAG$ with $\FCC \subseteq\FDD$. Let $X \in \Falnei_\FCC(s)$.

Let $\FCC'=\FDD\setminus\FCC$. Then $\FCC \cap \FCC' = \emptyset$. By safety, $\FST \in \Falnei_{\FCC'}(s)$. Hence, by superadditivity,
\[
X\cap \FST = X \in \Falnei_{\FCC\cup \FCC'}(s)=\Falnei_\FDD(s).
\]

\end{enumerate}

\textbf{Second, we show that $\FALNF$ is a $\mathtt{SID}$-frame.}

\begin{enumerate}

\item \textbf{$\PAL$-seriality.} This holds because $\FST \in \Falnei_\FCC(s)$ for all $s \in \FST$ and $\FCC \subseteq \FAG$ (by safety).

\item \textbf{$\PAL$-independence.} This is exactly superadditivity.

\item \textbf{$\PAL$-determinism.}

\begin{itemize}

\item[(a)] Fix $s\in \FST$ and $Z \in\Falneinc_\FAG(s)$. Assume that $Z$ is not a singleton. By the crown property, $\{x\} \in \Falnei_\FAG(s)$ for some $x \in Z$. Then $\{x\}\subset Z$, contradicting $Z \in\Falneinc_\FAG(s)$. Hence $Z$ is a singleton.

\item[(b)] Fix $s \in \FST$. As noted above, $\Falneinc_\emptyset(s)$ is a singleton. Let $ T \in \Falneinc_\emptyset(s)$. It suffices to show that $ T \subseteq\bigcup\Falneinc_\FAG(s)$.

Fix $x \in T $. We first show that $ T \subseteq T '$ for every $ T ' \in \Falnei_\emptyset(s)$. Let $ T ' \in \Falnei_\emptyset(s)$. By superadditivity,
\[
 T ' \cap T \in \Falnei_\emptyset(s).
\]
Since \(T' \cap T \in \Falnei_\emptyset(s)\) and \(T' \cap T \subseteq T\), the minimality of \(T\) in \(\Falnei_\emptyset(s)\) implies \(T' \cap T = T\). Hence $ T \subseteq T '$.

Thus, for every $ T ' \in \Falnei_\emptyset(s)$, we have $x \in T '$. Since $x \notin \overline{\{x\}}$, it follows that $\overline{\{x\}}\notin\Falnei_\emptyset(s)$. By $\FAG$-maximality, $\{x\} \in \Falnei_\FAG(s)$. Since $\emptyset \notin \Falnei_\FAG(s)$ (liveness), we obtain $\{x\} \in \Falneinc_\FAG(s)$. Hence $x \in \bigcup \Falneinc_\FAG(s)$.

\end{itemize}

\end{enumerate}

($\Leftarrow$)

Assume that $\FALNF$ is an $\PAL$-representative $\mathtt{SID}$-frame. We show that it is truly playable.

\begin{enumerate}

\item \textbf{Liveness.} This is immediate.

\item \textbf{Safety.} Fix $s \in \FST$ and $\FCC \subseteq \FAG$. Since $\FALNF$ is a $\mathtt{SID}$-frame, it is $\PAL$-serial; hence $\Falnei_\FCC(s) \neq \emptyset$. So there exists $Y \in \Falnei_\FCC(s)$. As $\Falnei_\FCC(s)$ is closed under supersets, it follows that $\FST \in \Falnei_\FCC(s)$.

\item \textbf{Superadditivity.} This is exactly $\PAL$-independence.

\item \textbf{$\FAG$-maximality.} Fix $s\in \FST$ and $X\subseteq \FST$ such that $\overline{X} \notin \Falnei_\emptyset (s)$.

Since $\FALNF$ is a $\mathtt{SID}$-frame, it is $\PAL$-serial; hence $\Falnei_\emptyset (s) \neq \emptyset$. By alpha triviality of the empty coalition, $\Falneinc_\emptyset(s)$ is a singleton. Let $ T \in \Falneinc_\emptyset(s)$.

Because $\Falnei_\emptyset(s)$ is closed under supersets, for every $Y \subseteq \FST$, if $ T \subseteq Y$, then $Y \in \Falnei_\emptyset (s)$. Since $\overline{X} \notin \Falnei_\emptyset (s)$, we must have $ T \not\subseteq \overline{X}$. Hence there exists $x\in T $ such that $x \notin \overline{X}$, i.e., $x\in X$.

By $\PAL$-determinism, $ T \subseteq \bigcup \Falneinc_\FAG(s)$, and every element of $\Falneinc_\FAG(s)$ is a singleton. Therefore $\{x\} \in \Falneinc_\FAG(s)$. Since $\{x\} \subseteq X$ and $\Falnei_\FAG(s)$ is closed under supersets, it follows that $X \in \Falnei_\FAG(s)$.

\item \textbf{Crown.} Let $s \in \FST$ and $X \in \Falnei_\FAG (s)$.

By groundedness of alpha powers, there exists $X' \in\Falnei_\FAG (s)$ such that
\[
X' \subseteq X
\quad\text{and}\quad
X' \subseteq \bigcup \Falneinc_\emptyset(s).
\]
By $\PAL$-determinism,
\[
\bigcup \Falneinc_\emptyset(s) \subseteq \bigcup\Falneinc_\FAG(s).
\]
Hence $X' \subseteq \bigcup\Falneinc_\FAG(s)$. Since $X' \in \Falnei_\FAG(s)$ and liveness holds, $X' \neq \emptyset$. Let $x \in X'$. Then $x \in \bigcup\Falneinc_\FAG(s)$. By $\PAL$-determinism, every element of $\Falneinc_\FAG(s)$ is a singleton, so $\{x\} \in \Falneinc_\FAG(s)$.

Since \(x\in X' \subseteq X\), we have \(x\in X\). Moreover, \(\{x\}\in\Falneinc_\FAG(s)\subseteq\Falnei_\FAG(s)\). This proves the crown condition.

\end{enumerate}

\end{proof}

\end{document}